\theoremstyle{plain}
\newtheorem{theorem}{Theorem}[section]
\newtheorem{lemma}{Lemma}[section]
\newtheorem{corollary}{Corollary}[section]
\numberwithin{equation}{section}
\newcommand{\R}{\ensuremath{\mathbb{R}}}
\newcommand{\J}{\Eu{J}}
\newcommand{\Sh}{\Eu{S}}
\newcommand{\Sen}{\mathfrak{S}}
\newcommand{\SG}{\ensuremath{\mathfrak{S}}}
\newcommand{\T}{\Eu{T}}
\newcommand{\Tt}{\Eu{T}_\tau}
\newcommand{\dir}[1]{\ensuremath{\mathsf{d}#1}}
\newcommand{\dQ}{\ensuremath{\mathtt{d}_{Q}}}
\newcommand{\dQt}{\ensuremath{\mathtt{d}_{\tilde Q}}}
\newcommand{\dgen}{\ensuremath{\mathtt{d}}}
\newcommand{\dK}{\ensuremath{\mathtt{d}_K}}
\newcommand{\dH}{\ensuremath{\mathtt{d}_{\mathsf{H}}}}
\newcommand{\dQWt}{\ensuremath{\mathtt{d}_{\tilde Q,W}}}
\newcommand{\dist}{\ensuremath{\mathsf{dist}}}
\newcommand{\mindist}{\textsf{MinDist}\xspace}
  \newcommand\figcaption{\def\@captype{figure}\caption}
  \newcommand\tabcaption{\def\@captype{table}\caption}
\title{\sffamily Sketched \mindist}
\author{Jeff M. Phillips \\ University of Utah \\ \texttt{jeffp@cs.utah.edu}
 \and
  Pingfan Tang \\ University of Utah \\\texttt{tang1984@cs.utah.edu}}
\begin{document}

\begin{titlepage}

\maketitle

\begin{abstract}
We sketch vectors of geometric objects $J$ through the \mindist function
\[
v_i(J) = \inf_{p \in J} \|p-q_i\|
\]
for $q_i \in Q$ from a point set $Q$.  Collecting the vector of these sketch values induces a simple, effective, and powerful distance: the Euclidean distance between these sketched vectors.
This paper shows how large this set $Q$ needs to be under a variety of shapes and scenarios.
For hyperplanes we provide direct connection to the sensitivity sampling framework, so relative error can be preserved in $d$ dimensions using $Q = O(d/\eps^2)$.  However, for other shapes, we show we need to enforce a minimum distance parameter $\rho$, and a domain size $L$.  For $d=2$ the sample size $Q$ then can be $\tilde O((L/\rho) \cdot 1/\eps^2)$.  For objects (e.g., trajectories) with at most $k$ pieces this can provide stronger \emph{for all} approximations with $\tilde O((L/\rho)\cdot k^3 / \eps^2)$ points.  Moreover, with similar size bounds and restrictions, such trajectories can be reconstructed exactly using only these sketch vectors.
\end{abstract}

\end{titlepage}

\section{Introduction}

In this paper (and a more empirically-focused companion paper~\cite{PT19a}) we introduce a new distance between geometric objects, $\dQ$.  For an object $J \in \J$, where $J \subset \R^d$,  this depends on a set of \emph{landmarks} $Q \subset \R^d$; for now let $n = |Q|$.  These landmarks induce a \emph{sketched representation} $v_Q(J) \in \R^n$ where the $i$th coordinate $v_i(J)$ is defined via a \mindist operation
\[
v_i(J) = \inf_{p \in J} \|p-q_i\|,
\]
using the $i$th landmark $q_i \in Q$.  When the object $J$ is implicit, we simply use $v_i$.
Then our new distance $\dQ$ between two objects $J_1, J_2 \in \J$ is simply the (normalized) Euclidean distance between the sketched representations
\[
\dQ(J_1, J_2) = \big\|\bar v_Q(J_1) - \bar v_Q(J_2)\big\|,
\]
where $\bar v_Q = \frac{1}{\sqrt{|Q|}} v_Q$.

Our companion paper introduces other variants of this distance (using other norms or using the $\arg\min_{p \in J}$ points on each $J \in \J$).  We focus on this version as it is the simplest, cleanest, easiest to use, and was the best or competitive with the best on all empirical tasks.  Indeed, for the pressing case of measuring a distance between trajectories, this new distance measure dominates a dozen other distance measures (including dynamic time warping, discrete Frechet distance, edit distance for real sequences) in terms of classification performance, and is considerably more efficient in clustering and nearest neighbor tasks.

The goal of this paper is to formally understand how many landmarks in $Q$ are needed for various error guarantees, and how to chose the locations of these points $Q$.

Our aims in the choice of $Q$ are two-fold:  first, we would like to approximate $\dQ$ with $\dQt$, and second we would like to recover $J \in \J$ exactly only using $v_Q(J)$.    The specific results vary depending on the initial set $Q$ and the object class $\J$.
More precisely, the approximation goal aims to preserve $d_Q$ for all objects $J$ in some class $\c{J}$ with a subset $\tilde Q \subset Q$ of landmarks.  Or possibly a weighted set of landmarks $W, \tilde Q$ with $|\tilde{Q}|=N$, so each $q_i$ is associated with a weight $w_i$ and the weighted distance is defined
\[
\dQWt(J_1, J_2) = \sqrt{\sum_{i=1}^N w_i \cdot \left(v_i(J_1) - v_i(J_2)\right)^2} = \left\| \tilde{v}_{\tilde Q}(J_1) - \tilde{v}_{\tilde Q}(J_2)\right\|.
\]
where $\tilde{v}_{\tilde{Q}}=(\tilde{v}_1,\cdots,\tilde{v}_N)$  with $\tilde{v}_i=\sqrt{w_i}v_i$.
Specifically, our aim is an \emph{$(\rho,\eps,\delta)$-approximation of $Q$ over $\J$} so when $W, \tilde Q$ is selected by a random process that succeeds with probability at least $1-\delta$, then for a \emph{pair} $J_1, J_2 \in \J$ with $\dQ(J_1, J_2) \geq \rho$
\[
(1-\eps) \dQ(J_1, J_2) \leq \dQWt(J_1, J_2) \leq (1+\eps) \dQ(J_1,J_2).
\]
When this holds for \emph{all} pairs in $\J$, we say it is a \emph{strong $(\rho,\eps,\delta)$-approximation of $Q$ over $\J$}.
In some cases we can set to $0$ either $\delta$ (the process is deterministic) or $\rho$ (this preserves even arbitrarily small distances), and may be able to use uniform weights $w_i = \frac{1}{|\tilde Q|}$ for all selected points.

\subsection{Our Results}
We begin with a special signed variant of the distance associated with the class $\J$ of $(d-1)$-dimensional hyperplanes (which for instance could model linear separators or linear regression models).  The signed variant provides $v_i(J)$ a negative value on one side of the separator.  In this variant, we show that if $Q$ is full rank, then we can recover $J$ from $v_Q(J)$, and a variant of sensitivity sampling can be used to select $O(d/(\delta \eps^2))$ points to provide a $(0,\eps,\delta)$-approximation $W,\tilde Q$.
Or by selecting $O(\frac{d}{\eps^2}(d \log d + \log \frac{1}{\delta}))$ results in a strong $O(0,\eps,\delta)$-approximation (Theorem \ref{strong coreset of Q, hyperplanes}).


Next we consider the more general case where the objects are bounded geometric objects $\Sh$.  For such objects it is useful to consider a bounded domain $\Omega_L = [0,L]^d$ (for $d$ a fixed constant), and consider the case where each $S \in \Sh$ and landmarks satisfy $S,Q \subset \Omega_L$.
In this case, the number of samples required for a $(\rho,\eps,\delta)$-approximation is
$\SG_Q \frac{1}{\eps^2 \delta}$ where
\begin{equation} \label{def of SG_Q}
\SG_Q= O\left(
\left(\frac{L}{\rho}\right)^{\frac{2d}{2+d}} \min \left( \log \frac{L}{\eta}, \log n, \left(\frac{L}{\rho}\right)^2 \right)^{\frac{2}{2+d}}
\right),
\end{equation}
where $\eta = \min_{q,q' \in Q}\|q-q'\|_\infty$.
A few special cases are worth expanding upon.
When $Q$ is continuous and uniform over $\Omega_L$ then $\SG_Q = O((L/\rho)^{\frac{2d}{2+d}})$, and this is tight in $\R^2$ at $\SG_Q = \Theta(L/\rho)$.  That is, we can show that $\SG_Q = \Theta(L/\rho)$ may be needed in general.
When $d=2$ but not necessarily uniform on $\Omega_L$, then $\SG_Q = O(\frac{L}{\rho} \min \{\sqrt{\log n}, L/\rho\})$.
And when $Q$ is on a grid over $\Omega_L$ in $\R^2$ of resolution $\Theta(\rho)$, then $\SG_Q = O(\frac{L}{\rho} \sqrt{\log \frac{L}{\rho}})$, just a $\sqrt{ \log L/\rho}$ more than the lower bound.

%

We conclude with some specific results for trajectories.  When considering the class $\T_k$ with at most $k$ segments, then  $O(\frac{1}{\eps^2} \SG_Q (k^3 \log \SG_Q + \log \frac{1}{\delta}))$ samples is sufficient for a \emph{strong} $(\rho, \eps,\delta)$-approximation.
Then when considering trajectories $\Tt$ where the critical points are at distance at least $\tau$ apart from any non-adjacent part of the curve, we can exactly reconstruct the trajectory from $v_Q$ as long as $Q$ is a grid of side length $\Omega(\tau)$.
It is much cleaner to describe the results for trajectories and $Q$ precisely on a grid, but these results should extend for any object with $k$ piecewise-linear boundaries, and critical points sufficiently separated, or $Q$ as having any point in each sufficiently dense grid cell, as opposed exactly on the grid lattice.

\subsection{Connections to other Domains, and Core Challenges}
\label{sec:connect}
Before deriving these results, it is useful to lay out the connection to related techniques, including ones that our results will build on, and the challenges in applying them.

\paragraph{Sensitivity sampling.}
Sensitivity sampling~\cite{MLLJ2010,FL11,FS12,VX12} is an important technique for our results.  This typically considers a dataset $X$ (a subset of a metric space), endowed with a measure $\mu : X \to \R^+$, and a family of cost functions $F$.  These cost functions are usually related to the fitting of a data model or a shape $S$ to $X$, and for instance on a single point $x \in X$, for $f \in F$, where
\[
f(x) = \inf_{p \in S} \|x-p\|^2
\]
is the squared distance from $x$ to the closest point $p$ on the shape $S$.  And then
$\bar f = \int _X f(x) \dir\mu(x)$.
The \emph{sensitivity}~\cite{MLLJ2010} of $x\in X$ w.r.t. $(F,X,\mu)$ is defined as:
\[
\sigma_{F,X,\mu}(x):=\sup_{f\in F}\frac{f(x)}{\bar{f}},
\]
and the \emph{total sensitivity} of $F$ is defined as:
$\Sen(F)=\int_X \sigma_{F,X,\mu}(x) \dir\mu(x)$.
This concept is quite general, and has been widely used in applications ranging from various forms of clustering~\cite{FL11,FS12} to dimensionality reduction~\cite{FSS13} to shape-fitting~\cite{VX12}.
In particular, this will allow us to draw $N$ samples $\tilde X$ iid from $X$ proportional to $\sigma_{F,X,\mu}(x)$, and weighted $\tilde w(\tilde x) = \frac{\Sen(F)}{N \cdot \sigma_{F,X,\mu}(\tilde x)}$; we call this \emph{$\sigma_{F,X,\mu}$-sensitive sampling}.
Then $\tilde X$ is a $(0,\eps,\delta)$-coreset; that is, with probability $1-\delta$ for \emph{each} $f \in F$
\[
(1-\eps) \bar f \leq \int_{\tilde X} f(\tilde x) \dir \tilde w(\tilde x) \leq (1+\eps) \bar f,
\]
using
$N = O(\frac{\Sen(F)}{\eps^2 \delta})$~\cite{MLLJ2010}.
The same error bound holds for \emph{all} $f \in F$ (then it is called a \emph{$(0,\eps,\delta)$-strong coreset}) with
$N = O(\frac{\Sen(F)}{\eps^2} (\mathfrak{s}_F \log \Sen(F) + \log \frac{1}{\delta}))$ where $\mathfrak{s}_F$ is the shattering dimension of the range space $(X, \mathsf{ranges}(F))$~\cite{VDH2016}.
Specifically, each range $r \in (X, \mathsf{ranges}(F))$ is defined as those points in a sublevel set of a specific cost function
$r = \{x \in X \mid \frac{\mu(x)}{\Sen(F)} \frac{f(x)}{\bar{f}} \leq \xi\}$
for some $f \in F$ and $\xi \in \R$.

It seems natural that a form of our results would follow directly from these approaches.  However, two significant and intertwined challenges remain.
First, our goal is to approximate the distance between a pair of sketches $\|v_Q(J_1) - v_Q(J_2)\|$, where these results effectively only preserve the norm of a single sketch $\|v_Q(J_1)\|$; this prohibits many of the geometric arguments in the prior work on this subject.
Second, the total sensitivity $\Sen(F)$ associated with unrestricted $Q$ and pairs $J_1, J_2 \in \J$ is in general unbounded (as we prove in Lemma \ref{lem:TS-lb}).
Indeed, if the total sensitivity was bounded, it would imply a mapping to bounded vector space~\cite{MLLJ2010}, wherein the subtraction of the two sketches $v_Q(J_1) - v_Q(J_2)$ would still be an element of this space, and the norm bound would be sufficient.

We circumvent these challenges in two ways.
First, we identify a special case in Section \ref{sec:H} (with negative distances, for hyperplanes) under which there is a mapping of the sketch $v_Q(J_1)$ to metric space independent of the size and structure of $Q$.  This induces a bound for total sensitivity related to a single object, and allows the subtraction of two sketches to be handled within the same framework.

Second, we enforce a lower bound on the distance $\dQ(J_1, J_2) > \rho$ and an upper bound on the domain $\Omega_L = [0,L]^d$.  This induces a restricted class of pairs $\J_{L/\rho}$ where $L/\rho$ is a scaleless parameter, and it shows up in bounds we are then able to produce for the total sensitivity with respect to $\J_{L/\rho}$ and $Q \subset \Omega_L$.


\paragraph{Leverage scores, and large scales.}
Let $(\cdot)^+$ denotes the Moore-Penrose pseudoinverse of a matrix, so $(AA^T)^+=(AA^T)^{-1}$ when $AA^T$ is full rank.
The \textit{leverage score}~\cite{DMM08} of the $i$th column $a_i$ of matrix $A$ is defined as:
$\tau_i(A):= a_i^T(AA^T)^+a_i.$
This definition is more specific and linear-algebraic than sensitivity, but has received more attention for scalable algorithm development and approximation~\cite{DMM08,boutsidis2009improved,DMMW12,CMM17,MM17,MCJ2016}.

However, Theorem \ref{theorem sensitivity} (in the Appendix \ref{sec: online row sampling}) shows that if $F$ is the collection of some functions defined on a set $Q$ of $n$ points ($\mu(q_i)=\frac{1}{n}$ for all $q_i\in Q$), where each $f\in F$ is the square of some function $v$ in a finite dimensional space $V$ spanned by a basis $\{v^{(1)},\cdots,v^{(\kappa)}\}$, then
 we can build a $\kappa \times n$ matrix $A$ where the $i$th column is $\frac{1}{\sqrt{n}}\big(v^{(1)}(q_i),\cdots,v^{(\kappa)}(q_i)\big)^T$, and have $\frac{1}{n} \cdot \sigma_{F,Q,\mu}(q_i)$ is precisely the leverage score of the $i$th column of the matrix $A$.
A similar observation has been made by Varadarajan and Xiao~\cite{VX12}.

A concrete implication of this connection is that we can invoke an online row sampling algorithm of Cohen \etal~\cite{MCJ2016}.  In our context, this algorithm would stream over $Q$, maintaining (ridge) estimates of the sensitivity of each $q_i$ from a sample $\tilde Q_{i-1}$, and retaining each $q_i$ in that sample based on this estimate.  Even in this streaming setting, this provides an approximation bound not much weaker than the sampling or gridding bounds we present; see Appendix \ref{sec: online row sampling}.

\paragraph{Connection from \mindist to shape reconstruction.}
The fields of computational topology and surface modeling have extensively explored~\cite{GeomInf,PWZ15,CCM10} the distance function to a compact set $J \subset \R^d$
\[
\dgen_J(x) = \inf_{p \in J} \|x - p\|,
\]
their approximations, and the offsets $J^r = \dgen_J^{-1}([0,r])$.  For instance the Hausdorff distance between two compact sets $J, J'$ is $\dH(J,J') = \|\dgen_J - \dgen_{J'}\|_\infty$.  The gradient of $\dgen_J$ implies stability properties about the medial axis~\cite{CL05}.
And most notably, this stability of $\dgen_J$ with respect to a sample $P \sim J$ or $P \sim \partial J$ is closely tied to the development of shape reconstruction (aka geometric and topological inference) through $\alpha$-shapes~\cite{EM94}, power crust~\cite{ACK01}, and the like.  The intuitive formulation of this problem through $\dgen_J$ (as opposed to Voronoi diagrams of $P$) has led to more statistically robust variants~\cite{CCM10,PWZ15} which also provide guarantees in shape recovery up to small feature size~\cite{CCL06}, essentially depending on the maximum curvature of $\partial J$.

Our formulation flips this around.  Instead of considering samples $P$ from $J$ (or $\partial J$) we consider samples $Q$ from some domain $\Omega \subset \R^d$.
This leads to new but similar sampling theory, still depending on some feature size (represented by various scale parameters $\rho$, $\tau$, and $\eta$), and still allowing recovery properties of the underlying objects.  While the samples $P$ from $J$ can be used to estimate Hausdorff distance via an all-pairs $O(|P|^2)$-time comparison, our formulation requires only a $O(|Q|)$-time comparison to compute $\dQ$.
We leave as open questions the recovering of topological information about an object $J \in \J$ from $v_Q(J)$.

\paragraph{Function space sketching.}
While most geometric inference sampling bounds focus on low-level geometric parameters (e.g., weak local feature size, etc), a variant based on the kernel distance $\dK(P,x)$~\cite{PWZ15} can be approximated (including useful level sets) using a uniform sample $P' \sim P$.
The kernel distance in this setting is defined $\dK(P,x) = \sqrt{1 + \mu_{K}(P) - 2\kde_P(x)}$ where the kernel density estimate is defined $\kde_P(x) = \frac{1}{|P|}\sum_{p \in P} K(p,x)$ with $K(p,x) = \exp(-\|x-p\|^2)$ and $\mu_{K}(P) = \frac{1}{P} \sum_{p \in P} \kde_P(p)$.
This sampling mechanism can be used to analyze $\kde_P$ (and thus also $\dK$) ~\cite{MFSS17} by considering a reproducing kernel Hilbert space (RKHS) $\Eu{H}_K$ associated with $K$; this is a function space so each element $\phi_K(p) = K(p,\cdot) \in \Eu{H}_K$ is a function.  And averages $\Phi_K(P) = \frac{1}{P} \sum_{p \in P} \phi_K(p) = \kde_P$ are kernel density estimates.  Ultimately, $O(\frac{1}{\eps^2} \log \frac{1}{\delta})$ samples $\tilde P$ yields~\cite{LopazPaz15} with probability $1-\delta$ that $\|\Phi_K(P) - \Phi_K(\tilde P)\|_{\Eu{H}_K} \leq \eps$ which implies $\|\kde_P - \kde_{\tilde P}\|_\infty \leq \eps$, and hence also $\|\dK(P,\cdot) - \dK(\tilde P, \cdot)\|_\infty \leq \Theta(\sqrt{\eps})$.
Notably, the natural $\Eu{H}_K$-norm is an $\ell_2$-norm when restricted to any finite dimensional subspace (e.g., the basis defined by $\{\phi_K(p)\}_{p \in P}$).

Similarly, our approximations of $\dQ(\cdot,\cdot)$ using a sample $\tilde Q \sim Q$ result in a similar function space approximation.  Again the main difference is that $\dQ$ is bivariate (so it takes in a pair $J_1, J_2 \in \J$, which is hard to interpret geometrically), and we seek a relative error (not an additive error).
This connection leads us to realize that there are JL-type approximations~\cite{JL84} of this feature space.   That is, given a set of $t$ objects $O = J_1, J_2, \ldots, J_t \subset \J$, and their representations $v_Q(J_1), v_Q(J_2), \ldots, v_Q(J_t) \in \R^n$, there is a mapping $h$ to $\R^N$ with $N = O((1/\eps^2) \log \frac{t}{\delta})$,
so with probability at least $1-\delta$ so for any pair $J,J' \in O$
$(1-\eps) \dQ(J, J') \leq \|h(v_Q(J) - h(v_Q(J')\| \leq (1+\eps) \dQ(J,J')$.
However, for such a result to hold for \emph{all} pairs in $\J$, there likely requires a lower bound on the distance $\rho$ and/or upper bound on the underlying space $L$, as with the kernels~\cite{CP17,PT19}.   Moreover, such an approach would not provide an explicit coreset $\tilde Q$ that is interpretably in the original space $\R^d$.

%

%
%
%
%
%
%
%

\section{The Distance Between Two Hyperplanes}
\label{sec:H}
\label{distance 2D}

In this section, we define a distance $\dQ$ between two hyperplanes.  Let $\c{H}=\{h \mid  h \text{ is a hyperplane  in } \R^d \}$ represent the space of all hyperplanes.

Suppose $Q=\{q_1,q_2,\cdots,q_n\}\subset \R^d$, where $q_i$ has the coordinate $(x_{i,1},x_{i,2}.\cdots,x_{i,d})$.
Without specification, in this paper $Q$ is a multiset, which means two points in $Q$ can be at the same location,
and $\|\cdot\|$ represents $l^2$ norm.

Any hyperplane $h \in \c{H}$ can be uniquely expressed in the form
\[
h=\big\{x=(x_1,\cdots,x_d)\in \R^d \ |\ \sum\nolimits_{j=1}^d u_jx_j+u_{d+1}=0 \big\},
\]
where $(u_1,\cdots,u_{d+1})$ is a vector in
$\mathbb{U}^{d+1} :=\{u=(u_1,\cdots,u_{d+1})\in \R^{d+1}  \mid \sum_{j=1}^d u_j^2=1 \text{ and the first nonzero}$ $ \text{entry of } u \text{ is positive}\}$,
i.e. $(u_1,\cdots,u_d)$ is the unit normal vector of $h$, and $u_{d+1}$ is the offset.
A sketched halfspace $h$ has $n$-dimensional vector $v_Q(h) = (v_1(h), \ldots, v_n(h))$ where each coordinate $v_i$ is defined as the \emph{signed} distance from $q_i$ to the closest points on $h$, which can be calculated
$v_i(h)=\sum_{j=1}^d u_jx_{i,j}+u_{d+1}$;
the dot-product with the unit normal of $h$, plus offset $u_{d+1}$.
As before, the distance is defined as $d_Q(h_1,h_2) = \| \frac{1}{\sqrt{n}} (v_Q(h_1) -v_Q(h_2))\|$.
%
When $Q \subset \R^d$ is \emph{full rank} -- that is, there are $d+1$ points in $Q$ which are not on a common hyperplane -- then our companion paper~\cite{PT19a} shows $d_Q$ is a metric on $\c{H}$.
\subsection{Estimation of $\dQ$ by Sensitivity Sampling on $Q$}

%

We use sensitivity sampling to estimate $\dQ$ with respect to a tuple $(F, X, \mu)$.   First suppose $Q=\{x_1,\cdots,x_n\}\subset \R^d$ is full rank and $n\geq d+1$.
Then we can let $X=Q$ and $\mu=\frac{1}{n}$; what remains is to define the appropriate $F$.
Roughly, $F$ is defined with respect to a $(d+1)$-dimensional vector space $V$, where for each $f \in F$, $f = v^2$ for some $v \in V$; and $V$ is the set of all linear functions on $x \in Q$.

We now define $F$ in more detail.
Recall each $h \in \c{H}$ can be represented as a vector $u \in \b{U}^{d+1}$.  This $u$ defines a function $v_u(q) = \sum_{i=1}^d u_i x_i + u_{d+1}$, and these functions are elements of $V$.  The vector space is however larger and defined
\[
V=\{v_a: Q\mapsto \R \mid v_a(q)=\sum_{i=1}^da_ix_i+a_{d+1} \text{ where } q=(x_1,\cdots,x_d)\in Q , a=(a_1\cdots,a_{d+1})\in \R^{d+1}\},
\]
so that there can be $v_a \in V$ for which $a \notin \b{U}^{d+1}$; rather it can more generally be in $\b{R}^{d+1}$.  Then the desired family of real-valued functions is defined
\[
F=\{f: Q\mapsto [0,\infty) \mid \exists\ v\in V \text{ s.t. } f(q)=v(q)^2,\ \forall q\in Q \}.
\]

To see how this can be applied to estimate $\dQ$, consider two hyperplanes $h_1,h_2$ in $\R^d$ and the two unique vectors $u^{(1)},u^{(2)} \in \mathbb{U}^{d+1}$ which represent them.
Now introduce the vector $u=(u_1,\cdots,u_{d+1})=u^{(1)} -u^{(2)}$; note that $u \in \b{R}^{d+1}$, but not necessarily in $\b{U}^{d+1}$.
Now for $q\in Q$ define a function $f_{h_1,h_2} \in F$ as
\[
f_{h_1,h_2}(q)=f_{h_1,h_2}(x_1,\cdots,x_d)=\big(\sum\nolimits_{i=1}^d u_ix_i+u_{d+1}\big)^2,
\]
so $\dQ(h_1,h_2)=(\frac{1}{n}\sum_{q\in Q}f_{h1,h2}(q))^\frac{1}{2}$.
And thus an estimation of $\frac{1}{n}\sum_{q\in Q}f_{h1,h2}(q)$ provides an estimation of $\dQ(h_1,h_2)$.
From Lemma \ref{lemma sensitivity}, we know the total sensitivity of $F$ is $d+1$.
In particular, given the sensitivities score $\sigma(q)$ for each $q\in Q$, we can invoke \cite{MLLJ2010}[Lemma 2.1]  to reach the following theorem.

\begin{theorem}\label{theorem estimate distance}
Consider full rank $Q \subset \R^d$ and halfspaces $\c{H}$ with $\eps,\delta \in(0,1)$.
A $\sigma$-sensitive sampling $\tilde Q$ of $(Q,F)$ of size $|\tilde Q| = \frac{d+1}{\delta, \eps^2}$ results in a $(0,\eps,\delta)$-coreset.  And thus an $(0,\eps,\delta)$-approximation so
with probability at least $1-\delta$, for each pair $h_1, h_2 \in \c{H}$
\[
	(1-\eps) \dQ(h_1,h_2)\leq  \mathtt{d}_{\tilde{Q},W}(h_1,h_2)
	\leq(1+\eps) \dQ(h_1,h_2).
\]
\end{theorem}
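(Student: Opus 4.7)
The plan is to apply the sensitivity-sampling guarantee of \cite{MLLJ2010}[Lemma 2.1] directly to the tuple $(F,Q,\mu)$ already constructed above, and then convert the resulting coreset guarantee on squared costs into a relative-error guarantee on the distance $\dQ$ itself. The first step is to check that every squared pairwise distance between two sketched hyperplanes is realized by some $f \in F$: for any $h_1, h_2 \in \c{H}$ with unit representatives $u^{(1)}, u^{(2)} \in \mathbb{U}^{d+1}$, the vector $a = u^{(1)} - u^{(2)} \in \R^{d+1}$ defines $v_a \in V$, so $f_{h_1,h_2} = v_a^2 \in F$. The signed-distance formula $v_i(h) = \sum_j u_j x_{i,j} + u_{d+1}$ then gives $(v_i(h_1) - v_i(h_2))^2 = f_{h_1,h_2}(q_i)$, hence
\[
\dQ(h_1,h_2)^2 = \tfrac{1}{n}\sum_{q \in Q} f_{h_1,h_2}(q) = \bar{f}_{h_1,h_2}, \qquad \dQWt(h_1,h_2)^2 = \sum_{\tilde q \in \tilde Q} \tilde w(\tilde q)\, f_{h_1,h_2}(\tilde q).
\]

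Next I would invoke Lemma \ref{lemma sensitivity}, which bounds $\Sen(F) \leq d+1$. Plugging this into \cite{MLLJ2010}[Lemma 2.1] yields: a $\sigma$-sensitive sample $\tilde Q$ of size $N = (d+1)/(\delta \eps^2)$ is a $(0,\eps,\delta)$-coreset for $F$, so with probability at least $1-\delta$, for every $f \in F$,
\[
(1-\eps)\, \bar{f} \;\leq\; \sum_{\tilde q \in \tilde Q} \tilde w(\tilde q)\, f(\tilde q) \;\leq\; (1+\eps)\, \bar{f}.
\]
Specializing this to $f = f_{h_1,h_2}$ for the given pair $h_1, h_2$ immediately gives $(1-\eps)\, \dQ(h_1,h_2)^2 \leq \dQWt(h_1,h_2)^2 \leq (1+\eps)\, \dQ(h_1,h_2)^2$.

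Finally, I would take square roots: since $\sqrt{1-\eps} \geq 1-\eps$ and $\sqrt{1+\eps} \leq 1+\eps$ for $\eps \in (0,1)$ (the former from $x \geq x^2$ on $[0,1]$, the latter from $x \leq x^2$ on $[1,\infty)$), the stated $(1\pm\eps)$ bracketing of $\dQWt(h_1,h_2)$ by $\dQ(h_1,h_2)$ follows. The main substantive ingredient is the total-sensitivity bound $\Sen(F) \leq d+1$ supplied by Lemma \ref{lemma sensitivity}; everything else is bookkeeping around the MLLJ2010 machinery. One subtlety worth flagging is that the coreset guarantee is per-$f$, hence per-pair $(h_1,h_2)$, which matches the weak (non-strong) $(0,\eps,\delta)$-approximation stated in the theorem---a naive union bound over the infinite class $\c{H}$ is hopeless, so a genuine \emph{for all pairs} strengthening requires the shattering-dimension machinery invoked later in Theorem \ref{strong coreset of Q, hyperplanes}.
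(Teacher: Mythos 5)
Your proposal is correct and follows essentially the same route as the paper: represent $f_{h_1,h_2}$ as the square of the linear function $v_{u^{(1)}-u^{(2)}}\in V$, use Lemma \ref{lemma sensitivity} to get $\Sen(F)=d+1$, and invoke \cite{MLLJ2010}[Lemma 2.1] for the $(0,\eps,\delta)$-coreset; the paper leaves the square-root conversion from squared costs to distances implicit, which you correctly spell out via $\sqrt{1-\eps}\geq 1-\eps$ and $\sqrt{1+\eps}\leq 1+\eps$. Your closing remark about the per-pair versus for-all distinction also matches the paper's division of labor between this theorem and Theorem \ref{strong coreset of Q, hyperplanes}.
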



Now, we use the framework in  Braverman \etal~\cite{VDH2016} to construct a strong $O(0,\eps,\delta)$-approximation for $Q$ over $\c{H}$.
In the remaining part of this subsection, we assume $Q$ is a set (not a multiset), each $q\in Q$ has a weight $w(q)\in (0,1]$, and $\sum_{q\in Q}w(q)=1$.
Recall that for a range space $(Q,\c{R})$ the shattering dimension $\mathfrak{s}=\dim(Q,\c{R})$ is the smallest integer $\mathfrak{s}$ so that $|\{S \cap R \mid R \in \c{R} \}| \leq |S|^\mathfrak{s}$ for all $S \subset Q$.
We introduce ranges $\c{X}$ where each range $X_{h_1,h_2,\eta} \in \c{X}$ is defined by two halfspaces $h_1,h_2 \in \c{H}$ and a threshold $\eta > 0$.  This is defined with respect to $Q$ and a weighting $w : Q \to \b{R}_+$, specifically
\[
X_{h_1, h_2, \eta} = \{q \in Q \mid w(q) f_{h_1,h_2}(q) \leq \eta\}.
\]
Next we use the sensitivity $\sigma : Q \to \b{R}_+$ to define an adjusted range space $(Q,\c{X}')$ with adjusted weights $w'(q) = \frac{\sigma(q)}{d+1} w(q)$ and adjusted ranges $X'_{h_1,h_2,\eta} \in \c{X}'$ defined using $g_{h_1,h_2}(q) = \frac{1}{\sigma(q)} \frac{f_{h_1,h_2}(q)}{\bar f_{h_1,h_2}}$ as
\[
X'_{h_1, h_2, \eta} = \{q \in Q \mid w'(q) g_{h_1,h_2}(q) \leq \eta\}.
\]
Recall that $\bar f_{h_1,h_2} = \sum_{q \in Q} w(q) f_{h_1,h_2}(q)$.
To apply \cite{VDH2016}[Theorem 5.5] we only need to bound the shattering dimension of the adjusted range space $(Q,\c{X}')$.



\begin{lemma}\label{bound the demension for hyperplane}
	The shattering dimension of adjusted range space $(Q,\c{X}')$ is bounded by $O(d)$.
\end{lemma}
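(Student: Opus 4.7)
The plan is to rewrite the condition defining each range $X'_{h_1,h_2,\eta}$ as a low-degree polynomial inequality in a $(d+2)$-dimensional parameter space, and then apply a standard sign-pattern bound from real algebraic geometry.

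First I would simplify the range description. Substituting $w'(q) = \frac{\sigma(q)}{d+1} w(q)$ and $g_{h_1,h_2}(q) = \frac{1}{\sigma(q)} \frac{f_{h_1,h_2}(q)}{\bar f_{h_1,h_2}}$, the $\sigma(q)$ factors cancel, so the membership condition $w'(q)\, g_{h_1,h_2}(q) \le \eta$ is equivalent to
\[
w(q)\, f_{h_1,h_2}(q) \;\le\; (d+1)\eta\, \bar f_{h_1,h_2}.
\]
From the derivations above, $f_{h_1,h_2}(q) = (\tilde q^T u)^2$ with $\tilde q = (x_1,\ldots,x_d,1)^T$ and $u = u^{(1)} - u^{(2)} \in \R^{d+1}$, while $\bar f_{h_1,h_2} = u^T M u$ with $M = \sum_{q' \in Q} w(q')\, \tilde q' \tilde q'^{\,T}$ a fixed positive-definite matrix (by the full-rank assumption on $Q$). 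For $u \ne 0$, introducing the reparametrized threshold $\eta' := (d+1)\eta \cdot u^T M u > 0$ lets me rewrite every range in $\c{X}'$ as
\[
X'_{u,\eta'} \;=\; \{q \in Q : w(q)\,(\tilde q^T u)^2 \le \eta'\},
\]
so $\c{X}'$ is contained in the class indexed by $(u, \eta') \in \R^{d+1} \times [0,\infty)$, a parameter space of dimension $d+2$.

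Next, for each $q \in Q$, define
\[
P_q(u, \eta') \;:=\; w(q)\,(\tilde q^T u)^2 - \eta',
\]
a polynomial of total degree at most $2$ in the $d+2$ variables $(u,\eta')$. For any $S \subset Q$ with $|S|=m$, the set of intersections $\{S \cap X'_{u,\eta'}\}$ is precisely the set of distinct non-positivity patterns of $(P_{q_1},\ldots,P_{q_m})$ as $(u,\eta')$ ranges over $\R^{d+2}$. Warren's theorem bounds the number of sign patterns of $m$ real polynomials of degree at most $D$ in $k$ variables by $(O(Dm/k))^{k}$; specializing $D=2$ and $k=d+2$ yields $O(m^{d+2})$ such patterns, which gives shattering dimension $O(d)$.

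The main obstacle is the algebraic simplification that exposes the polynomial structure, namely cancelling the $\sigma(q)$ factors and reparametrizing the threshold to remove the $\bar f_{h_1,h_2}$ denominator; without it, the raw condition is a ratio of quadratic forms in $u$ rather than a polynomial. A more geometric alternative is to apply the lift $\phi(q) = \sqrt{w(q)}\,\tilde q \in \R^{d+1}$, under which the condition becomes the symmetric slab $|\phi(q)^T u| \le \sqrt{\eta'}$ in $\R^{d+1}$; the class of symmetric slabs in $\R^{d+1}$ has shattering dimension $O(d)$ by a direct VC argument, recovering the same bound.
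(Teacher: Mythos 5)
Your proof is correct, but your primary route differs from the paper's. The paper performs the same initial simplification (cancelling $\sigma(q)$ and clearing the denominator $\bar f_{h_1,h_2}$), then writes each adjusted range as the intersection of two one-sided linear inequalities in the lifted points $\big(\sqrt{w(q)}x_1,\ldots,\sqrt{w(q)}x_d,\sqrt{w(q)}\big)$, and finishes with the $O(d)$ shattering dimension of halfspaces in $\R^{d+1}$ together with a composition lemma for intersections of range spaces (Lemma \ref{bound the dimension of range space intersection and union}). Your main argument instead works entirely in the $(d+2)$-dimensional parameter space $(u,\eta')$ and invokes Warren's sign-pattern bound for the degree-$2$ polynomials $P_q(u,\eta')=w(q)(\tilde q^Tu)^2-\eta'$. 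Both are valid: your reparametrization $\eta'=(d+1)\eta\, u^TMu$ correctly reduces $\c{X}'$ to a subclass of a $(d+2)$-parameter semialgebraic family (positive definiteness of $M$ from the full-rank assumption handles the nondegenerate case, and containment in a larger class only helps for an upper bound), and the standard perturbation step converts Warren's component count into a bound on realized non-positivity patterns. What each buys: the paper's decomposition is more elementary and is exactly the template reused in Lemma \ref{bound the demension for trajectory} for trajectories, whereas your Warren-based argument is heavier machinery but more robust --- it would absorb higher-degree or non-decomposable semialgebraic conditions without needing an explicit halfspace decomposition or the intersection lemma. Your closing ``symmetric slab'' alternative is essentially the paper's proof verbatim, since a slab is the intersection of two halfspaces under the same $\sqrt{w(q)}$ lift.
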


\begin{proof}
We start by rewriting any element $X'_{h_1, h_2,\eta}$ of the adjusted range space as
%
\begin{align*}
X'_{h_1, h_2, \eta}
&=
\{q\in Q \mid w'(q)g_{h_1,h_2}(x)\leq \eta \}
\\ &=
\{q\in Q \mid w(q)f_{h_1,h_2}(q)\leq \eta (d+1)\bar{f}_{h_1,h_2}\}
\\ & =
\big\{q\in Q \mid \sqrt{w(q)}\big(\sum\nolimits_{i=1}^d u_ix_i+u_{d+1})\big)\leq \big(\eta (d+1)\bar{f}_{h_1,h_2}\big)^{\frac{1}{2}}\big\}
\\ & \phantom{ == }\cap
\big\{q\in Q \mid -\sqrt{w(q)}\big(\sum\nolimits_{i=1}^d u_ix_i+u_{d+1})\big)\leq \big(\eta (d+1)\bar{f}_{h_1,h_2}\big)^{\frac{1}{2}}\big\},
\end{align*}
where $(x_1,\cdots,x_d)$ is the coordinates of $q\in Q$.
This means each set $X'_{h_1, h_2, \eta}  \in \c{X}'$ can be decomposed as the intersection of sets in two ranges over $Q$ from:
    \begin{align*}
    \begin{split}
	\c{R}_1 &=\Big\{\big\{q\in Q \mid \sqrt{w(q)}\big(\sum\nolimits_{i=1}^d u_ix_i+u_{d+1})\big)
     \leq \big(\eta (d+1)\bar{f}_{h_1,h_2}\big)^{\frac{1}{2}}\big\}|\ h_1,h_2\in \c{H},\eta \geq 0\Big\},
     \\
    \c{R}_2& =\Big\{\big\{q\in Q \mid -\sqrt{w(q)}\big(\sum\nolimits_{i=1}^d u_ix_i+u_{d+1})\big)
    \leq \big(\eta (d+1)\bar{f}_{h_1,h_2}\big)^{\frac{1}{2}}\big\}|\ h_1,h_2\in \c{H},\eta \geq 0\Big\}.
    \end{split}
    \end{align*}
    By Lemma \ref{bound the dimension of range space intersection and union},
	we only need to bound the dimension of each associated range space $(Q,\c{R}_1)$ and $(Q,\c{R}_2)$.
 	We introduce new variables $c_0\in \R, z=(z_1,\cdots,z_{d+1}),c=(c_1,\cdots,c_{d+1})\in \R^{d+1}$:
    \begin{equation*}
    \begin{split}
    z_i=&\sqrt{w(q)}x_i \ \text{ for } i\in[d], \ \ z_{d+1}=\sqrt{w(q)},\\
    c_i=&u_i \ \text{ for } i\in [d+1], \ \ c_0=-\big(r(d+1)\bar{f}_{h_1,h_2}\big)^{\frac{1}{2}}.
    \end{split}
    \end{equation*}
    Since $Q$ is a fixed set, we know $z$ only depends  on $q$, and $c_0$, $c$ only depend  on $h_1,h_2$ and $\eta$.
	By introducing new variables we construct an injective map $\varphi: Q\mapsto \R^{d+1}$, s.t. $\varphi(q)=z$. So, there is also an injective
    map	from $\c{R}_1$ to $\big\{\{z\in \varphi(Q)|\ c_0+ \langle z, c \rangle \leq 0\}|\ c_0\in \R, c\in \R^{d+1}\big\}$. Since the shattering dimension of the range space
    $(\R^{d+1},\c{H}^{d+1})$, where	$\c{H}^{d+1}=\{h \text{ is a halfspace in } \R^{d+1}\}$, is $O(d)$,
    we have $\dim(Q,\c{R}_1)=O(d)$, and similarly $\dim(Q,\c{R}_2)=O(d)$. Thus, we obtain an $O(d)$ bound for the shattering dimension of  $(Q,\c{X})$.
\end{proof}

From Lemma \ref{bound the demension for hyperplane} and \cite{VDH2016}[Theorem  5.5]  we can directly obtain a strong  $O(0,\eps,\delta)$-approximation for $Q$ over $\c{H}$.

\begin{theorem}\label{strong coreset of Q, hyperplanes}
Consider full rank $Q \subset \R^d$ and halfspaces $\c{H}$ with $\eps,\delta \in(0,1)$.
A $\sigma$-sensitive sampling $\tilde Q$ of $(Q,F)$ of size $|\tilde Q| = O(\frac{d}{\eps^2} (d \log d + \log \frac{1}{\delta}))$ results in a strong $(0,\eps,\delta)$-coreset.  And thus a strong $(0,\eps,\delta)$-approximation so
with probability at least $1-\delta$, for all $h_1, h_2 \in \c{H}$
\[
	(1-\eps) \dQ(h_1,h_2)\leq  \mathtt{d}_{\tilde{Q},W}(h_1,h_2)
	\leq(1+\eps) \dQ(h_1,h_2).
\]
\end{theorem}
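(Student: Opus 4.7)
\begin{proofsketch}
The plan is to invoke the strong coreset framework of Braverman \etal~\cite{VDH2016}[Theorem 5.5] directly with the two ingredients that have already been established in this subsection: the total sensitivity bound $\Sen(F) = d+1$ (from the earlier sensitivity lemma) and the $O(d)$ shattering dimension bound on the adjusted range space $(Q,\c{X}')$ from Lemma \ref{bound the demension for hyperplane}. The general theorem says that if one samples each $q$ proportional to $\sigma(q)$ with reweighting $w'(\tilde q) = \Sen(F)/(N\cdot \sigma(\tilde q))$, then
\[
N \;=\; O\!\left(\frac{\Sen(F)}{\eps^2}\big(\mathfrak{s}_{F} \log \Sen(F) + \log \tfrac{1}{\delta}\big)\right)
\]
samples suffice to preserve $\bar f$ to relative error $\eps$ for \emph{all} $f\in F$ simultaneously with probability at least $1-\delta$. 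Plugging in $\Sen(F)=d+1$ and $\mathfrak{s}_F=O(d)$ collapses this to $O(\tfrac{d}{\eps^2}(d\log d + \log \tfrac{1}{\delta}))$, matching the claimed sample size.

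Next I would translate the coreset guarantee, which is stated in terms of $\bar f_{h_1,h_2}$, into the distance guarantee for $\dQ$. The identity
\[
\dQ(h_1,h_2)^2 \;=\; \frac{1}{n}\sum_{q\in Q} f_{h_1,h_2}(q) \;=\; \bar f_{h_1,h_2}
\]
was established just before the theorem statement, and the weighted sketch $\mathtt{d}_{\tilde{Q},W}(h_1,h_2)^2$ is exactly the $\sigma$-sensitive weighted sum $\sum_{\tilde q \in \tilde Q} w'(\tilde q) f_{h_1,h_2}(\tilde q)$. Hence the multiplicative $(1\pm\eps)$ guarantee on $\bar f_{h_1,h_2}$ from the coreset yields a $(1\pm \sqrt{\eps})$ guarantee on $\dQ$, and rescaling $\eps$ by a constant absorbs this into the stated bound. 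A union bound is not needed here: the strong coreset statement is already uniform over all $f\in F$, and the family of pair functions $\{f_{h_1,h_2} : h_1,h_2\in \c{H}\}$ sits inside $F$ by construction (since $u^{(1)}-u^{(2)} \in \R^{d+1}$ and $V$ is parametrized by all of $\R^{d+1}$).

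The one subtlety worth highlighting is that the pairwise functions $f_{h_1,h_2}$ require $F$ to be defined in terms of the full $(d{+}1)$-dimensional linear space $V$ rather than only those $v_u$ with $u\in \b{U}^{d+1}$; the discussion just before Theorem \ref{theorem estimate distance} already does this enlargement, so no new work is needed at this step. The main obstacle in the proof as a whole is therefore really Lemma \ref{bound the demension for hyperplane} (which has already been handled by decomposing $X'_{h_1,h_2,\eta}$ into two halfspace conditions in a lifted $(d{+}1)$-dimensional space and applying the standard VC bound for halfspaces together with the intersection lemma). Once the lemma is in hand, the present theorem is an essentially mechanical application of the [VDH2016] framework combined with the algebraic identity relating $\dQ$ to $\bar f$.
\end{proofsketch}
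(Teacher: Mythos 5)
Your proposal matches the paper's proof, which is exactly the one-line combination of Lemma \ref{bound the demension for hyperplane} (shattering dimension $O(d)$ of the adjusted range space) with \cite{VDH2016}[Theorem 5.5], using $\Sen(F)=d+1$ from Lemma \ref{lemma sensitivity}; the paper does not even spell out the translation from the coreset guarantee on $\bar f_{h_1,h_2}$ to the guarantee on $\dQ$, so your extra care there is welcome. One small correction to that translation: a $(1\pm\eps)$ multiplicative bound on $\bar f_{h_1,h_2}$ gives, after taking square roots, a $(\sqrt{1-\eps},\sqrt{1+\eps})$ bound on the distance, and since $\sqrt{1+\eps}\leq 1+\eps$ and $\sqrt{1-\eps}\geq 1-\eps$ this is already a $(1\pm\eps)$ bound on $\dQ$ with no rescaling at all. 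Your stated intermediate bound of $(1\pm\sqrt{\eps})$ is not what the square root produces, and if it were, it could not be absorbed by a constant rescaling of $\eps$ (it would force $\eps\to\eps^2$ and inflate the sample size to $1/\eps^4$); fortunately the true conversion is strictly better, so the argument and the claimed sample complexity stand.
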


\section{Sketched MinDist for Two Geometric Objects}
\label{Distance Between Two Geometric Objects}

In this section, we mildly restrict $\dQ$ to the distance between any two geometric objects, in particularly bounded closed sets.
Let $\Sh=\{S\subset \R^d \mid S \text{ is a bounded closed set}\}$ be the space of objects $\J$ we consider.

As before define $v_i(S) = \inf_{p \in S} \|p - q_i\|$, and then for $S_1,S_2\in \Sh$ define $f_{S_1, S_2}(q_i) = (v_i(S_1) - v_i(S_2))^2$.  The associated function space is $F(\Sh)=\{f_{S_1,S_2} \mid  S_1,S_2\in \Sh\}$.
Setting $\mu(q)=\frac{1}{n}$ for all $q\in Q$, then $(\dQ(S_1,S_2))^2=\bar{f}_{S_1,S_2}:=\sum_{i=1}^n \mu(q_i)f_{S_1,S_2}(q_i)$.
Using sensitivity sampling to estimate $\dQ(S_1,S_2)$ requires a bound on the total sensitivity of $F(\Sh)$.

In this section we show that while unfortunately the total sensitivity $\mathfrak{S}(F(\Sh))$ is unbounded in general, it can be tied closely to the ratio $L/\rho$ between the diameter of the domain $L$, and the minimum allowed $\dQ$ distance between objects $\rho$.  In particular, it can be at least proportional to this, and in $\R^2$ in most cases (e.g., for near-uniform $Q$) is at most proportional to $L/\rho$ or not much larger for any $Q$.

\subsection{Lower Bound on Total Sensitivity}

\begin{figure}[h]
	\centering
	\includegraphics[width=0.3\linewidth]{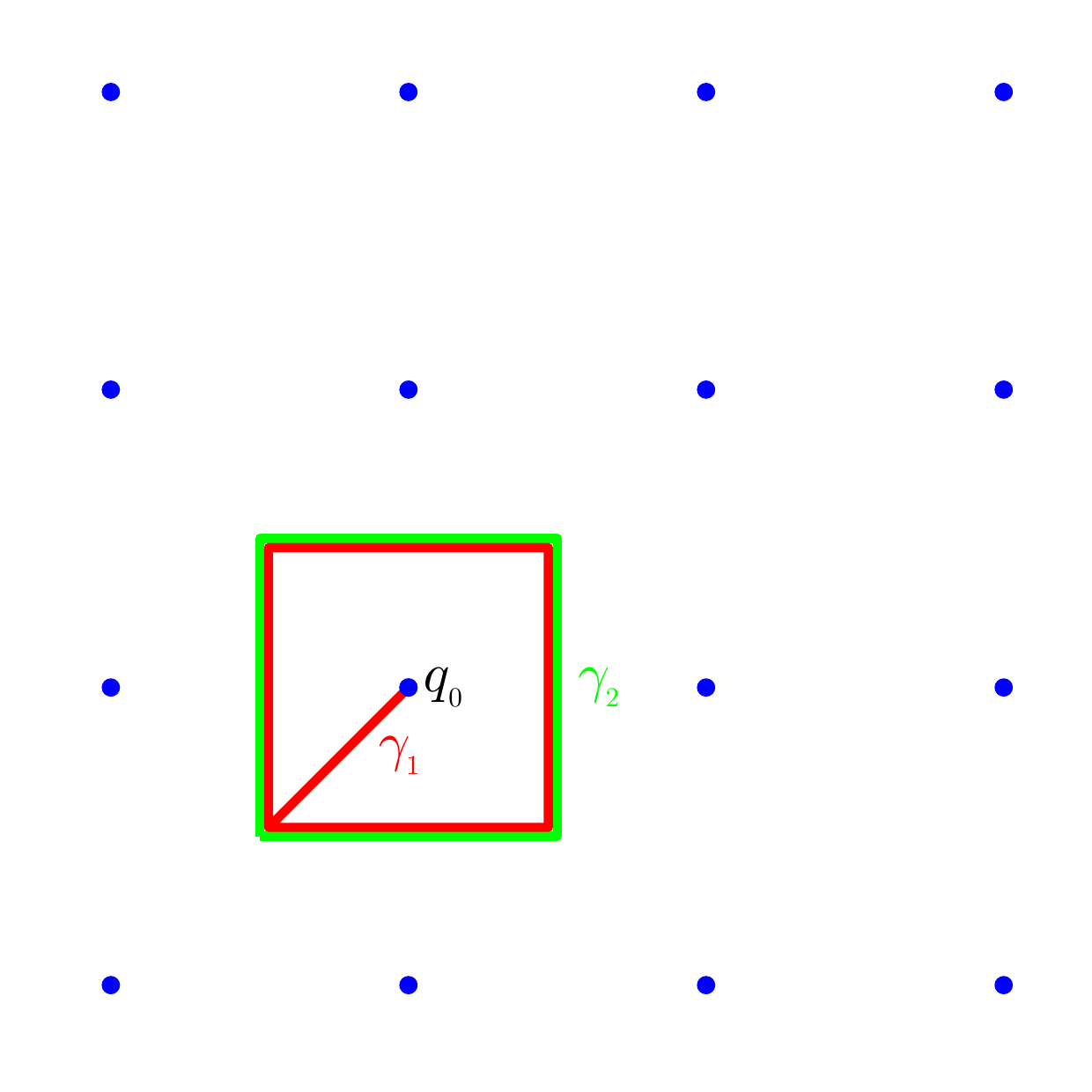}
	\vspace{-0.1in}
	\caption{\small $Q$ is the set of blue points, $\gamma_1$ is the red curve, $\gamma_2$ is the green curve, and they coincide with each other on the boundary of the square.}
    \label{fig:sens-LB}
\end{figure}

Suppose $Q$ is a set of $n$ points in $\R^2$ and no two points are at the same location, then for any $q_0\in Q$ we can draw two curves $\gamma_1,\gamma_2$ as shown in Figure \ref{fig:sens-LB}, where $\gamma_1$ is composed by five line segments and $\gamma_2$ is composed by four line segments. The four line segments of the $\gamma_2$ forms a square, on its boundary $\gamma_1$ and $\gamma_2$ coincide with each other, and inside this square, $q_0$ is the endpoint of $\gamma_1$.  We can make this square small enough, such that all points $q\neq q_0$ are outside this square. So, we have $\dist(q_0,\gamma_1)=0$ and $\dist(q_0,\gamma_2)\neq 0$, and $\dist(q,\gamma_1)=\dist(q,\gamma_2)=0$ for all
$q\neq q_0$. Thus, we have $f_{\gamma_1,\gamma_2}(q_0)>0$ and $f_{\gamma_1,\gamma_2}(q)=0$ for all $q \neq q_0$, which implies
\[
\sigma_{F(\Sh),Q,\mu}(q_0)\geq \frac{ f_{\gamma_1,\gamma_2}(q_0)}{ \bar{f}_{\gamma_1,\gamma_2}}
=\frac{f_{\gamma_1,\gamma_2}(q_0)}{\frac{1}{n}\sum_{q\in Q}f_{\gamma_1,\gamma_2}(q)}
=\frac{n f_{\gamma_1,\gamma_2}(q_0)}{f_{\gamma_1,\gamma_2}(q_0)}=n.
\]

Since this construction of two curves $\gamma_1, \gamma_2$ can be repeated around any point $q \in Q$,
\[
\mathfrak{S}(F(\Sh))=\sum_{q\in Q} \mu(q)\sigma_{F(\Sh),Q,\mu}(q)
\geq \sum_{q\in Q} \frac{1}{n}n=n.
\]


We can refine this bound by introducing two parameters $L,\rho$ for $\Sh$.
Given  $L>\rho>0$ and a set $Q\subset \R^d$ of $n$ points, we define $\Sh(L)=\{S\in \Sh \mid S\subset [0,L]^d\}$ and $F(\Sh(L),\rho)=\{f_{S_1,S_2}\in F(\Sh) \mid S_1,S_2\in \Sh(L),\ \dQ(S_1,S_2)\geq \rho\}$. The following lemma gives a lower bound for the total sensitivity of $F(\Sh(L),\rho)$ in the case $d=2$, which directly holds for larger $d$.

\begin{lemma} \label{lem:TS-lb}
	Suppose $d=2$, then can construct a set $Q\subset [0,L]^d$ such that $\mathfrak{S}(F(\Sh (L),\rho))=\Omega(\frac{L}{\rho})$.
\end{lemma}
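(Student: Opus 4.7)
The plan is to place $Q$ on a coarse grid in $[0,L]^2$ and exhibit, for each $q_0\in Q$, one explicit pair $(S_1,S_2^{(q_0)})$ of bounded closed sets for which $f_{S_1,S_2^{(q_0)}}$ is supported only at $q_0$. Summing these pointwise lower bounds on $\sigma$ will then give $\mathfrak{S}\geq n$ for a $Q$ with $n=\Theta(L/\rho)$ points.

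Concretely, I would take a grid spacing $s = 2\sqrt{L\rho}$ and a hole radius $r = \sqrt{L\rho}$ (so $r<s$), place $Q$ on a grid of side $s$ inside $[0,L]^2$ (with a small offset from the boundary), giving $n = |Q| = \Theta((L/s)^2) = \Theta(L/\rho)$. For each $q_0 \in Q$, define
\[
S_1 = [0,L]^2, \qquad S_2^{(q_0)} = [0,L]^2 \setminus B_r^\circ(q_0),
\]
where $B_r^\circ(q_0)$ is the open Euclidean ball of radius $r$ around $q_0$. Both are bounded closed subsets of $[0,L]^2$, so both lie in $\Sh(L)$.

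Computing the \mindist values is immediate. Every point of $Q$ lies in $S_1$, so $v_1(q)=0$ for all $q \in Q$. The closest point of $S_2^{(q_0)}$ to $q_0$ is on the boundary arc of the removed ball, hence $v_2^{(q_0)}(q_0) = r$. For every other grid point $q \neq q_0$, the grid spacing ensures $\|q-q_0\|\geq s > r$, so $q \in S_2^{(q_0)}$ and $v_2^{(q_0)}(q) = 0$. Therefore $f_{S_1,S_2^{(q_0)}}(q_0) = r^2 = L\rho$ and $f_{S_1,S_2^{(q_0)}}(q) = 0$ for $q \neq q_0$, which yields $\bar{f} = r^2/n = 4\rho^2$ and $\dQ(S_1,S_2^{(q_0)}) = r/\sqrt{n} = 2\rho \geq \rho$. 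Thus each such pair is a legal element of $F(\Sh(L),\rho)$ and
\[
\sigma_{F(\Sh(L),\rho),Q,\mu}(q_0) \;\geq\; \frac{f_{S_1,S_2^{(q_0)}}(q_0)}{\bar{f}} \;=\; n.
\]
Averaging over $q_0 \in Q$ gives $\mathfrak{S}(F(\Sh(L),\rho)) \geq \frac{1}{n}\sum_{q_0\in Q} n = n = \Omega(L/\rho)$.

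The only real design tension, and the sole place where care is needed, is balancing two opposing constraints on $r$: it must be large enough that $\dQ \geq \rho$, which forces $r \geq \rho\sqrt{n}$, but small enough that no other $q_i$ lands inside $B_r(q_0)$, which forces $r < s$. Since $n \asymp (L/s)^2$, the two constraints are simultaneously satisfiable precisely when $s \gtrsim \sqrt{L\rho}$, which caps $n$ at $\Theta(L/\rho)$. This scale-matching exercise \emph{is} the content of the lemma: the $\Omega(L/\rho)$ lower bound on total sensitivity is exactly the largest grid size for which the minimum-distance condition still leaves enough room for each hole to isolate its own grid point.
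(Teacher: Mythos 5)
Your proof is correct and follows essentially the same strategy as the paper: a grid of $n=\Theta(L/\rho)$ points at spacing $\Theta(\sqrt{L\rho})$, and for each grid point a pair of sets in $\Sh(L)$ whose \mindist vectors differ only at that point by $\Theta(\sqrt{L\rho})$, which makes $\dQ\geq\rho$ while forcing $\sigma(q)\geq n$. The only difference is cosmetic — the paper uses a pair of piecewise-linear curves (a small square with and without a segment reaching $q_0$) where you use $[0,L]^2$ versus $[0,L]^2$ minus an open ball — and your parameter balancing is identical to theirs.
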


\begin{proof}
We uniformly partition $[0,L]^2$ into $n$ grid cells, such that $C_1 \frac{L}{\rho}\leq n \leq C_2\frac{L}{\rho}$ for constants $C_1,C_2\in(0,1)$.  The side length of each grid is $\eta=\frac{L}{\sqrt{n}}$. We take $Q$ as the $n$ grid points, and for each
point $q\in Q$ we can choose two curves $\gamma_1$ and $\gamma_2$ (similar to curves in Figure \ref{fig:sens-LB}) such that $\dist(q,\gamma_1)=0$, $\dist(q,\gamma_2)\geq C_2 \eta$, and $\dist(q',\gamma_1)=\dist(q',\gamma_2)=0$ for all $q'\in Q\setminus\{q\}$ . Thus, we have $\dQ(\gamma_1,\gamma_2)\geq C_2\frac{\eta}{\sqrt{n}}=C_2\frac{L}{n}\geq \rho$. So, $f_{\gamma_1,\gamma_2}\in F(\Sh (L),\rho))$ and we have $\sigma(q)\geq n$ for all $q\in Q$ and $\mathfrak{S}(F(\Sh (L),\rho)) \geq n\geq C_1 \frac{L}{\rho}$, which implies $\mathfrak{S}(F(\Sh (L),\rho))=\Omega(\frac{L}{\rho})$.
\end{proof}

\subsection{Upper Bound on the Total Sensitivity}

A simple upper bound of $\mathfrak{S}(F(\Sh (L),\rho)$ is $O\big(\frac{L^2}{\rho^2}\big)$ follows from the $L/\rho$ constraint.
The sensitivity of each point $q\in Q$ is defined as $\sup_{f_{S_1,S_2}\in F(\Sh(L),\rho)} \frac{f_{S_1,S_2}(q)}{ \bar{f}_{S_1,S_2}}$, where $f_{S_1,S_2}(q)=O(L^2)$ for all $S_1,S_2 \in \Sh(L)$ and $q\in Q \subset [0,L]^d$, and the denominator $\bar{f}_{S_1,S_2}\geq \rho^2$
by assumption for all $f_{S_1,S_2} \in F(\Sh (L),\rho)$.  Hence, the sensitivity of each point in $Q$ is $O\big(\frac{L^2}{\rho^2}\big)$, and thus their average, the total sensitivity is $O\big(\frac{L^2}{\rho^2}\big)$ .
In this section we will improve and refine this bound.

We introduce two variables only depends on  $Q=\{q_1,\cdots,q_n\}\subset [0,L]^d$:
\begin{equation} \label{def of C_1 and C_Q}
C_q:=\max_{0<r\leq L}\frac{r^d}{L^d}\frac{n}{|Q\cap B_\infty(q,r)|} \ \ \text{ for } q\in Q,
\text{ and } C_Q:=\frac{1}{n}\sum_{q\in Q} C_q^{\frac{2}{2+d}}.
\end{equation}
where $B_\infty(q,r):=\{x\in \R^d \mid \|x-q\|_\infty\leq r\}$. Intuitively, $\frac{|Q\cap B_\infty(q,r)|}{r^d}$ is proportional to the point density
in region $B_\infty(q,r)$, and the value of $\frac{r^d}{L^d}\frac{n}{|Q\cap B_\infty(q,r)|}$ can be maximized, when the region $B_\infty(q,r)$ has smallest point density, which means
$r$ should be as large as possible but the number of points contained in  $B_\infty(q,r)$ should be as small as possible. A trivial bound of $C_q$ is $n$, but if we make $C_{q_0}=n$ for one point $q_0$, then it implies the value of $C_q$ for other points will be small, so for $C_Q$ it is possible to obtain a bound better than $n^{\frac{2}{d+2}}$.

Importantly, these quantities $C_q$ and $C_Q$ will be directly related to the sensitivity of a single point $\sigma(q)$ and the total sensitivity of the point set $\SG_Q$, respectively.  We formalize this connection in the next lemma, which for instance implies that for $d$ constant then $\SG_Q = O(C_Q \cdot (L/\rho)^{\frac{2d}{2+d}})$.

\begin{lemma} \label{sensitivity of one point}
For function family $F(\Sh(L),\rho)$
the sensitivity for any $q \in Q \in [0,L]^d$ is bounded
\[
\sigma(q)
 \leq
C_d C_q^{\frac{2}{2+d}}\Big(\frac{L}{\rho}\Big)^{\frac{2d}{2+d}},
\]
where $C_d= 4^\frac{2}{2+d}(8\sqrt{d})^{\frac{2d}{2+d}}$ and $C_q$ given by \eqref{def of C_1 and C_Q}.
\end{lemma}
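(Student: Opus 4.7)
The plan is to parameterize the supremum in the definition of $\sigma(q)$ by a single scalar and then apply two complementary lower bounds on $\bar{f}_{S_1,S_2}$: a trivial one coming from the $\rho$-separation hypothesis, and a Lipschitz/packing bound coming from $C_q$.  These two bounds trade off as the scalar grows, so the supremum is attained at their crossover and yields the stated exponents $2/(2+d)$ and $2d/(2+d)$.

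First, fix $S_1,S_2 \in \Sh(L)$ with $\dQ(S_1,S_2) \geq \rho$, and set $t := |v_q(S_1) - v_q(S_2)| = \sqrt{f_{S_1,S_2}(q)}$.  Note $t \leq \sqrt{d}\,L$ since both $v_q(S_i) \in [0,\sqrt{d}\,L]$. The key geometric observation is that the single-object map $q \mapsto \inf_{p\in S}\|p-q\|$ is $1$-Lipschitz in the Euclidean norm, so by the reverse triangle inequality, for every $q' \in Q$,
\[
|v_{q'}(S_1) - v_{q'}(S_2)| \geq t - 2\|q-q'\|_2 \geq t - 2\sqrt{d}\,\|q-q'\|_\infty .
\]
Plugging in $r := t/(8\sqrt{d})$, which is at most $L/8$ and hence inside the admissible range for $C_q$, every $q' \in B_\infty(q,r)$ satisfies $f_{S_1,S_2}(q') \geq (3t/4)^2 = 9t^2/16$.

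Next, invoke the definition of $C_q$, which gives $|Q \cap B_\infty(q,r)| \geq n r^d/(C_q L^d)$.  Averaging only over this subset yields
\[
\bar{f}_{S_1,S_2} \;\geq\; \frac{1}{n}\cdot \frac{n r^d}{C_q L^d}\cdot \frac{9t^2}{16} \;=\; \frac{9}{16\,(8\sqrt{d})^{d}} \cdot \frac{t^{d+2}}{C_q L^{d}} .
\]
Combined with the hypothesis $\bar{f}_{S_1,S_2} = \dQ(S_1,S_2)^2 \geq \rho^2$, I obtain two upper bounds
\[
\frac{f_{S_1,S_2}(q)}{\bar{f}_{S_1,S_2}} \;\leq\; \min\!\left(\frac{t^2}{\rho^2},\; \frac{16\,(8\sqrt{d})^{d}\,C_q L^{d}}{9\,t^{d}}\right).
\]
The first bound is increasing in $t$, the second decreasing, so their minimum is maximized when they are equal, i.e.\ at $t^{d+2} = \tfrac{16}{9}(8\sqrt{d})^d C_q L^d \rho^2$.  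A direct computation of $t^2/\rho^2$ at this point, using $16/9 < 4$, gives exactly the bound $C_d\, C_q^{2/(2+d)}(L/\rho)^{2d/(2+d)}$ claimed in the lemma, since this supremum is an upper bound on the sensitivity no matter which admissible pair $(S_1,S_2)$ is chosen.

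The only delicate step is verifying that the choice $r = t/(8\sqrt{d})$ is legitimate across the full range of $t$: one must check $0 < r \leq L$ (which follows from $0 < t \leq \sqrt{d}\,L$) and check that both lower bounds on $\bar{f}_{S_1,S_2}$ are simultaneously available.  Everything else is bookkeeping of constants, and the main obstacle is simply tracking these carefully through the exponentiation $(\cdot)^{2/(d+2)}$ at the crossover; the loose inequality $16/9 \leq 4$ absorbs the remaining slack to match the stated $C_d = 4^{2/(2+d)}(8\sqrt{d})^{2d/(2+d)}$.
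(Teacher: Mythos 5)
Your proof is correct and follows essentially the same route as the paper's: lower-bound the denominator $\bar f_{S_1,S_2}$ by restricting the sum to $Q \cap B_\infty\big(q, t/(8\sqrt{d})\big)$, invoke the definition of $C_q$ to count those points, and then balance the resulting $O\big(C_q L^d / t^d\big)$ bound against the trivial $t^2/\rho^2$ bound at their crossover. The only difference is cosmetic: where the paper runs a two-case analysis on $\tau$ versus $M/8$ to show that nearby landmarks satisfy $f_{S_1,S_2}(q') \geq M^2/4$, you obtain the (slightly stronger) uniform bound $9t^2/16$ directly from the $1$-Lipschitzness of $q' \mapsto \dist(q',S)$, which is a clean simplification of the same step.
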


\begin{proof}	
Recall $\sigma(q)=\sup_{f_{S_1,S_2}\in F(\Sh(L),\rho)}
\frac{f_{S_1,S_2}(q)}{\frac{1}{n}\sum_{q'\in Q} f_{S_1,S_2}(q')}$.
For any fixed $q \in Q$, for now suppose $f_{S_1,S_2}\in F(\Sh(L),\rho)$ satisfies this supremum $\sigma(q)=\frac{f_{S_1,S_2}(q)}{\frac{1}{n}\sum_{q'\in Q} f_{S_1,S_2}(q')}$.
We define $\dist(q,S) = \inf_{p \in S} \|q-p\|$ (so for $q_i \in Q$ then $\dist(q_i,S) = v_i(S)$), and then use the parameter $M:= |\dist(q,S_1) - \dist(q,S_2)|$, where $M^2 = f_{S_1,S_2}(q)$.
If $M=0$, then obviously $f_{S_1,S_2}(q)=M^2=0$, and $\sigma(q)=0$.
So, without loss of generality, we assume $M>0$ and $\dist(q,S_1)=\tau$ and $\dist(q,S_2) = \tau + M$.
We first prove $\sigma(q)\leq C_d C_{q} \frac{L^d}{M^d}$.
There are two cases for the relationship between $\tau$ and $M$, as shown in Figure \ref{two cases of tau and M}.

	\begin{figure}[h]
		\centering
		\includegraphics[width=6cm]{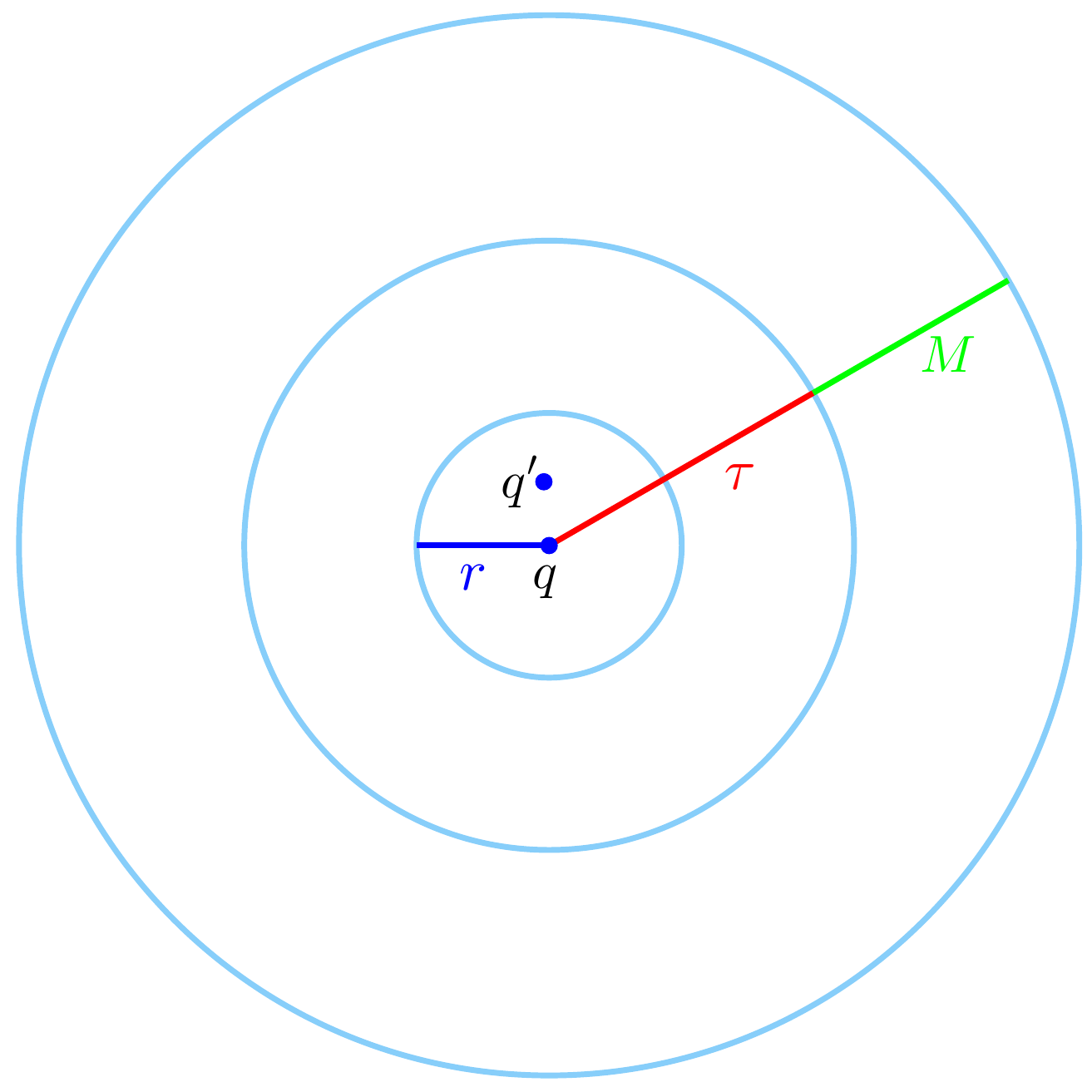}
		\hspace{1cm}
		\includegraphics[width=6cm]{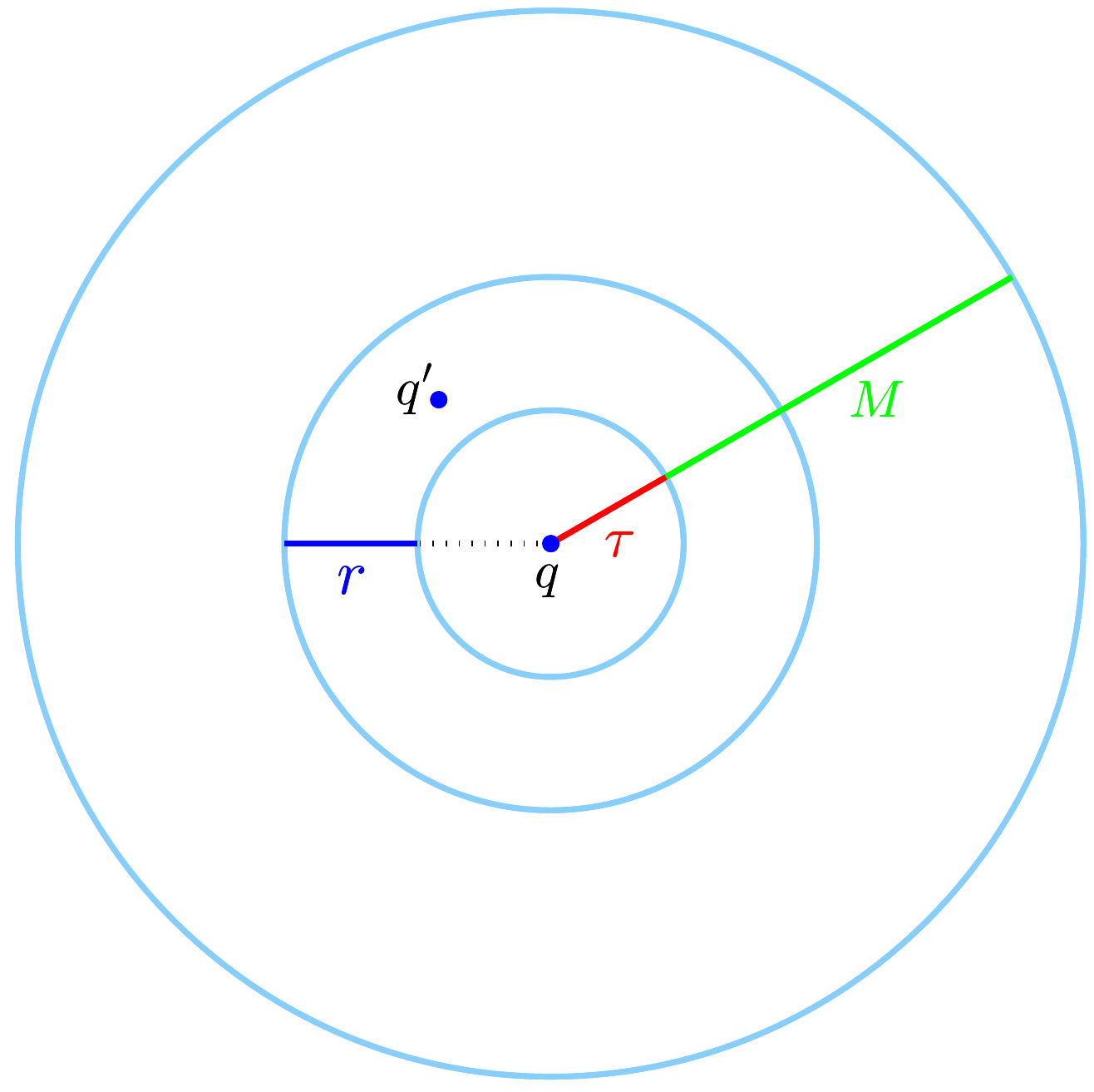}
		
		\caption{ Left: Case 1, $r=\frac{M}{8}\leq \tau$, and $q'\in B(q,r)$.
			Right: Case 2, $r=\frac{M}{8}> \tau$, and $q'\in B(q,\tau+r)$.}
		\label{two cases of tau and M}
	\end{figure}

\paragraph{Case 1: $\tau\geq \frac{M}{8}$.}
For any $q'\in B(q,\frac{M}{8}) := \{q'\in \R^d \mid \|q' - q_i \|\leq \frac{M}{8}\}$, we have
	$\tau+M=\dist(q,S_2) \leq \dist(q,q')+\dist(q',S_2)\leq \frac{M}{8}+\dist(q',S_2)$,
	which implies for all $q' \in B(q,\frac{M}{8})$
\[
	\dist(q',S_2)\geq \tau+M-\frac{M}{8}=\tau+\frac{7}{8}M.
\]
Similarly
$\dist(q', S_1)\leq \dist(q',q)+\dist(q,S_1)\leq \frac{M}{8}+\tau$ for all $\ q'\in B(q,\frac{M}{8})$.
Thus for all $q'\in B(q,\frac{M}{8})$
\[
	|\dist(q',S_2)-\dist(q',S_1)|\geq \dist(q',S_2)-\dist(q',S_1)\geq \tau+\frac{7}{8}M-(\tau+\frac{M}{8})
	=\frac{3}{4}M.
\]
	
\paragraph{Case 2: $0\leq \tau<\frac{M}{8}$.}	
For any $q'\in B(q,\tau+\frac{M}{8}):= \{q'\in \R^d \mid \dist(q',q)\leq \tau+\frac{M}{8} \}$,
	we have $\tau+M=\dist(q',S_2)\leq \dist(q,q')+\dist(q',S_2)\leq \tau+\frac{M}{8}+\dist(q',S_2)$,
	which implies for all $q' \in B(q,\tau+\frac{M}{8})$
\[
	\dist(q',S_2)\geq \frac{7}{8}M.
\]
Combined with $\tau<\frac{M}{8}$ and $\dist(q',S_1)\leq
	\dist(q',q)+\dist(q, S_1)\leq \tau+\frac{M}{8}+\tau=\frac{M}{8}+\frac{M}{8}+\frac{M}{8}\leq \frac{3}{8}M $ for all $q'\in B(q,\tau+\frac{M}{8})$, we have
\[	
	|\dist(q',S_2)-\dist(q',S_1)|\geq \dist(q',S_2)-\dist(q',S_1)
	\geq  \frac{7}{8}M-\frac{3}{8}M=\frac{M}{2}.
\]

Combining these two cases on $\tau$,  for all $q'\in B(q,\frac{M}{8})$
\[
|\dist(q',S_2)-\dist(q',S_1)|\geq\frac{M}{2}.
\]
Then since $B_\infty(q,\frac{r}{\sqrt{d}}) \subset B(q,r)$ for all $r\geq 0$, from
$C_q=\max\limits_{0< r \leq L} \frac{r^d}{L^d}\frac{n}{|Q\cap B_\infty(q,r)|}
	\geq (\frac{1}{8\sqrt{d}})^d \frac{M^d}{L^d}\frac{n}{|Q\cap B_\infty(q,\frac{M}{8\sqrt{d}})|}$, we can bound the denominator in $\sigma(q)$ as
\begin{align*}	
\frac{1}{n}\sum_{q' \in Q} f_{S_1,S_2}(q')
&\geq
\frac{1}{n}\sum_{q'\in Q\cap B_\infty(q,\frac{M}{8\sqrt{d}})} f_{S_1,S_2}(q')
=
\frac{1}{n}\sum_{q'\in Q\cap B_\infty(q,\frac{M}{8\sqrt{d}})} (\dist(q',S_1)-\dist(q',S_2))^2
\\ & \geq
\frac{1}{4}\frac{1}{n}  M^2 \Big|Q\cap B_\infty(q,\frac{M}{8\sqrt{d}})\Big|
\geq
\frac{1}{4}(\frac{1}{8\sqrt{d}})^d \frac{M^2}{C_q}\frac{M^d}{L^d}=\frac{1}{4}(\frac{1}{8\sqrt{d}})^d\frac{1}{C_q}\frac{M^{2+d}}{L^d},
\end{align*}	
which implies
\[	
\sigma(q)=\frac{M^2}{\frac{1}{n}\sum_{q'\in Q} f_{S_1,S_2}(q')}\leq 4 (8\sqrt{d})^d M^2 C_q\frac{L^d}{M^{2+d}}
	=4 (8\sqrt{d})^d C_q\frac{L^d}{M^d}.
\]

Combining this with $\sigma(q)\leq \frac{M^2}{\rho^2}$, we have
$\sigma(q)\leq \min \big(\frac{M^2}{\rho^2},4 (8\sqrt{d})^d  C_q \frac{L^d}{M^d}\big)$.
If $M^{2+d}\leq 4 (8\sqrt{d})^d  C_q \rho^2 L^d$, then
$\frac{M^2}{\rho^2}\leq 4 (8\sqrt{d})^d  C_q \frac{L^d}{M^d}$,
which means
$\sigma(q)
\leq
\min \big(\frac{M^2}{\rho^2}, 4 (8\sqrt{d})^d  C_q \frac{L^d}{M^d}\big)
=
\frac{M^2}{\rho^2}\leq  4^\frac{2}{2+d}(8\sqrt{d})^{\frac{2d}{2+d}} C_q^{\frac{2}{2+d}}\Big(\frac{L}{\rho}\Big)^{\frac{2d}{2+d}}$.
If
$M^{2+d}\geq 4 (8\sqrt{d})^d  C_q \rho^2 L^d$,
then
$4 (8\sqrt{d})^d  C_q \frac{L^d}{M^d} \leq \frac{M^2}{\rho^2}$, so we also have
$\sigma(q)
\leq
\min \big(\frac{M^2}{\rho^2},4 (8\sqrt{d})^d  C_q \frac{L^d}{M^d}\big)
=
4 (8\sqrt{d})^d  C_q \frac{L^d}{M^d}
\leq
4^\frac{2}{2+d}(8\sqrt{d})^{\frac{2d}{2+d}} C_q^{\frac{2}{2+d}}\Big(\frac{L}{\rho}\Big)^{\frac{2d}{2+d}}$.
\end{proof}

Hence, to bound the total sensitivity of $F(\Sh(L),\rho)$, we need a bound of $C_Q = \frac{1}{n}\sum_{q\in Q} C_q^{\frac{2}{2+d}}$.

\begin{lemma} \label{lemma: the bound of C_Q}
Suppose $Q\subset [0,L]^d$ of size $n$,  $\eta=\min_{q,q'\in Q,\ q\neq q'}\|q-q'\|_\infty$, and $C_Q$ is given by \eqref{def of C_1 and C_Q}. Then we have
\[
	C_Q\leq C_d \min\Big(\big(\log_2 \frac{L}{\eta}\big)^{\frac{2}{2+d}},\big(\frac{1}{d}\log_2 n \big)^{\frac{2}{2+d}}\Big),
\]
where $C_d=2^{d+1}$.
\end{lemma}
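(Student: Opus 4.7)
My plan is to bound $C_Q=\tfrac{1}{n}\sum_q C_q^{\alpha}$ (with $\alpha := \tfrac{2}{2+d}$) by first controlling the \emph{linear} sum $\sum_q C_q$ via a dyadic decomposition of scales, and only at the very end applying Jensen's inequality on the concave map $x\mapsto x^\alpha$. The order of these two operations is crucial: if one invokes a subadditivity bound like $(\max_k a_k)^\alpha \le \sum_k a_k^\alpha$ before summing over $q$, then each of the $K$ dyadic scales contributes linearly and we lose a factor of $K$ rather than the desired $K^\alpha$. Applying Jensen only at the end promotes $K$ to $K^\alpha$, which matches the stated dependence.

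\textbf{Step 1 (dyadic discretization of $C_q$).} Set $r_k := L/2^k$ and $D_q^{(k)} := \tfrac{r_k^d n}{L^d\,|Q\cap B_\infty(q,r_k)|}$. Any $r\in(0,L]$ satisfies $r_{k+1}\le r\le r_k$ for some $k\ge 0$; since $|Q\cap B_\infty(q,\cdot)|$ is nondecreasing in its radius, monotonicity of both factors gives $C_q \le 2^d \max_{k\ge 0} D_q^{(k)}$.

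\textbf{Step 2 (per-scale counting).} At each scale $k$, I claim $\sum_{q\in Q} D_q^{(k)}\le n$. Partition $[0,L]^d$ into $2^{kd}$ axis-aligned cubes of side $r_k$; any two points of $Q$ sharing a cube lie within $\ell_\infty$-distance $r_k$, so $|C\cap Q|\le |B_\infty(q,r_k)\cap Q|$ whenever $q\in C\cap Q$. Summing reciprocals over a single cube yields at most $1$, summing over nonempty cubes gives $\sum_q 1/|B_\infty(q,r_k)\cap Q|\le 2^{kd}$, and multiplying by the prefactor $r_k^d n/L^d = n/2^{kd}$ yields $n$.

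\textbf{Step 3 (effective number of scales).} Combining Steps 1 and 2 via $\max\le\sum$,
\[
\sum_q C_q \;\le\; 2^d \sum_k \sum_q D_q^{(k)} \;\le\; 2^d K n,
\]
where $K$ counts the dyadic scales that contribute nontrivially. Two complementary choices of $K$ give the two halves of the claimed $\min$. (a) For $r_k<\eta$ one has $|Q\cap B_\infty(q,r_k)|=1$, and using $n\le O((L/\eta)^d)$ the per-scale sum $\sum_q D_q^{(k)}$ decays geometrically past $k_\eta := \log_2(L/\eta)$, so the tail contributes $O(n)$ and effectively $K = O(\log_2(L/\eta))$. (b) The trivial bound $\sum_q 1/|B_\infty(q,r_k)\cap Q|\le n$ gives $\sum_q D_q^{(k)} \le n^2/2^{kd}$, which decays geometrically once $k>\tfrac{1}{d}\log_2 n$; again the tail sums to $O(n)$, so effectively $K = O(\tfrac{1}{d}\log_2 n)$.

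\textbf{Step 4 (Jensen).} Since $\alpha<1$, Jensen's inequality for the concave function $x\mapsto x^\alpha$ with uniform weights $1/n$ yields
\[
C_Q \;=\; \tfrac{1}{n}\sum_q C_q^\alpha \;\le\; \Bigl(\tfrac{1}{n}\sum_q C_q\Bigr)^\alpha \;\le\; (2^d K)^\alpha \;=\; 2^{d\alpha} K^\alpha.
\]
Substituting the two choices of $K$ from Step 3 and using $2^{d\alpha} = 2^{2d/(2+d)} \le 4 \le 2^{d+1}$ yields both halves of the $\min$ with the stated constant $C_d=2^{d+1}$.

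The main obstacle will be pinning down the two cutoffs in Step 3, ensuring that contributions from scales $r_k<\eta$ (for bound (a)) and $r_k<L/n^{1/d}$ (for bound (b)) really do collapse into a single $O(n)$ tail rather than contributing $K$ separate terms of size $n$. Once those tails are handled via the geometric bound $\sum_q D_q^{(k)}\le n^2/2^{kd}$, the Jensen step delivers the $\alpha$-power on the logarithm cleanly.
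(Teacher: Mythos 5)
Your proof is correct and follows essentially the same route as the paper's: reduce to the linear average $\frac{1}{n}\sum_q C_q$ via Jensen/H\"older, bound each dyadic scale's contribution by a cube-partition counting argument, and cap the number of relevant scales by $\min\big(\log_2 \frac{L}{\eta}, \frac{1}{d}\log_2 n\big)$. The only bookkeeping difference is that the paper buckets each $q$ by its maximizing radius $r_q$ (so each point is charged to exactly one scale and one counts nonempty buckets), whereas you charge every point at every scale and dispose of the sub-$\eta$ and sub-$Ln^{-1/d}$ tails by geometric decay; both yield the same $2^d\times(\text{number of scales})$ bound.
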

\begin{proof}
	
	We define $\widetilde{C}_Q:=\frac{1}{n}\sum_{q\in Q} C_q$, and using H\"{o}lder inequality we have
\[
C_Q
=
\frac{1}{n}\sum_{q\in Q}C_q^{\frac{2}{2+d}}
\leq
\frac{1}{n}\Big(\sum_{q\in Q} C_q\Big)^{\frac{2}{2+d}}n^{\frac{d}{2+d}}
=
\Big(\frac{1}{n}\sum_{q\in Q}C_q\Big)^{\frac{2}{2+d}}=(\widetilde{C}_Q)^{\frac{2}{2+d}}.
\]
So, we only need  to bound $\widetilde{C}_Q$.
	
	We define $r_q:=\arg \max_{0<r \leq L} \frac{r^d}{L^d}\frac{n}{|Q\cap B_\infty(q,r)|}$ for all $q\in Q$,
	$Q_i:=\{q\in Q \mid \frac{L}{2^{i+1}}<r_q\leq \frac{L}{2^i}\}$, and $A:=\{i\geq 0 \mid i \text{ is an integer and } |Q_i|>0\}$.
	
	For any fixed $i\in A$, we use $l_i:=\frac{L}{2^{i+1}}$ as the side length of grid cell to
	partition the region $[0,L]^d$ into $s_i=(\frac{L}{l_i})^d=2^{(i+1)d}$ grid cells: $\Omega_1.\cdots,\Omega_{s_i}$ where each $\Omega_j$ is a closed set, and define $Q_{i,j}:=Q_i\cap \Omega_j$. Then, $|Q_i\cap \bar{B}_\infty(q,l_i)|\geq |Q_{i,j}|$ for all $q\in Q_{i,j}$
	where $\bar{B}_\infty(q,l_i):=\{q'\in \R^d|\ \|q'-q\|_\infty\leq l_i\}$, and  we have
\begin{align*}	
\sum_{q\in Q_i}\frac{r_q^d}{L^d}\frac{1}{|Q_i\cap B_\infty(q,r_q)|}
&\leq
\sum_{q\in Q} \frac{L^d}{2^{id}L^d}\frac{1}{|Q_i\cap B_\infty(q,r_q)|}
\leq \frac{1}{2^{id}}\sum_{q\in Q_i}\frac{1}{ |Q_i\cap \bar{B}_\infty(q,l_i)|}
\\ &\leq
\frac{1}{2^{id}}\sum_{j\in[s_i],|Q_{i,j}|>0}\sum_{q\in Q_{i,j}}\frac{1}{|Q_i\cap \bar{B}_\infty(q,l_i)|}
\leq
\frac{1}{2^{id}}\sum_{j\in[s_i],|Q_{i,j}|>0}\sum_{q\in Q_{i,j}}\frac{1}{|Q_{i,j}|}
\\ &=
\frac{1}{2^{id}}\sum_{j\in[s_i],|Q_{i,j}|>0}\frac{|Q_{i,j}|}{|Q_{i,j}|}
\leq
\frac{s_i}{2^{id}}=\frac{2^{(i+1)d}}{2^{id}}=2^d.
\end{align*}
	
Then using the definitions of $\widetilde{C}_Q$ and $r_q$ we have
\begin{align*}	
\widetilde{C}_Q
&=
\sum_{q\in Q}\max_{0<r \leq L}\frac{r^d}{L^d}\frac{1}{|Q\cap B_\infty(q,r)|}
=
\sum_{q\in Q}\frac{r_q^d}{L^d}\frac{1}{|Q\cap B_\infty(q,r_q)|}
=
\sum_{i\in A}\sum_{q\in Q_i}\frac{r_q^d}{L^d}\frac{1}{|Q\cap B_\infty(q,r)|}
\\ &\leq
\sum_{i\in A}\sum_{q\in Q_i}\frac{r_q^d}{L^d}\frac{1}{|Q_i\cap B_\infty(q,r)|}
\leq
\sum_{i\in A}2^d=2^d |A|.
\end{align*}
	
We assert $r_q\geq L n^{-\frac{1}{d}}$ for all $q\in Q$. This is because for any $r\in (0,Ln^{-\frac{1}{d}})$ we have
\[
	\frac{r^d}{L^d}\frac{n}{|Q\cap B_\infty(q,r)|}\leq \frac{L^d}{n L^d}\frac{n}{1}=1\leq \frac{L^d}{L^d}\frac{n}{|Q\cap B_\infty(q,L)|},
\]
which implies the optimal $r_q\in [L n^{-\frac{1}{d}},L]$.
Moreover, since $r_q\geq \min_{q'\in Q,\ q'\neq q}\|q-q'\|_\infty\geq \eta$, we have
$r_q \geq \max(L n^{-\frac{1}{d}},\eta)$ for all $q\in Q$. If $i>\min\big(\log_2\frac{L}{\eta},\frac{1}{d}\log_2 n\big)$, then $\frac{L}{2^i}< \max(L n^{-\frac{1}{d}},\eta)\leq r_q$, and from the definition of $Q_i$ and $A$ we know $i\notin A$, which implies $|A|\leq 1+\min\big(\log_2\frac{L}{\eta},\frac{1}{d}\log_2 n\big)$.
Hence we obtain
$\widetilde{C}_Q\leq 2^{d+1} \min\big(\log_2\frac{L}{\eta},\frac{1}{d}\log_2 n\big)$
and using $C_Q = (\widetilde{C}_Q)^{\frac{2}{2+d}}$ we prove the lemma.
\end{proof}

Since $f_{S_1,S_2}\in F(\Sh (L),\rho)$, we know $f_{S_1,S_2}(q)\leq dL^2$ for all $q\in Q$ and
$\frac{1}{n}\sum_{q'\in Q} f_{S_1,S_2}(q')\geq \rho^2$, so $\sigma(q)\leq \frac{dL^2}{\rho^2}$ for all $q\in Q$.
Thus, we can expand $\frac{1}{|Q|} \sum_{q \in Q} \sigma(q)$ using Lemma \ref{sensitivity of one point} and factor out $C_Q$ using Lemma \ref{lemma: the bound of C_Q} to immediately obtain the following theorem about the total sensitivity of $F(\Sh(L),\rho)$.

\begin{theorem} \label{total sensitivity of F(S(L),rho)}
	Suppose $L>\rho>0$,  $Q=\{q_1,\cdots,q_n\}\subset [0,L]^d$ and
$\eta=\min_{q,q'\in Q,\ q\neq q'}\|q-q'\|_\infty$.
Then, we have
\[
	\SG(F(\Sh(L),\rho)) \leq \SG_Q
     = O\left(
\left(\frac{L}{\rho}\right)^{\frac{2d}{2+d}} \min \left( \log \frac{L}{\eta}, \log n, \left(\frac{L}{\rho}\right)^2 \right)^{\frac{2}{2+d}}
\right).
\]
\end{theorem}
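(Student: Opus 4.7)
The proof assembles the three bounds already in hand: the pointwise sensitivity bound from Lemma \ref{sensitivity of one point}, the crude pointwise bound $\sigma(q) \le dL^2/\rho^2$, and the bound on $C_Q$ from Lemma \ref{lemma: the bound of C_Q}. So the plan is essentially to plug these into the definition $\SG(F(\Sh(L),\rho)) = \frac{1}{n}\sum_{q \in Q} \sigma(q)$ and reorganize the exponents.

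First I would write
\[
\SG(F(\Sh(L),\rho)) = \frac{1}{n}\sum_{q \in Q} \sigma(q) \le \frac{1}{n}\sum_{q \in Q} \min\!\left(\tfrac{dL^2}{\rho^2},\ C_d C_q^{\frac{2}{2+d}}\Big(\tfrac{L}{\rho}\Big)^{\frac{2d}{2+d}}\right),
\]
where the first term of the min comes from $f_{S_1,S_2}(q) \le dL^2$ together with $\bar f_{S_1,S_2} \ge \rho^2$, and the second from Lemma \ref{sensitivity of one point}. I would then produce three separate upper bounds on $\SG(F(\Sh(L),\rho))$ by always taking one particular branch of the minimum, and finish by taking the minimum of the three resulting expressions.

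For the two logarithmic branches, I would drop the min entirely, use the Lemma \ref{sensitivity of one point} bound, and factor $(L/\rho)^{2d/(2+d)}$ out of the sum to obtain
\[
\SG(F(\Sh(L),\rho)) \le C_d \Big(\tfrac{L}{\rho}\Big)^{\frac{2d}{2+d}} \cdot \frac{1}{n}\sum_{q \in Q} C_q^{\frac{2}{2+d}} = C_d \Big(\tfrac{L}{\rho}\Big)^{\frac{2d}{2+d}} C_Q,
\]
and then invoke Lemma \ref{lemma: the bound of C_Q} to replace $C_Q$ by $O\!\big(\min((\log \tfrac{L}{\eta})^{2/(2+d)}, (\log n)^{2/(2+d)})\big)$, swallowing the $1/d$ factor into the $O(\cdot)$ since $d$ is treated as a constant in the statement. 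For the $(L/\rho)^2$ branch I would use the trivial bound $\sigma(q) \le dL^2/\rho^2$ uniformly and observe the algebraic identity
\[
\Big(\tfrac{L}{\rho}\Big)^2 = \Big(\tfrac{L}{\rho}\Big)^{\frac{2d}{2+d}}\cdot \Big(\tfrac{L}{\rho}\Big)^{\frac{4}{2+d}} = \Big(\tfrac{L}{\rho}\Big)^{\frac{2d}{2+d}}\cdot \Big(\big(\tfrac{L}{\rho}\big)^{2}\Big)^{\frac{2}{2+d}},
\]
which puts this branch in exactly the same form as the other two.

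There is no real obstacle here: both key lemmas already do the heavy lifting, and the only care needed is in the bookkeeping of exponents so that all three branches can be written with a common factor $(L/\rho)^{2d/(2+d)}$ times a $(\cdot)^{2/(2+d)}$ term, at which point taking the minimum over the three alternatives yields exactly the claimed bound $\SG_Q = O\big((L/\rho)^{2d/(2+d)} \min(\log \tfrac{L}{\eta},\ \log n,\ (L/\rho)^2)^{2/(2+d)}\big)$. The mildly subtle point worth mentioning in the write-up is that the $(L/\rho)^2$ branch of the min is obtained from a completely different estimate (the trivial bound) than the logarithmic branches, rather than from Lemma \ref{lemma: the bound of C_Q} itself.
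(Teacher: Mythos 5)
Your proposal is correct and matches the paper's argument, which likewise combines the trivial bound $\sigma(q)\le dL^2/\rho^2$ with Lemma \ref{sensitivity of one point} averaged over $Q$ and then Lemma \ref{lemma: the bound of C_Q} to bound $C_Q$ (the paper states this in one sentence before the theorem rather than writing out the exponent bookkeeping). Your explicit observation that the $(L/\rho)^2$ branch comes from the trivial estimate rather than from the $C_Q$ lemma is exactly the right reading of why the three terms sit under a common $(\cdot)^{2/(2+d)}$.
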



From  Lemma \ref{sensitivity of one point} and Theorem \ref{total sensitivity of F(S(L),rho)},
using \cite{MLLJ2010}[Lemma 2.1]  we can obtain the following theorem.

\begin{theorem}\label{weak coreset of Q, geometric objects}
Let $L>\rho>0$,  $Q=\{q_1,\cdots,q_n\}\subset [0,L]^d$, $S_1,S_2\in \Sh (L)$ and $\dQ(S_1,S_2)\geq \rho$.
Suppose $\sigma(q)$ and $\SG_Q$ are defined in Lemma \ref{sensitivity of one point} and Theorem \ref{total sensitivity of F(S(L),rho)} respectively.
Then for $\delta,\eps \in (0,1)$ a $\sigma$-sensitive sampling of size
$N\geq \frac{\SG_Q}{\delta\eps^2}$
provides $\tilde Q$, a $(\rho,\eps,\delta)$-coreset; that is
with probability at least $1-\delta$, we have
\[
	(1-\eps) \dQ(S_1,S_2)\leq  \mathtt{d}_{\tilde{Q},W}(S_1,S_2)
	\leq(1+\eps) \dQ(S_1,S_2).
\]	
\end{theorem}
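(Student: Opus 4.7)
The plan is to invoke the sensitivity sampling framework [MLLJ2010, Lemma 2.1] directly, using the per-point sensitivity bound from Lemma \ref{sensitivity of one point} and the total sensitivity bound $\SG(F(\Sh(L),\rho)) \leq \SG_Q$ from Theorem \ref{total sensitivity of F(S(L),rho)} as black boxes. The theorem is stated for a fixed pair $(S_1, S_2)$ with $\dQ(S_1, S_2) \geq \rho$, so this is a per-function (weak) coreset guarantee, not a uniform one, and [MLLJ2010, Lemma 2.1] in its standard form is exactly the right tool.

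First I would form the sampling distribution and weights. By Lemma \ref{sensitivity of one point} we have an explicit upper bound on $\sigma(q)$ for each $q \in Q$, whose average is bounded by $\SG_Q / n$ (Theorem \ref{total sensitivity of F(S(L),rho)}). Draw $\tilde{Q}$ of size $N \geq \SG_Q / (\delta \eps^2)$ i.i.d.\ from $Q$ with probability $\sigma(q) / (n \cdot (\SG_Q/n))$, and set weight $w(\tilde q) = \SG_Q / (N \sigma(\tilde q))$ as in the $\sigma_{F,X,\mu}$-sensitive sampling construction in Section \ref{sec:connect}. Since the pair $(S_1, S_2)$ is fixed in advance and $\dQ(S_1, S_2) \geq \rho$, we have $f_{S_1, S_2} \in F(\Sh(L), \rho)$, so [MLLJ2010, Lemma 2.1] applies and yields, with probability at least $1-\delta$,
\[
(1-\eps') \, \bar f_{S_1,S_2} \;\leq\; \sum_{\tilde q \in \tilde Q} w(\tilde q) \, f_{S_1,S_2}(\tilde q) \;\leq\; (1+\eps') \, \bar f_{S_1,S_2},
\]
for $\eps' = \Theta(\eps)$ (a constant rescaling that I absorb at the end).

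The final step is to take square roots. By definition $\bar f_{S_1,S_2} = \dQ(S_1,S_2)^2$, and the weighted left-hand side above is exactly $\mathtt{d}_{\tilde Q, W}(S_1,S_2)^2$. Hence $\sqrt{1-\eps'}\,\dQ(S_1,S_2) \leq \mathtt{d}_{\tilde Q, W}(S_1,S_2) \leq \sqrt{1+\eps'}\,\dQ(S_1,S_2)$, and since $\sqrt{1 \pm \eps'} = 1 \pm \Theta(\eps')$, choosing $\eps' = \eps/2$ (or any small constant multiple of $\eps$) and adjusting $N$ by the corresponding constant gives the claimed $(1 \pm \eps)$ bound on $\dQ$.

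I do not expect any significant obstacle: all the real work has been done in bounding $\sigma(q)$ and $\SG_Q$. The only subtlety is bookkeeping the constant between the additive-on-$\bar f$ accuracy delivered by sensitivity sampling and the multiplicative-on-$\dQ$ accuracy required by the theorem, and verifying that the precondition $\dQ(S_1,S_2) \geq \rho$ is precisely what places $f_{S_1,S_2}$ inside the function family to which Theorem \ref{total sensitivity of F(S(L),rho)} applies, so the sensitivity bounds used by [MLLJ2010, Lemma 2.1] remain valid.
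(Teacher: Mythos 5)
Your proposal is correct and follows essentially the same route as the paper, which likewise obtains this theorem by directly invoking \cite{MLLJ2010}[Lemma~2.1] with the per-point sensitivity bound of Lemma~\ref{sensitivity of one point} and the total sensitivity bound of Theorem~\ref{total sensitivity of F(S(L),rho)}. The square-root bookkeeping you spell out (converting the $(1\pm\eps')$ guarantee on $\bar f_{S_1,S_2}=\dQ(S_1,S_2)^2$ into a $(1\pm\eps)$ guarantee on $\dQ$) is the right and only additional detail needed.
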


If $Q$ describes a continuous uniform distribution in $[0,L]^d$ (or sufficiently close to one, like points on a grid), then there exists an absolute constant $C>0$ such that $C_q \leq C$ for all $q\in Q$, then in Lemma \ref{sensitivity of one point}  $\sigma(q)\leq C_d \big(\frac{L}{\rho}\big)^{\frac{2d}{2+d}}$ for all $q\in Q$, and in Theorem \ref{total sensitivity of F(S(L),rho)} $\SG_Q \leq C_d \big(\frac{L}{\rho}\big)^{\frac{2d}{2+d}}$. So, for uniform distribution, the sample size of
 $Q$ in Theorem \ref{weak coreset of Q, geometric objects} is independent from the size of $Q$, and for $d=2$ the bound $\SG_Q = O(L/\rho)$ matches the lower bound in Lemma \ref{lem:TS-lb}.

\begin{corollary} \label{total sensitivity uniform ditribution}
If $Q$ describes the continuous uniform distribution over $[0,L]^d$, then the sample size in  Theorem \ref{weak coreset of Q, geometric objects} can be reduced to
$N=O\Big(\big(\frac{L}{\rho}\big)^{\frac{2d}{2+d}} \frac{1}{\delta \eps^2}\Big)$.

\end{corollary}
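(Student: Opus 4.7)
The plan is to leverage the results already developed in the section: the sample-size bound in Theorem \ref{weak coreset of Q, geometric objects} is $N \geq \SG_Q / (\delta \eps^2)$, and $\SG_Q$ is controlled via Lemma \ref{sensitivity of one point} by the quantity $C_Q = \frac{1}{n}\sum_{q\in Q} C_q^{2/(2+d)}$. So the entire corollary reduces to showing that, under continuous uniformity, the per-point quantity $C_q$ from \eqref{def of C_1 and C_Q} is bounded by an absolute constant independent of $n$ and $L$. The logarithmic factor in Lemma \ref{lemma: the bound of C_Q} only arose because, in general, some points $q \in Q$ could sit in arbitrarily sparse neighborhoods; uniformity forbids this.

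Concretely, I would first compute the axis-aligned box volume $\mathrm{Vol}\bigl(B_\infty(q,r) \cap [0,L]^d\bigr) = \prod_{i=1}^d \bigl(\min(q_i+r, L) - \max(q_i - r, 0)\bigr)$ for $q \in [0,L]^d$ and $0 < r \leq L$. A short case analysis on whether $q_i$ lies in $[0,r]$, $[r, L-r]$, or $[L-r, L]$ shows that each factor is at least $r$ (using $r \leq L$), so the volume is at least $r^d$. Interpreting $|Q \cap B_\infty(q,r)|/n$ as the uniform measure of $B_\infty(q,r) \cap [0,L]^d$, namely $\mathrm{Vol}(\cdot)/L^d$, yields
\[
\frac{r^d}{L^d} \cdot \frac{n}{|Q \cap B_\infty(q,r)|} \;=\; \frac{r^d/L^d}{\mathrm{Vol}(B_\infty(q,r) \cap [0,L]^d)/L^d} \;\leq\; \frac{r^d}{r^d} \;=\; 1,
\]
so $C_q \leq 1$ uniformly in $q$, and hence $C_q \leq C$ for an absolute constant $C$ depending only on $d$ (the same argument works up to a constant factor when $Q$ lies on a sufficiently fine grid in $[0,L]^d$, since the discrete count differs from $n\cdot \mathrm{Vol}/L^d$ only by a constant factor when either the box contains at least one grid point or the ratio is trivially small).

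Next, I would feed this uniform bound on $C_q$ into Lemma \ref{sensitivity of one point} to deduce $\sigma(q) \leq C_d \cdot C^{2/(2+d)} \cdot (L/\rho)^{2d/(2+d)}$ for every $q \in Q$, and therefore $\SG_Q = \frac{1}{n} \sum_{q \in Q} \sigma(q) \leq C'_d \cdot (L/\rho)^{2d/(2+d)}$, \emph{without} incurring the $\log(L/\eta)$ or $\log n$ factors that appear in Lemma \ref{lemma: the bound of C_Q} and Theorem \ref{total sensitivity of F(S(L),rho)}. Plugging this improved $\SG_Q$ into Theorem \ref{weak coreset of Q, geometric objects} gives the claimed sample size $N = O\bigl((L/\rho)^{2d/(2+d)} \cdot 1/(\delta \eps^2)\bigr)$, which is crucially independent of $n$.

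The main obstacle, such as it is, is the boundary-case verification that $\mathrm{Vol}(B_\infty(q,r) \cap [0,L]^d) \geq r^d$ holds simultaneously for $q$ anywhere in $[0,L]^d$ (including corners, where only a single orthant of $B_\infty(q,r)$ sits inside the domain) and for all $r \leq L$; I would dispatch this by the one-dimensional factor analysis above. The only other subtlety worth noting is that the "continuous uniform" setting requires reinterpreting $|Q \cap B_\infty(q,r)|/n$ as a measure rather than a counting quantity, or equivalently restricting the statement to grid-like $Q$ where both interpretations agree up to absolute constants; either way, the final sample-size bound is unchanged.
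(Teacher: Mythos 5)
Your proposal is correct and follows essentially the same route as the paper: the paper likewise asserts that uniformity forces $C_q \leq C$ for an absolute constant, feeds this into Lemma \ref{sensitivity of one point} to get $\sigma(q) \leq C_d (L/\rho)^{2d/(2+d)}$ pointwise, and hence $\SG_Q = O((L/\rho)^{2d/(2+d)})$ in Theorem \ref{weak coreset of Q, geometric objects}. Your explicit volume computation showing $\mathrm{Vol}(B_\infty(q,r)\cap[0,L]^d)\geq r^d$ merely fills in a verification the paper leaves implicit.
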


\paragraph*{Remark.}
To compute the upper bound of $\sigma(q)$ in Lemma \ref{sensitivity of one point}, we need to compute $C_q$ which can be obtained in
$O(n\log n)$ time. For any fixed $q\in Q$, we sort $Q\setminus \{q\}=\{q_1,\cdots,q_{n-1}\}$ according to their $l^\infty$ distance from $q$, so that $\|q-q_i\|_\infty\leq \|q-q_j\|_\infty$ for any $i<j$. Then for $i\in [n]$ we compute $\frac{r_i^d}{L^d}\frac{n}{i}$, where $r_i= \|q-q_i\|_\infty$ for $=i\in [n-1]$ and $r_n=L$, and choose the maximum value
of $\frac{r_i^d}{L^d}\frac{n}{i}$ as $C_q$.

\section{Strong Coresets for the Distance Between Trajectories}
\label{Approximate the Distance Between Trajectories}

In this section, we study the distance $\dQ$ defined on a subset of $\Sh(L)$: the collection of $k$-piecewise linear curves, and
use the framework in \cite{VDH2016} to construct a strong approximation for $Q$. We assume the multiset $Q$ contains $m$ distinct points $q_1,\cdots,q_m$, where each point $q_i$ appears $m_i$ times and $\sum_{i=1}^m m_i=n$. So, in this section $Q$ will be viewed as a
a set $\{q_1,\cdots,q_m \}$ (not a multiset) and each point $q\in Q$ has a weight $w(q_i)=\frac{m_i}{n}$.

Suppose $\T_k:=\{\gamma=\langle c_0,\cdots,c_k\rangle \mid c_i\in \R^d\}$ is the collection of all piecewise-linear curves with $k$ line segments in  $\R^d$. For $\gamma=\langle c_0,\cdots,c_k\rangle \in \T_k$, $\langle c_0,\cdots,c_k \rangle$ is the sequence of $k+1$ critical points of $\gamma$.
The value $\dist(q,\gamma) = \inf_{p \in \gamma} \|p-q\|$, and function $f_{\gamma_1,\gamma_2}(q) = (\dist(q,\gamma_1)-\dist(q,\gamma_2))^2$ are defined as before.  We now use weights $w(q_i)=\frac{m_i}{n}$ $\big(\sum_{q\in Q}w(q)=1\big)$ and the resulting distance is $\dQ(\gamma_1,\gamma_2) = \big(\sum_{q \in Q} w(q) f_{\gamma_1,\gamma_2}(q)\big)^{\frac{1}{2}}$.

For $L>\rho>0$, $Q=\{q_1,\cdots,q_m\}\subset \R^d$ , we define
\[
\c{X}_k^d(L,\rho)
:=
\left\{(\gamma_1,\gamma_2)\in \T_k\times\T_k \mid \gamma_1,\gamma_2\in \Sh(L),
\;
\dQ(\gamma_1,\gamma_2) \geq \rho\right\}.
\]

We next consider the sensitivity adjusted weights $w'(q) = \frac{\sigma(q)}{\SG_Q} w(q)$ and cost function
$g_{\gamma_1,\gamma_2}(q) = \frac{1}{\sigma(q)} \frac{f_{\gamma_1,\gamma_2}(q)}{\bar f_{\gamma_1, \gamma_2}}$.
These use the general bounds for sensitivity in Lemma \ref{sensitivity of one point} and Theorem \ref{total sensitivity of F(S(L),rho)}, with
as usual $\bar{f}_{\gamma_1,\gamma_2}=\sum_{q\in Q}w(q)f_{\gamma_1,\gamma_2}(q)$.
These induce an adjusted range space $(Q,\c{T}_{k,d}')$ where each element is defined
\[
T_{\gamma_1, \gamma_2, \eta} = \{q \in Q \mid w'(q) g_{\gamma_1, \gamma_2}(q) \leq \eta,  \; \gamma_1, \gamma_2 \in \c{X}_k^d(L,\rho)\}.
\]
Now to apply the strong coreset construction of Braverman \etal~\cite{VDH2016}[Theorem 5.5] we only need to bound the shattering dimension of $(Q,\c{T}_{k,d}')$.

Two recent results provide bounds on the VC-dimension of range spaces related to trajectories.  Given a range space $(X,\c{R})$ with VC-dimension $\nu$ and shattering dimension $\mathfrak{s}$, it is known that $\mathfrak{s} = O(\nu \log \nu)$ and $\nu = O(\mathfrak{s})$.  So up to logarithmic factors these terms are bounded by each other.
First Driemel \etal~\cite{DPP19} shows VC-dimension for a ground set of curves $\mathbb{X}_m$ of length $m$, with respect to metric balls around curves of length $k$, for various distance between curves.  The most relevant case is where $m=1$ (so the ground set are points like $Q$), and the Hausdorff distance is considered, where the VC-dimension in $d=2$ is bounded $O(k^2 \log (km)) = O(k^2 \log k)$ and is at least $\Omega(\max \{k, \log m\}) = \Omega(k)$.
Second, Matheny \etal~\cite{MXP19} considered ground sets $\mathbb{X}_k$ of trajectories of length $k$, and ranges defined by geometric shapes which may intersect those trajectories anywhere to include them in a subset.  The most relevant cases is when they consider disks, and show the VC-dimension is at most $O(d \log k)$, and have a proof that implies it is at least $\Omega(\log k)$; but this puts the complexity $k$ on the ground set not the query.
More specifically, neither of these cases directly imply the results for our intended range space, since ours involves a pair of trajectories.

\begin{lemma}\label{bound the demension for trajectory}
The shattering dimension of range space $(Q, \c{T}'_{k,d})$ is $O(k^3)$, for constant $d$.
\end{lemma}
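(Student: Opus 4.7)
The plan is to mirror the strategy of Lemma \ref{bound the demension for hyperplane}: first algebraically simplify the range condition into a cleaner form, then decompose it as a Boolean combination of primitive range families each of $O(1)$ shattering dimension, and finally combine via the range space arithmetic of Lemma \ref{bound the dimension of range space intersection and union}. Substituting the definitions of $w'(q)$, $g_{\gamma_1,\gamma_2}(q)$, and $f_{\gamma_1,\gamma_2}(q) = (\dist(q,\gamma_1) - \dist(q,\gamma_2))^2$, the condition $w'(q) g_{\gamma_1,\gamma_2}(q) \leq \eta$ is equivalent to $\sqrt{w(q)}\,|\dist(q,\gamma_1) - \dist(q,\gamma_2)| \leq c$ where $c := \sqrt{\eta\, \SG_Q\, \bar f_{\gamma_1,\gamma_2}}$. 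As $(\gamma_1, \gamma_2, \eta)$ ranges over the parameter space, $c$ sweeps all of $[0,\infty)$ independently of $\gamma_1, \gamma_2$, so I can reparameterize by $(\gamma_1,\gamma_2,c)$, and split the absolute value to write each $T_{\gamma_1,\gamma_2,\eta}$ as the intersection of two one-sided ranges. By Lemma \ref{bound the dimension of range space intersection and union} it suffices to bound the shattering dimension of one such family, say $\c{T}_+ = \{\{q : \sqrt{w(q)}(\dist(q,\gamma_1) - \dist(q,\gamma_2)) \leq c\}\}$.

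Next I would unroll the two minima. Writing $\gamma_1 = \langle c_0,\ldots,c_k\rangle$ with segments $S_1,\ldots,S_k$ and $\gamma_2$ with segments $S'_1,\ldots,S'_k$, every member of $\c{T}_+$ decomposes as a union over the $k^2$ ``closest-segment'' cells,
\[
\{q : \sqrt{w(q)}(\dist(q,\gamma_1) - \dist(q,\gamma_2)) \leq c\} = \bigcup_{i=1}^{k}\bigcup_{j=1}^{k} R_{i,j,\gamma_1,\gamma_2,c},
\]
where $R_{i,j,\gamma_1,\gamma_2,c}$ is the intersection of (i) the $k-1$ ``Voronoi'' conditions $\dist(q,S_i) \leq \dist(q,S_{i'})$ for $i' \neq i$, (ii) the $k-1$ analogous conditions for $\gamma_2$, and (iii) the single range inequality $\sqrt{w(q)}(\dist(q,S_i) - \dist(q,S'_j)) \leq c$. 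Thus each $R_{i,j,\gamma_1,\gamma_2,c}$ is an intersection of $O(k)$ primitive conditions drawn from a constant number of primitive families.

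Then I would bound each primitive family. For varying segments $S, S'$ in $\R^d$ the set $\{q : \dist(q,S) \leq \dist(q,S')\}$ is a semi-algebraic subset of $\R^d$ described by $O(d)$ parameters and polynomials of bounded degree; likewise for $\{q : \sqrt{w(q)}(\dist(q,S) - \dist(q,S')) \leq c\}$ (absorbing the fixed weight $w(q)$ into the point data). Standard arguments about semi-algebraic range spaces in a bounded number of parameters give each of these primitive families shattering dimension $O(1)$ when $d$ is a constant. Finally I would combine: an intersection of $O(k)$ primitives of shattering dimension $O(1)$ yields shattering dimension $O(k)$ per cell, a union over the $k^2$ cells multiplies this by $O(k^2)$ to give $O(k^3)$, and the outer intersection with the twin one-sided family preserves the $O(k^3)$ bound via Lemma \ref{bound the dimension of range space intersection and union}.

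The main obstacle will be a careful justification of Step~3. The function $q \mapsto \dist(q,S)$ is only piecewise smooth: it equals either the perpendicular distance to the supporting line of $S$ or the distance to one of the two endpoints, depending on where $q$ projects. This piecewise structure must be absorbed into the semi-algebraic description without inflating the number of parameters beyond $O(d)$, so that each primitive family has $O(1)$ shattering dimension. A secondary subtlety is that the threshold $c$ depends implicitly on $\bar f_{\gamma_1,\gamma_2}$, but because $c$ enters only as a single free scalar, the reparameterization step already neutralizes this dependence.
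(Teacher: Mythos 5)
Your proposal follows essentially the same route as the paper's proof: the identical decomposition of each range into a union over the $k^2$ closest-segment pairs, each the intersection of $O(k)$ Voronoi-type conditions plus the one-sided threshold conditions, combined via Lemma \ref{bound the dimension of range space intersection and union} to yield $O(k^3)$ times a constant per primitive family. The only cosmetic difference is that the paper certifies the $O(1)$ shattering dimension of each primitive family by an explicit linearization into halfspaces in $\R^{d'}$ (as in Lemma \ref{bound the demension for hyperplane}) rather than by a general semi-algebraic argument, and the piecewise definition of $\dist(q,s_j)$ that you flag as the main obstacle is likewise left implicit in the paper's linearization step.
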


\begin{proof}
Suppose $(\gamma_1,\gamma_2)\in \c{X}_k^d(L,\rho)$ and $\eta \geq 0$, where $\gamma_1=\langle c_{1,0},\cdots,c_{1,k}\rangle$
and $\gamma_2=\langle c_{2,0},\dots,c_{2,k}\rangle$,
then we can define the range $T_{\gamma_1,\gamma_2,\eta}$  as
\begin{align*}
T_{\gamma_1,\gamma_2,\eta}
 :=&
\{q\in Q \mid w'(q)g_{\gamma_1,\gamma_2}(q)\leq \eta\}
\\  =&
\{q\in Q \mid w(q)f_{\gamma_1,\gamma_2}(q)\leq \SG_Q \bar{f}_{\gamma_1,\gamma_2} \eta\}
\\ =& \nonumber
\{q\in Q \mid w(q)(\dist(q,\gamma_1)-\dist(q,\gamma_2))^2\leq \SG_Q \bar{f}_{\gamma_1,\gamma_2} \eta\}.
\end{align*}	
	
\begin{figure}[h]
	\centering
	\includegraphics[width=0.6\linewidth]{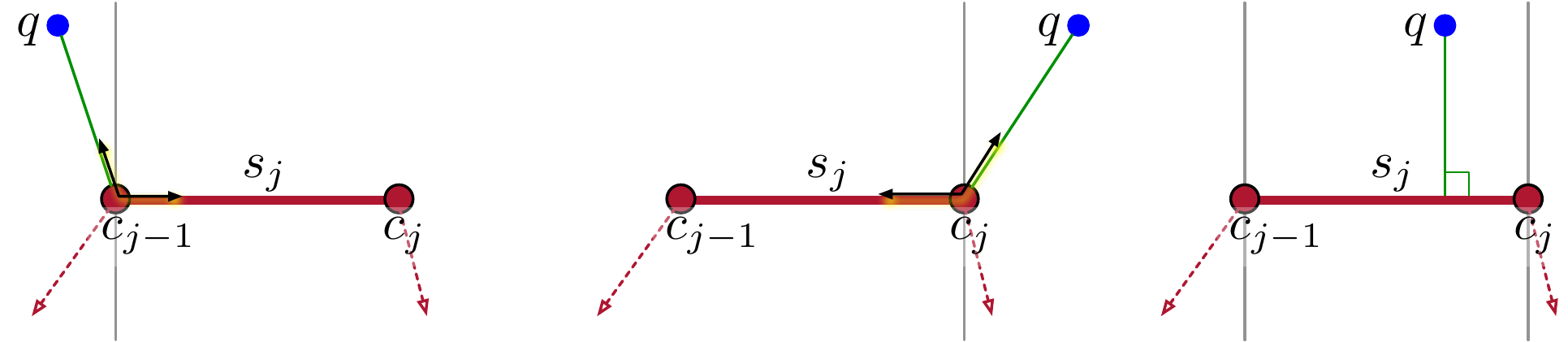}
		
	\caption{\label{fig:traj-VC}
	\small Illustration of the $\dist(q,s_j)$ from point $q$ to segment $s_j$.}
\end{figure}

For a trajectory $\gamma$ defined by critical points $c_0, c_1, \ldots, c_k$ for $j \in [k]$ define $s_j$ as the segment between $c_{j-1},c_j$ and $\ell_j$ as the line extension of that segment.
The distance between $q$ and a segment $s_{j}$ is illustrated in Figure \ref{fig:traj-VC} and defined
\[	
\xi_{j}:=\dist(q,s_{j})=\begin{cases}
	\dist(q,c_{j-1}), & \text{if } \langle c_{j}-c_{j-1},\; q-c_{j-1}\rangle\leq 0 \\
	\dist(q,c_{j}), & \text{if } \langle c_{j-1}-c_{j}, \;q-c_{j}\rangle\leq 0 \\
	\dist(q,\ell_{j}), & \text{otherwise} \\
	\end{cases}.
\]
Then $\dist(q,\gamma)=\min_{j\in [k]} \xi_{j}$.
For trajectories $\gamma_1$ and $\gamma_2$, specify these segment distances as $\xi_i^{(1)}$ and $\xi_i^{(2)}$, respectively.
Then the expression for $T_{\gamma_1,\gamma_2,\eta}$ can be rewritten as
\begin{align*}
T_{\gamma_1,\gamma_2,\eta}
&=
\{q\in Q \mid w'(q)g_{\gamma_1,\gamma_2}(q)\leq \eta\}
\\ &= \nonumber
\{q\in Q \mid w(q)(\min_{j\in [k]} \xi^{(1)}_{j}-\min_{j\in [k]} \xi^{(2)}_{j})^2\leq \SG_Q \bar{f}_{\gamma_1,\gamma_2} \eta\}
\\ &= \nonumber
\cup_{j_1,j_2\in[k]}\{q\in Q \mid \xi^{(1)}_{j_1}\leq \xi^{(1)}_{j}, \xi^{(2)}_{j_2}\leq \xi^{(2)}_{j} \text{ for all } j\in[k], \;
	w(q)(\xi^{(1)}_{i_1}-\xi^{(2)}_{j_2})^2\leq \SG_Q \bar{f}_{\gamma_1,\gamma_2} \eta\}
\\ & = \nonumber
\bigcup_{j_1,j_2\in[k]}
\left(\begin{array}{l}
    \big(\cap_{j\in[k],j\neq j_1} \{q\in Q \mid \xi^{(1)}_{j_1}\leq \xi^{(1)}_{j}\}\big)
    \\ \cap \;\;
    \big(\cap_{j\in[k],j\neq j_2} \{q\in Q|\ \xi^{(2)}_{j_2}\leq \xi^{(2)}_{j}\}\big)
    \\ \cap \;\;
    \{q\in Q \mid \sqrt{w(q)}(\xi^{(1)}_{j_1}-\xi^{(2)}_{j_2})\leq (\SG_Q \bar{f}_{\gamma_1,\gamma_2} \eta)^{\frac{1}{2}}\}
    \\ \cap \;\;
    \{q\in Q \mid \sqrt{w(q)}(\xi^{(2)}_{j_2}-\xi^{(1)}_{j_1})\leq (\SG_Q \bar{f}_{\gamma_1,\gamma_2} \eta)^{\frac{1}{2}}\}
\end{array}\right).
\end{align*}
This means set $T_{\gamma_1,\gamma_2,\eta}$ can be decomposed as the union and intersection of $O(k^3)$ simply-defined subsets of $Q$.  Specifically looking at the last line, this can be seen as the union over $O(k^2)$ sets (the outer union), and the first two lines are the intersection of $O(k)$ sets, and the last two lines inside the union are the intersection with one set each.

Next we argue that each of these $O(k^3)$ simply defined subsets of $Q$ can be characterized as an element of a range space.  By standard combinatorics~\cite{s-gaa-11,AB99} (and spelled out in Lemma \ref{bound the dimension of range space intersection and union}), the bound of the shattering dimension of the entire range space is $O(k^3)$ times the shattering dimension of any of these simple ranges spaces.


To get this simple range space shattering dimension bound, we can use a similar linearization method as presented in the proof of Lemma \ref{bound the demension for hyperplane}.
For any simple range space $\c{R}$ determined by the set decomposition of $T_{\gamma_1,\gamma_2,\eta}$,
we can introduce new variables $c_0\in \R, z,c\in \R^{d'}$, where $z$  depends only on $q$, and $c_0$, $c_i$ depend only on
    $\gamma_1,\gamma_2$ and $r$, and $d'$ only depends on $d$.	 Here, $Q$ is a fixed set and thus $\SG_Q$ is a constant.
	By introducing new variables we can construct an injective map $\varphi: Q\mapsto \R^{d'}$, s.t. $\varphi(q)=z$. There is also an injective
    map	from $\c{R}$ to $\{\{z\in \varphi(Q) \mid c_0+z^Tc\leq 0\} \mid c_0\in \R, c\in \R^{d'}\}$. Since the shattering dimension of the range space $(\R^{d'},\c{H}^{d'})$, where $\c{H}^{d'}=\{h \text{ is a halfspace in } \R^{d'}\}$, is $O(d')$, we have the shattering dimension  of$(Q,\c{R})$ is $O(d')\leq C_d$ where $C_d$ is a positive constant depending only on $d$.  Piecing this all together we obtain $C_d k^3$ bound for the shattering dimension of $(Q,\c{T}'_{k,d})$.
\end{proof}

Now, we can directly apply Lemma \ref{bound the demension for trajectory} and \cite{VDH2016}[Theorem 5.5] to get a $(\rho,\eps,\delta)$-strong coreset for $\c{X}_k^d(L,\rho)$.

\begin{theorem}\label{strong coreset of Q, trajectories}
Let $L>\rho>0$,  $Q \subset [0,L]^d$, and consider trajectory pairs $\c{X}_k^d(L,\rho)$.
Suppose $\sigma(q)$ and $\SG_Q$ are defined in Lemma \ref{sensitivity of one point} and Theorem \ref{total sensitivity of F(S(L),rho)} respectively.
Then for $\delta,\eps \in (0,1)$ a $\sigma$-sensitive sampling of size
$N = O(\frac{\SG_Q}{\eps^2}(k^3 \log \SG + \log \frac{1}{\delta}))$
provides $\tilde Q$, a strong $(\rho,\eps,\delta)$-coreset; that is
with probability at least $1-\delta$, for all pairs $\gamma_1, \gamma_2 \in \c{X}_k^d(L,\rho)$ we have
\[
	(1-\eps) \dQ(\gamma_1,\gamma_2)\leq  \mathtt{d}_{\tilde{Q},W}(\gamma_1,\gamma_2)
	\leq(1+\eps) \dQ(\gamma_1,\gamma_2).
\]	
\end{theorem}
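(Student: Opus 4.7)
The plan is to assemble the three ingredients already prepared in the preceding sections and feed them into the generic strong coreset construction of Braverman et al.~\cite{VDH2016}[Theorem 5.5]. Conceptually, this theorem says that if $(X,\c{R}')$ is the adjusted range space arising from a sensitivity sampling problem with total sensitivity $\SG$ and shattering dimension $\mathfrak{s}$, then $O(\frac{\SG}{\eps^2}(\mathfrak{s} \log \SG + \log \frac{1}{\delta}))$ samples drawn proportional to $\sigma$ and reweighted by $\tilde w(\tilde q) = \frac{\SG}{N \sigma(\tilde q)} w(\tilde q)$ yield a strong $(0,\eps,\delta)$-coreset for the associated family of cost functions. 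Our task is therefore to verify that the setting here fits into this framework and to track the parameters.

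First, I would recall that in the present setting the measure is given by $w(q_i) = m_i/n$ (so $\sum_{q \in Q} w(q) = 1$), and the cost functions are the $f_{\gamma_1,\gamma_2}$ for $(\gamma_1,\gamma_2) \in \c{X}_k^d(L,\rho)$. By Lemma \ref{sensitivity of one point} applied to this $Q$ and function family (viewed as a subfamily of $F(\Sh(L),\rho)$), each point $q$ has sensitivity bounded by $\sigma(q) \leq C_d\,C_q^{2/(2+d)}(L/\rho)^{2d/(2+d)}$, and by Theorem \ref{total sensitivity of F(S(L),rho)} the total sensitivity is bounded by $\SG_Q$. These are exactly the ingredients Braverman et al.\ require for the sensitivity-sampling side of the argument.

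Next, I would invoke Lemma \ref{bound the demension for trajectory}, which gives a shattering dimension bound of $\mathfrak{s} = O(k^3)$ for the adjusted range space $(Q,\c{T}'_{k,d})$ --- this is the range space whose sublevel sets encode the sensitivity-normalized cost functions $g_{\gamma_1,\gamma_2}$ together with the adjusted weights $w'$. Since the hypotheses of \cite{VDH2016}[Theorem 5.5] are exactly (i) an upper bound $\sigma(\cdot)$ on the sensitivities summing to $\SG_Q$, and (ii) a shattering dimension bound on the adjusted range space, we may plug $\mathfrak{s} = O(k^3)$ and the total sensitivity $\SG_Q$ directly into the sample-size formula to obtain $N = O\!\left(\frac{\SG_Q}{\eps^2}(k^3 \log \SG_Q + \log \frac{1}{\delta})\right)$, at which point the resulting weighted subset $\tilde Q$ with weights $\tilde w$ is a strong $(0,\eps,\delta)$-coreset for $\{f_{\gamma_1,\gamma_2} : (\gamma_1,\gamma_2) \in \c{X}_k^d(L,\rho)\}$.

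Finally, I would translate the coreset guarantee back into a distance guarantee. By construction $\bar f_{\gamma_1,\gamma_2} = \dQ(\gamma_1,\gamma_2)^2$ and the weighted sum on $\tilde Q$ equals $\mathtt{d}_{\tilde Q,W}(\gamma_1,\gamma_2)^2$. The coreset condition $(1-\eps)\bar f_{\gamma_1,\gamma_2} \leq \sum_{\tilde q} \tilde w(\tilde q) f_{\gamma_1,\gamma_2}(\tilde q) \leq (1+\eps)\bar f_{\gamma_1,\gamma_2}$ then yields the claimed multiplicative bound on $\mathtt{d}_{\tilde Q,W}$ after taking square roots (adjusting $\eps$ by a constant factor, absorbed in the $O(\cdot)$), uniformly for all pairs in $\c{X}_k^d(L,\rho)$ with probability at least $1-\delta$. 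There is no real obstacle here since the heavy lifting was done in Lemmas \ref{sensitivity of one point}, \ref{lemma: the bound of C_Q}, and \ref{bound the demension for trajectory}; the only subtlety worth double-checking is that the restriction to pairs with $\dQ \geq \rho$ is compatible with the Braverman et al.\ framework, which it is, because the sensitivity bounds in Lemma \ref{sensitivity of one point} were derived precisely under this restriction, so the coreset inequality is only asserted for pairs in $\c{X}_k^d(L,\rho)$.
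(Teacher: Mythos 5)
Your proposal is correct and follows essentially the same route as the paper: the paper's proof is exactly the direct combination of the sensitivity bounds (Lemma \ref{sensitivity of one point}, Theorem \ref{total sensitivity of F(S(L),rho)}) and the shattering-dimension bound of Lemma \ref{bound the demension for trajectory} plugged into \cite{VDH2016}[Theorem 5.5]. The only cosmetic remark is that taking square roots of $(1\pm\eps)\bar f_{\gamma_1,\gamma_2}$ needs no constant-factor adjustment of $\eps$, since $\sqrt{1+\eps}\leq 1+\eps$ and $\sqrt{1-\eps}\geq 1-\eps$.
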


\section{Trajectory Reconstruction}
\label{sec:recon}

In Section \ref{Approximate the Distance Between Trajectories}, we use $Q$ to convert a piecewise-linear curve $\gamma$
to a vector $v_Q(\gamma)$ in $\R^{|Q|}$, and in this section we study how to recover $\gamma$
from $Q$ and $v_Q(\gamma)$, and we only consider $\gamma$ in $\R^2$.

Let $\T:=\{\gamma=\langle c_0,\cdots,c_k\rangle \mid c_i\in \R^2, k\geq 1\}$  be the set of all piecewise-linear curves in $\mathbb{R}^2$. Each curve in $\T$ is specified by a series of critical points $\langle c_0, c_1, \ldots, c_k \rangle$, and  $k$ line segments $s_1, s_2, \ldots, s_k$,  where $s_i$ is the line segment $\overline{c_{i-1}c_i}$.

For a curve $\gamma \in \T$ and $\tau>0$
we define a family of curves $\T_{\tau} \subset \T$ s.t. each $\gamma \in \T_{\tau}$ has two restrictions:
\begin{itemize}\denselist
\item[(R1)]
Each angle $\angle_{[c_{i-1}, c_i, c_{i+1}]}$ about an internal critical point $c_i$ is non-zero (i.e., in $(0, \pi)$).
\item[(R2)]
Each critical point $c_i$ is \emph{$\tau$-separated}, that is the disk $B(c_i, \tau) = \{x \in \R^2 \mid \|x-c_i\| \leq \tau\}$ only intersects the two adjacent segments $s_{i-1}$ and $s_i$ of $\gamma$, or one adjacent segment for endpoints (i.e., only the $s_1$ for $c_0$ and $s_k$ for $c_k$).
\end{itemize}

We next restrict that all curves (and $Q$) lie in region $\Omega\subset \R^2$.
Let $\T_{\tau}(\Omega)$ be the subset of $\T_{\tau}$ where all curves $\gamma$
have all critical points within $\Omega$, and in particular,
no $c_i \in \gamma$ is within a distance $\tau$ of the boundary of $\Omega$.
Now for $i > 0$, define an infinite grid $G_i = \{g_\eta \in \R^2 \mid  g_\eta = \eta v \text{ for } v = (v_1,v_2) \in \mathbb{Z}^2 \}$, where $\mathbb{Z}$ is all integers.

Suppose $\eta\leq \frac{\tau}{32}$,  $Q = G_\eta \cap \Omega=\{q_1,\cdots,q_n\}$, $\gamma\in \T_\tau(\Omega)$,
$v_i = \min_{p \in \gamma} \|q_i - p\|$ and $v_Q(\gamma) =(v_1,\dots,v_n)$.
We define some notations that are used in this section for the
 implied circle $C_i:=\{x\in \R^2|\  \|x-q_i\|=v_i\}$,
 the closed disk $B_i:=\{x\in \R^2|\  \|x-q_i\|\leq v_i\}$, and
 the open disk $\dot{B}_i:=\{x\in \R^2|\  \|x-q_i\|< v_i\}$ around each $q_i$ or radius $v_i$.
 When the radius is specified as $r$ (with perhaps $r \neq v_i$), then we, as follows, denote the associated circle $C_{i,r}$, closed disk $B_{i,r}$, and open disk $\dot{B}_{i,r}$ around $q_i$.

For $Q$, $\gamma\in \T_\tau(\Omega)$ and $v_Q(\gamma)$ we have the following three observations.
\begin{itemize}\denselist
\item[(O1)]
In any disk with radius less than $\tau$, there is at most one critical point of $\gamma$; by (R2).
\item[(O2)]
If a point moves along $\gamma$, then it can only stop or change direction at critical points of $\gamma$.
\item[(O3)]
For any $q_i\in Q$, $\gamma$ cannot go into  $\dot{B}_i$. Moreover, $C_i$ must contain at least one point of $\gamma$,
and if this point is not a critical point, then $\gamma$ must be tangent to $C_i$ at this point.
\end{itemize}

The restriction (R2) only implies if there is a critical point of $\gamma$, then in its neighborhood $\gamma$ has at most two line segments. However, if there is no critical point in a region, then the shape of $\gamma$ can be very complicated in this region, so
we need to first identify the regions that contain a critical point.

The entire algorithm is overviewed in Algorithm \ref{alg:full-recover}.
For each critical point $c\in \gamma$, there exists $q\in Q$ such that $\dist(q,c)<\eta $.
So to recover $\gamma$, we first traverse $\{q_i\in Q \mid v_i<\eta\}$ and use \textsc{isCritical}$(q_i)$ (Algorithm \ref{the existence of critical point}) to solve the decision problem of if there is a critical point in $B_{i,3\eta}$.
Whenever there is a critical point in $B_{i,3\eta}$, we then use \textsc{FindCritical}$(q_i)$ (Algorithm \ref{find a criticall point}) to find it -- collectively, this finds all critical points of $\gamma$.
Finally, we use \textsc{DetermineOrder} (Algorithm \ref{alg:DOC}) to determine the order of all critical points of $\gamma$, which recovers $\gamma$.

\begin{algorithm}
	\caption{Recover $\gamma\in \T_{\tau}(L)$ from $Q$ and $v_Q(\gamma)$}
	\label{alg:full-recover}
	\begin{algorithmic}
		\STATE Initialize $Q_\eta:=\{q_i\in Q \mid v_i<\eta\}$, close set $Q_{\text{r}}:=\emptyset$, endpoints $E=\emptyset$ and critical points $A:=\emptyset$.
		\FOR {each $q_i\in Q_\eta$}
		\IF { $q_i\in Q_{\text{r}}$ or \textsc{isCritical}$(q_i)$=\textsc{False}}
		\STATE \textbf{continue}
		\ENDIF
		\STATE Let $(c,S):=\textsc{FindCritical}(q_i)$.
		\IF {$|S|=1$}
		\STATE  $E:=E\cup \{(c,S)\}$.    \ \ \ \ \ \ // $c$ is an endpoint of $\gamma$
		\ENDIF
		\STATE Let $A:=A\cup \{(c,S)\}$ and $Q_{\text{r}}:=Q_{\text{r}}\cup \big(Q_\eta \cap B_{c,16\eta}\big)$.   \ \ \ \ \ \ // aggregate critical points
		\ENDFOR		
		\RETURN $\gamma:=\textsc{DetermineOrder}(E,A)$
	\end{algorithmic}
\end{algorithm}

\paragraph{Existence of critical points.}

In Algorithm \ref{the existence of critical point}, we consider the common tangent line of $C_i$ and $C_j$ for all $q_j$ in a neighborhood of $q_i$. If no common tangent line can go through $B_{i,3\eta}$ without going into the interior of any other circle centered in $B_{i,3\eta}$, then it implies there is a critical point of $\gamma$ in $ B_{i,3\eta}$.

\begin{algorithm}
	\caption{\textsc{isCritical}$(q_i)$:  Determine the existence of critical point in $B_{i,3\eta}$}
	\label{the existence of critical point}
	\begin{algorithmic}
		\FOR {each $q_j\in Q_{i,3\eta}\setminus \{q_i\}$}
		  \STATE Let $\ell_{i,j}$ be a common tangent line of $C_i$ and $C_j$.
		  \IF {$\ell_{i,j}$ does not intersect with $\dot{B}_k$ for all $q_k \in Q_{i,3\eta} \setminus \{q_i, q_j\}$}
		    \RETURN \textsc{False}
          \ENDIF
		\ENDFOR
		\RETURN \textsc{True}  \ \ \ \ \ \  // there must be a critical point in $B_{i,3\eta}$
	\end{algorithmic}
\end{algorithm}

\begin{figure}[h]
	\centering
	\includegraphics[width=5.2cm]{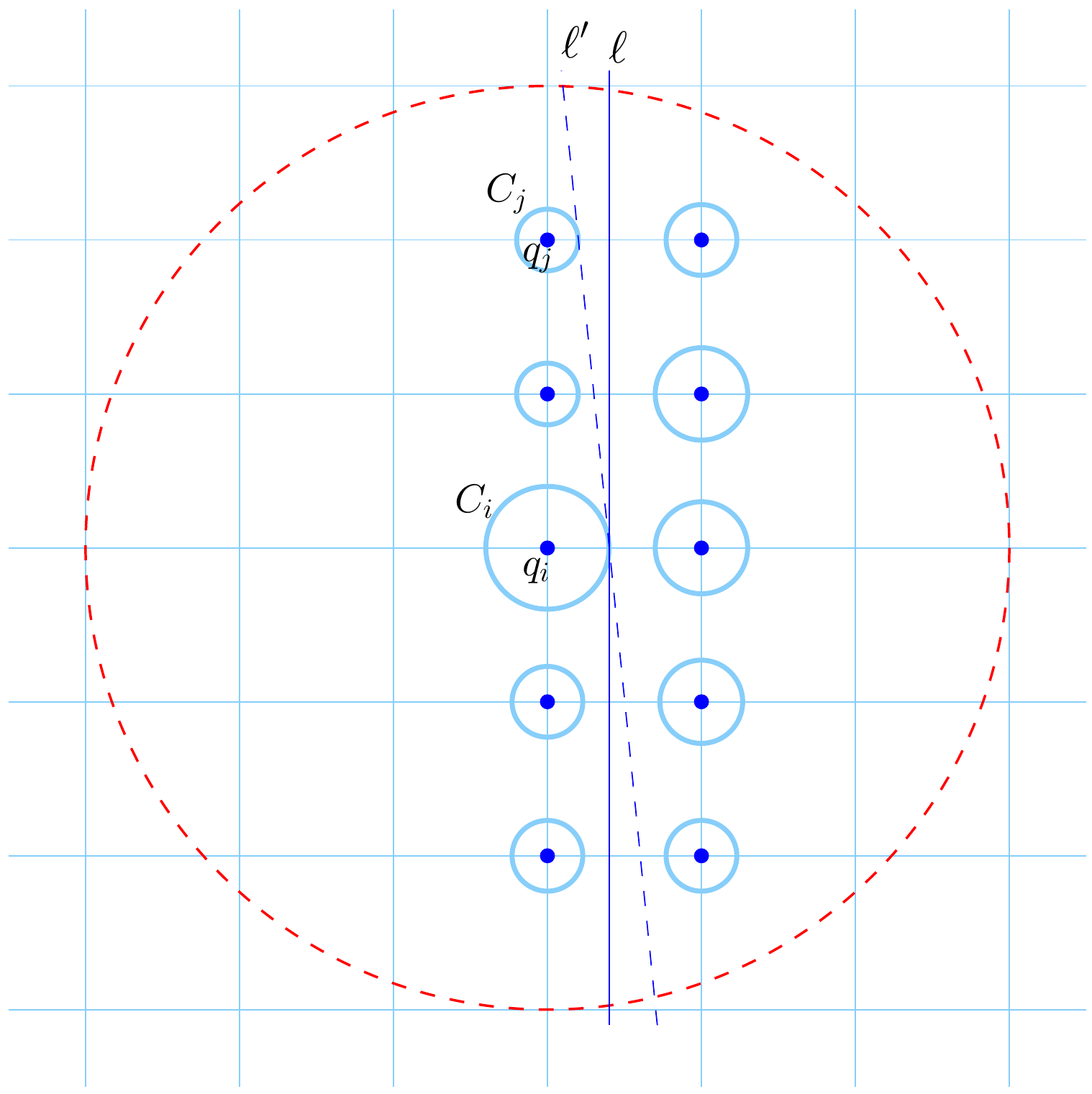}
	\includegraphics[width=5.2cm]{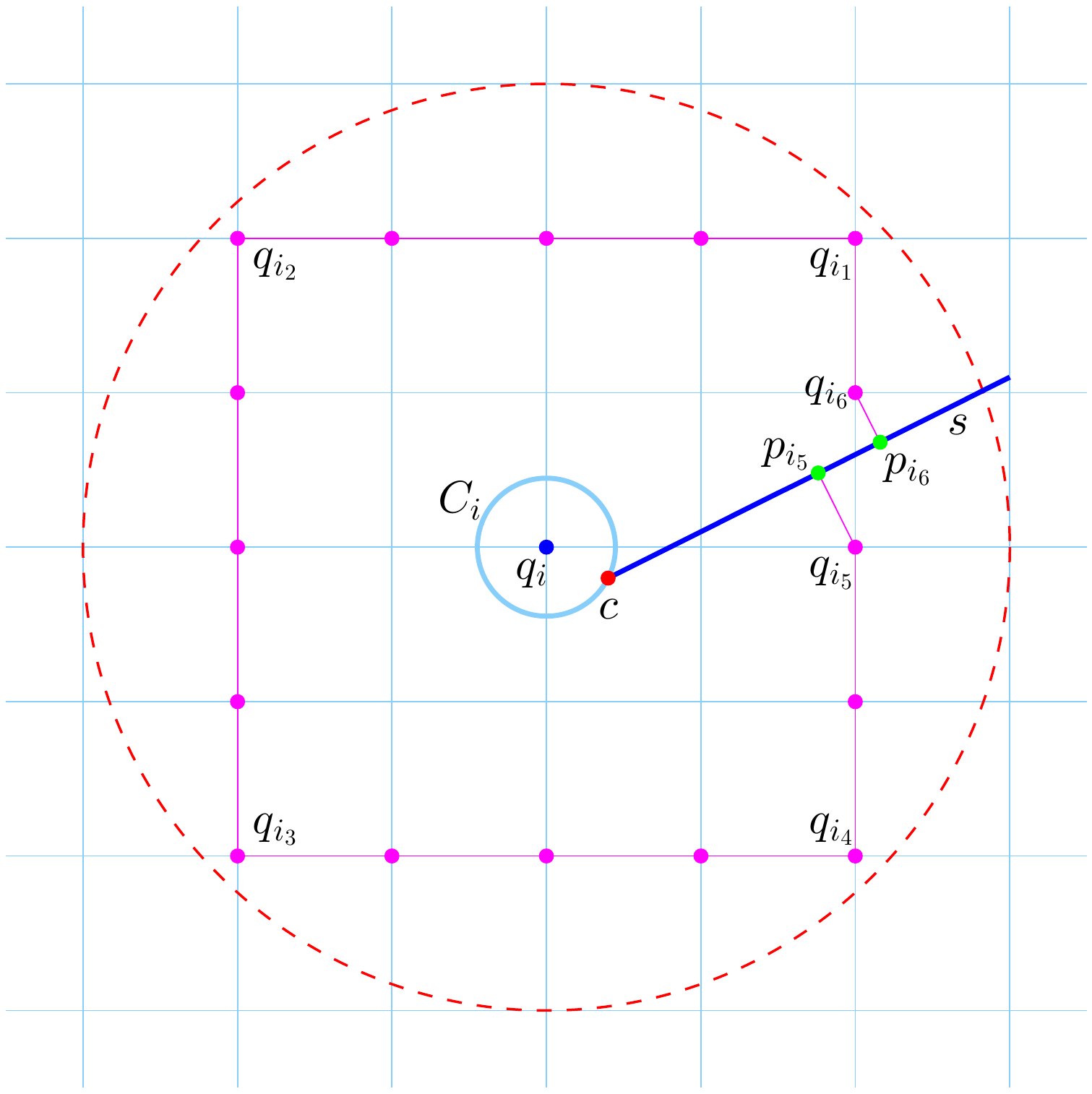}
	\includegraphics[width=5.2cm]{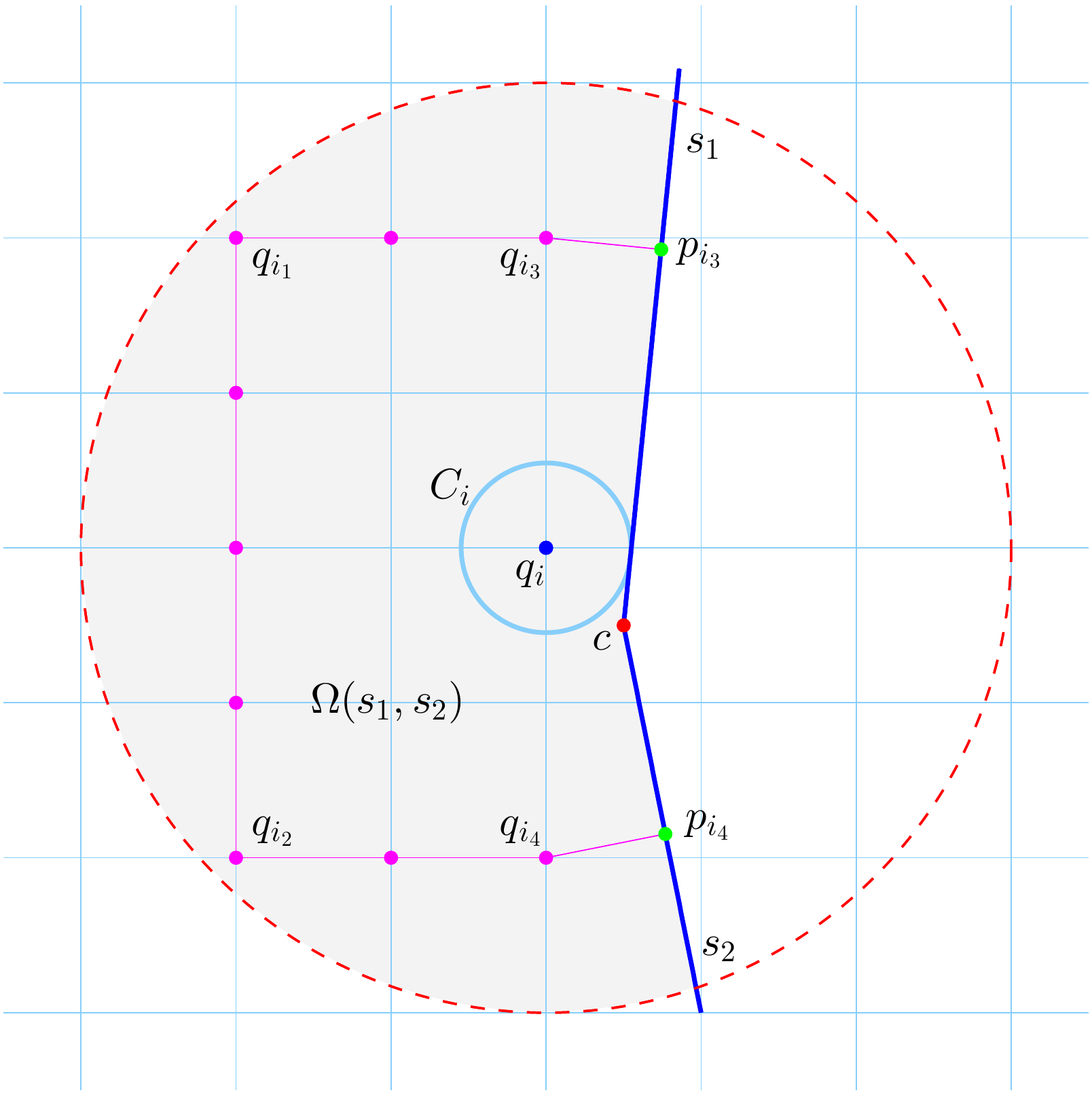}
	
	\vspace{-0.1in}
	\caption{Left: $l$ is tangent to $C_i$. Rotate $l$ around $C_i$ until it is tangent to some $C_j$.
	         Center: $c$ is an endpoint of $\gamma$. Right: $c$ is an internal critical point of $\gamma$.
	         In center and right figures, no tangent line of $C_i$ can go through  $B_{i,3\eta}$
	         without intersecting with the pink curve.}
	\label{fig critical point existence}
	
\end{figure}

\begin{lemma}  \label{lemma critical point existence}
	Suppose $q_i\in Q$ and $v_i<\eta$.
	If \textsc{isCritical}$(q_i)$ (Algorithm \ref{the existence of critical point}) returns \textsc{True}, then there must be a critical point of $\gamma$ in $B_{i,3\eta}$.
	Moreover, for any critical point $c\in \gamma$ there exists some $q_i\in Q$ such that $v_i<\eta$ and \textsc{isCritical}$(q_i)$ (Algorithm \ref{the existence of critical point}) returns \textsc{True} for the input $q_i$.
\end{lemma}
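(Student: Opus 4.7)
The lemma has two halves, which I would prove separately. Part (a): if $\textsc{isCritical}(q_i)$ returns \textsc{True}, then a critical point of $\gamma$ lies in $B_{i,3\eta}$. Part (b): for every critical point $c\in\gamma$ some $q_i\in Q$ with $v_i<\eta$ yields a \textsc{True} return.

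For (a) I would argue by contrapositive. Assume no critical point of $\gamma$ lies in $B_{i,3\eta}$. Combining $v_i<\eta$, $\eta\le\tau/32$, restriction (R2), and observations (O1)--(O2), the portion of $\gamma$ meeting $B_{i,3\eta}$ must be a single straight segment $s$: the absence of critical points rules out any turn of $\gamma$ here, while the $\tau$-separation in (R2) prevents a second strand of $\gamma$ from coming this close to $q_i$. Let $\ell_\gamma$ be the line supporting $s$. By (O3), $\ell_\gamma$ is tangent to $C_i$, and by (O3) again $\ell_\gamma$ does not enter $\dot{B}_k$ for any $q_k\in Q_{i,3\eta}$ whose nearest point on $\gamma$ lies on $s$. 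Because $Q$ is an $\eta$-grid inside $\Omega$, at least one such $q_j\ne q_i$ exists in $Q_{i,3\eta}$, so $\ell_\gamma$ is a valid choice for $\ell_{i,j}$ and Algorithm \ref{the existence of critical point} returns \textsc{False} on iteration $j$.

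For (b), the grid structure $Q=G_\eta\cap\Omega$ supplies some $q_i$ with $\|q_i-c\|_\infty\le\eta/2$, so $v_i\le\|q_i-c\|\le\eta/\sqrt{2}<\eta$. I need to show that for every $q_j\in Q_{i,3\eta}\setminus\{q_i\}$ and every common tangent $\ell$ of $C_i$ and $C_j$, the line $\ell$ enters $\dot{B}_k$ for some $q_k\in Q_{i,3\eta}\setminus\{q_i,q_j\}$. By (R1) the two segments $s,s'$ of $\gamma$ meeting at $c$ form an angle in $(0,\pi)$, and since $\ell$ is tangent to a disk of radius $v_i<\eta$ within $\eta$ of $c$, $\ell$ itself passes within $2\eta$ of $c$. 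I split by orientation: if $\ell$ is approximately parallel to one of $s,s'$, say $s$, then the extension of $\ell$ past $c$ enters a region where $\gamma$ is absent or follows $s'$ at a positive angle, so a grid point $q_k$ on that extension has $\text{dist}(q_k,\ell)=O(\eta)$ while $v_k=\text{dist}(q_k,\gamma)$, measured to the now-departed $\gamma$, is strictly larger; if $\ell$ is parallel to neither $s$ nor $s'$, then $\ell$ crosses $\gamma$ inside $B_{i,3\eta}$ and a grid point $q_k$ on the side of $\ell$ opposite to $\gamma$ near the crossing again satisfies $\text{dist}(q_k,\ell)<v_k$. In both cases $\ell\cap\dot{B}_k\ne\emptyset$, so the algorithm returns \textsc{True}.

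The main obstacle is the quantitative part of (b): exhibiting, for each possible orientation of a common tangent $\ell$, a witness $q_k$ on the $\eta$-grid with $\text{dist}(q_k,\ell)<v_k$. The hypothesis $\eta\le\tau/32$ is what provides the margin, ensuring both that the witness $q_k$ lies inside $B_{i,3\eta}$ and that its foot on $\gamma$ sits on the intended segment so the estimate of $v_k$ is correct. A clean case split on the orientation of $\ell$, combined with the angle lower bound from (R1) and the $\tau$-separation (R2), is what I expect to carry the argument through.
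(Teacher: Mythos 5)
Your part (a) is essentially the paper's argument run in the contrapositive direction: the paper shows that a \textsc{True} return means no tangent line of $C_i$ can escape $B_{i,3\eta}$ while avoiding every $\dot{B}_k$ (reducing arbitrary tangents to common tangents by rotating a tangent line around $C_i$ until it touches some $C_j$), and then invokes (O2)/(O3) to force a critical point; you instead exhibit the supporting line of the lone segment as an escaping common tangent. That is an acceptable reformulation at the paper's level of rigor, though you do assert without justification that some $q_j\in Q_{i,3\eta}$ realizes its \mindist on that segment so that $\ell_\gamma$ really is a common tangent of $C_i$ and $C_j$.

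Part (b) is where your proposal genuinely diverges and where it has a gap you yourself flag but do not close. Your plan is a per-tangent-line case split on orientation: either $\ell$ is ``approximately parallel'' to a segment (witness found past $c$ where $\gamma$ departs), or $\ell$ crosses $\gamma$ at an angle (witness found in the wedge between $\ell$ and $\gamma$). The second case fails quantitatively in the intermediate regime: if $\ell$ crosses a segment $s$ at a small angle $\theta$, the wedge between $\ell$ and $s$ inside $B_{i,3\eta}$ has width on the order of $\eta\theta$ and contains no grid point, while $\ell$ is not parallel enough for your first case to place a witness with $\dist(q_k,\ell)<v_k$ (indeed $v_k$ can itself be $O(\eta)$ for points near $s'$ when the angle at $c$ is close to $\pi$, which (R1) permits). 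No threshold angle or uniform margin is supplied, so the two cases do not cover all common tangents. The paper avoids this entirely with a single global construction: it takes the ring $S_{i,2\eta}$ of grid points at $\ell_\infty$-distance $2\eta$ from $q_i$ (restricted to the exterior angular region $\Omega(s_1,s_2)$ in the internal case), connects it to the projections $p$ of its closest members onto the segment(s), and observes that the resulting piecewise-linear barrier enclosing $C_i$ is covered, except at the points $p$ lying on $\gamma$, by the open disks $\dot{B}_k$ with $q_k\in S_{i,2\eta}$. Every tangent line of $C_i$ must cross this barrier before leaving $B_{i,3\eta}$ and hence must enter some $\dot{B}_k$, uniformly over all orientations. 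Some such enclosure (or another uniform witness construction) is the missing ingredient; as written, your case analysis would not go through.
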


\begin{proof}
	
If Algorithm \ref{the existence of critical point} returns \textsc{True}, then no common tangent of $C_i$ and $C_j$ ($q_j\in Q_{i,3\eta}$) can go through $B_{i,3\eta}$ without intersecting with some $\dot{B}_k$ for $q_k \in Q_{i,3\eta}$.
This implies no tangent line of $C_i$ can go through $B_{i,3\eta}$ without intersecting with some $\dot{B}_k$ for $q_k \in Q_{i,3\eta}$. Otherwise, as shown in Figure \ref{fig critical point existence}(Left), suppose tangent line $\ell$ can go through $B_{i,3\eta}$, then we can rotate $\ell$ around $C_i$ to line $\ell'$ s.t. $\ell'$ is tangent to some $C_j$ ($q_j\in Q_{i,3\eta}$) but does not intersect with  any $\dot{B}_k$ ($q_k \in Q_{i,3\eta}$), which leads to a contradiction.
So, if there is no critical point on $C_i$ then (O3) implies one line segment of $\gamma$ must be tangent
 to $C_i$, but Algorithm \ref{the existence of critical point} checks that no tangent line of $C_i$ can go through $B_{i,3\eta}$ and thus from (O2) we know $\gamma$ must have a critical point in $B_{i,3\eta}$.
	
If $c\in \gamma$ is a critical point, then there are two possibilities: $c$ is an endpoint of $\gamma$, or $c$ is an internal critical point of $\gamma$.
	
If $c$ is an endpoint, let $q_i=(x_i,y_i)$ be the closest point in $Q$ to $c$.
Obviously we have $v_i<\eta$, and there is only one line segment $s$ of $\gamma$ in $B_{i,3\eta}$.
We consider the points set  $S_{i,2\eta}:=\{(x_i+k_1\eta,y_i+k_2\eta) \mid \|(k_1,k_2)\|_\infty = 2 \}$,
i.e. the pink points in Figure \ref{fig critical point existence}(Center).
Without loss of generality, we assume $q_{i_5}=(x_i+2\eta,y_i)$ and $q_{i_6}=(x_i+2\eta,y_i+\eta)$ are the two closest points in $S_{i,2\eta}$ to $s$, and their projection on $s$ are $p_{i_5}$ and $p_{i_6}$ respectively (two green points in Figure \ref{fig critical point existence}(Center)).
Let $q_{i_1}=(x_i+2\eta,y_i+2\eta)$, $q_{i_2}=(x_i-2\eta,y_i+2\eta)$, $q_{i_3}=(x_i-2\eta,y_i-2\eta)$ and $q_{i_4}=(x_i+2\eta,y_i-2\eta)$ be the four pink corners.
Since the radius of $C_i$ is $v_i<\eta$, we know any tangent line of $C_i$ must intersect with the piecewise-linear curve
	      $\langle p_{i_6},q_{i_6},q_{i_1},q_{i_2},q_{i_3},q_{i_4},q_{i_5},p_{i_5} \rangle$
before it passes completely through $B_{i,3\eta}$.
However, the curve
$\langle p_{i_6},q_{i_6},q_{i_1},q_{i_2},q_{i_3},q_{i_4},q_{i_5},p_{i_5} \rangle$
is covered (except points $p_{i_6}$ and $p_{i_5}$) by open disks $\dot{B}_{k}$ whose centers are in $q_k \in S_{i,2\eta} \subset Q_{i,3\eta}$.
So, no tangent line of $C_i$ can go through $B_{i,3\eta}$ without intersecting with some $\dot{B}_k$ for $q_k \in Q_{i,3\eta}$.
	
If $c$ is an internal critical point, then there are two line segments $s_1,s_2$ in $B_{i,3\eta}$.
From (R1) we know the angle between $s_1$ and $s_2$ is less than $\pi$, and we define
	      $\Omega(s_1,s_2):=\{p\in B_{i,3\eta} \mid p \text{ is outside the interior angle}$ $\text{region formed by } s_1 \text{ and } s_2\}$.
Let $q_i=(x_i,y_i)$ be the closest point in $\Omega(s_1,s_2)$ to $c$, and $S_{i,2\eta}$ be defined in the same way as before. We have $v_i<\eta$.
We consider the points set $S_{i,2\eta} \cap \Omega(s_1,s_2)$, i.e. those pink points in Figure \ref{fig critical point existence}(Right).
Without loss of generality, we assume $q_{i_3}=(x_i,y_i+2\eta)$ and $q_{i_4}=(x_i,y_i-2\eta)$ are two closest points in $S_{i,2\eta} \cap \Omega(s_1,s_2)$ to $s_1$ and $s_2$ respectively, and their projection on $s_1$ and $s_2$ are $p_{i_3}$ and $p_{i_4}$ respectively (two green points in Figure \ref{fig critical point existence}(Right)).
In this setting, let $q_{i_1}=(x_i-2\eta,y_i+2\eta)$ and $q_{i_2}=(x_i-2\eta,y_i-2\eta)$ be the corner points of $S_{i,2\eta}$.
Since the radius of $C_i$ is $v_i<\eta$ and the angle formed by $s_1$ and $s_2$ is less than $\pi$, we know any tangent line of $C_i$ must intersect with the piecewise-linear curve
  $\langle p_{i_4},q_{i_4},q_{i_2},q_{i_1},q_{i_3},p_{i_3} \rangle$
before go through $B_{i,3\eta}$.
However, the curve
  $\langle p_{i_4},q_{i_4},q_{i_2},q_{i_1},q_{i_3},p_{i_3} \rangle$
is covered by open disks $\dot{B}_k$ whose centers are $q_k \in S_{i,2\eta} \cap \Omega(s_1,s_2) \subset Q_{i,3\eta}$.
So, we know no tangent line of $C_i$ can pass entirely through $B_{i,3\eta}$ without intersecting with some $\dot{B}_k$ for $q_k \in Q_{i,3\eta}$.

Thus, if $c$ is a critical point of $\gamma$, Algorithm \ref{the existence of critical point} will return \textsc{True} for some $q_i\in Q$ with $v_i<\eta$.
\end{proof}

\paragraph{Finding a critical point.}

If there is a critical point $c$ in $B_{i,3\eta}$, then using (R2) we know in the neighborhood of $c$, $\gamma$ has a particular pattern: it either has one line segment, or two line segments.
We will need two straightforward subfunctions:
\begin{itemize}
  \item \textsc{FCT} (\emph{Find Common Tangents}) takes in three grid points $q_i,q_j,q_k$, and returns the all common tangent lines of $C_j$ and $C_k$ which do not intersect the interior of disks $\dot B_l$ of an disk associated with a point $q_l \in Q_{i,8\eta}$.    This generates a feasible superset of possible nearby line segments which may be part of $\gamma$.
  \item \textsc{MOS} (\emph{Merge-Overlapping-Segments}) takes a set of line segments, and returns a smaller set, merging overlapping segments.  This combines the just generated potential line segments of $\gamma$.
\end{itemize}
Now in Algorithm \ref{find a criticall point}, for each pair $q_j,q_k\in B_{i,8\eta}$, we first use \textsc{FCS} to find the common tangent line of $C_j, C_k$ that could be segments of $\gamma$, and then use \textsc{MOS} to reduce this set down to a minimal set of possibilities $S_m$.
By definition, there must be a critical point $c$, and thus can be at most $2$ actual segments of $\gamma$ within $B_{i,8\eta}$, so we can then refine $S_m$.
We first check if $c$ is an endpoint, in which case there must be only one valid segment.  If not, then there must be $2$, and we need to consider all pairs in $S_m$.  This check can be done by verifying that \emph{every} $C_k$ for $q_k \in Q_{i,8\eta}$ is tangent to the associated ray $\mathsf{ray}(s)$ (for an endpoint) or for the associated rays $\mathsf{ray}(s)$ and $\mathsf{ray}(s')$ for their associated segment pairs (for an internal critical point).

\begin{algorithm}
	\caption{\textsc{FindCritical}$(q_i)$:  Find a critical point in $B_{i,3\eta}$}
	\label{find a criticall point}
	\begin{algorithmic}
		\STATE Let  $Q_{i,8\eta}:=Q\cap B_{i,8\eta}$ and $S_{\text{t}}:=\emptyset$.
		\FOR {each pair $q_j,q_k\in Q_{i,8\eta}$}
		\STATE $S_{\text{t}}:=S_{\text{t}}\cup \textsc{FCT}(q_i,q_j,q_k)$
		\ENDFOR
		\STATE $S_{\text{m}}:=\textsc{MOS}(S_\text{t})$.
		\FOR{each $s \in S_{\text{m}}$}
		\STATE Extend $s$ to ray $\mathsf{ray}(s)$ with endpoint $c$ where it first enters $\dot{B}_{k}$ for some $q_{k}\in Q_{i,8\eta}$.
		\IF  {for all $q_j \in Q_{i,8\eta}$ either $c\in C_j$ or $C_j$ is tangent to $\mathsf{ray}(s)$ (\textsc{EndPoint})}
		\RETURN ($c$, $\{s\}$) \ \ \ \ \ \  // $c$ is an endpoint of $\gamma$
		\ENDIF
		\ENDFOR
		\FOR{each pair $s, s' \in S_{\text{m}}$}
		\STATE  Extend to lines $\ell(s), \ell(s')$.
		\STATE \textbf{if} $\ell(s)$ and $\ell(s')$ do not intersect in $B_{i,8\eta}$ \textbf{continue}
		\STATE  Set $c = \ell(s) \cap \ell(s')$, and define rays from $c$ containing $s$ and $s'$ as $\mathsf{ray}(s)$ and $\mathsf{ray}(s')$.
		\IF  {for all $q_k \in Q_{i,8\eta}$ either $c\in C_k$ or $C_k$ is tangent to $\mathsf{ray}(s)$ or $\mathsf{ray}(s')$	(\textsc{InternalPoint})}
		\RETURN ($c$,  $\{s,s'\}$) \ \ \ \ \ \ // $c$ is an internal critical point of $\gamma$
		\ENDIF
		\ENDFOR
	\end{algorithmic}
\end{algorithm}

%
%
%
%
%

\begin{figure}[h]
	\centering
	\includegraphics[width=5.2cm]{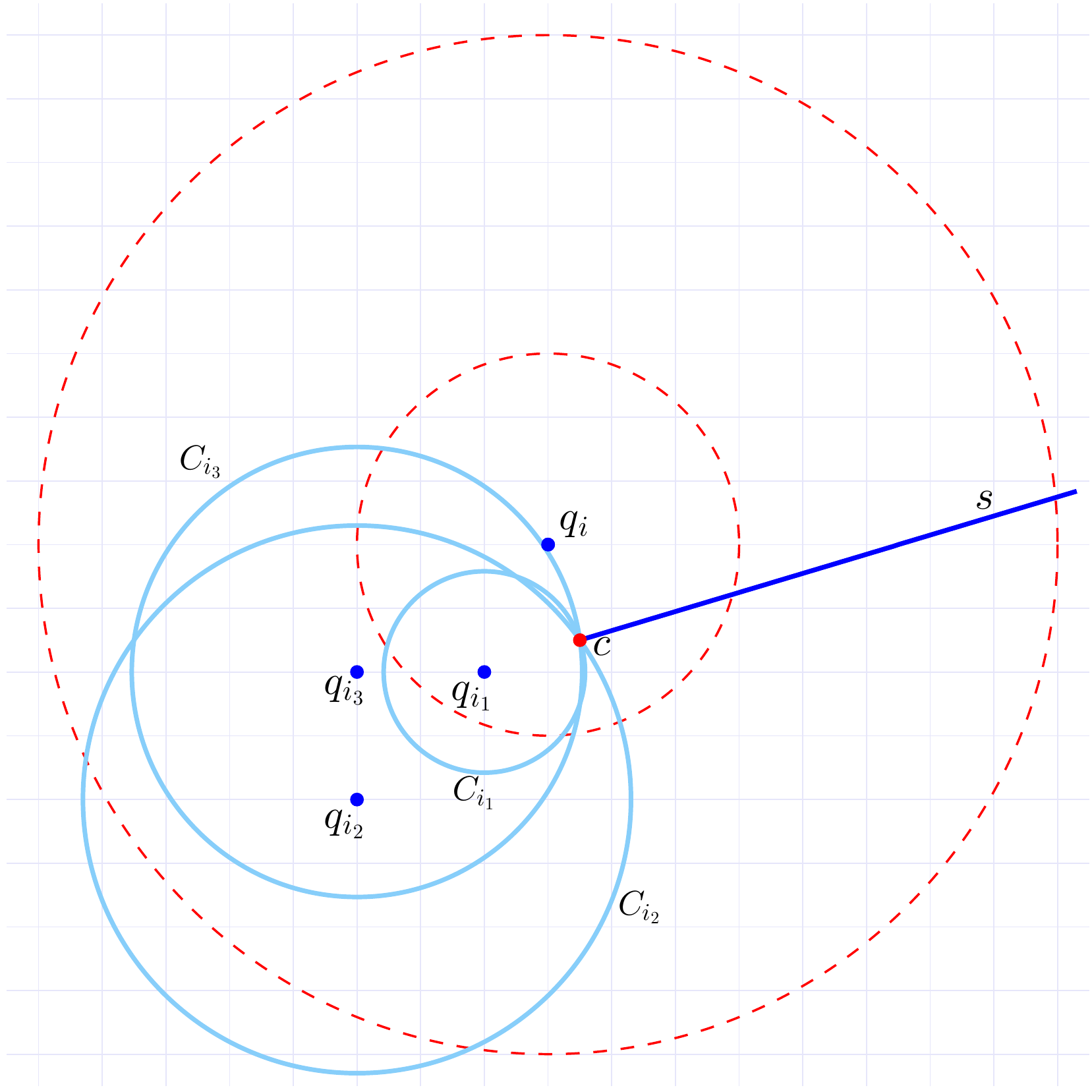}
	\hspace{0.1cm}
	\includegraphics[width=5.2cm]{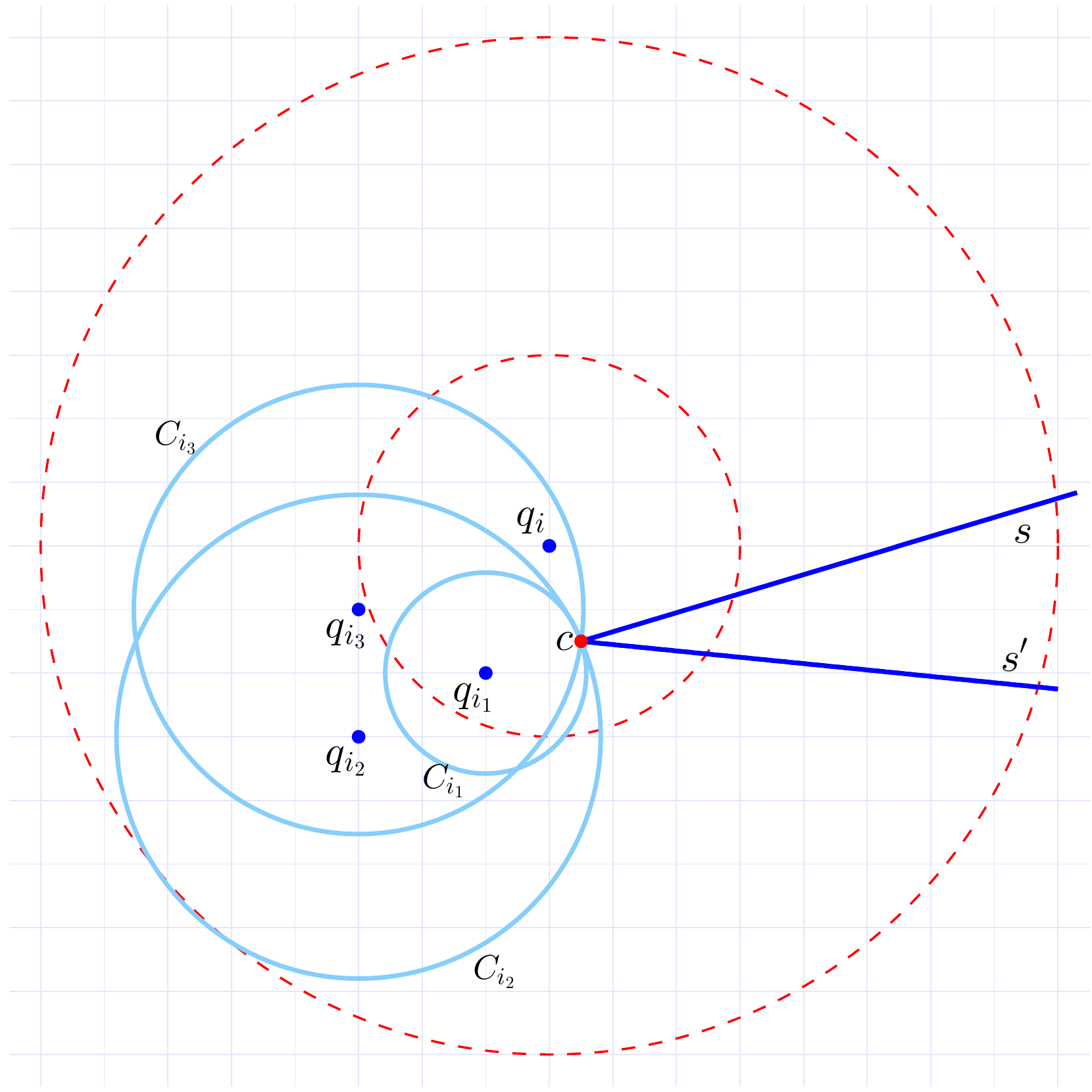}
	\hspace{0.1cm}
	\includegraphics[width=5.2cm]{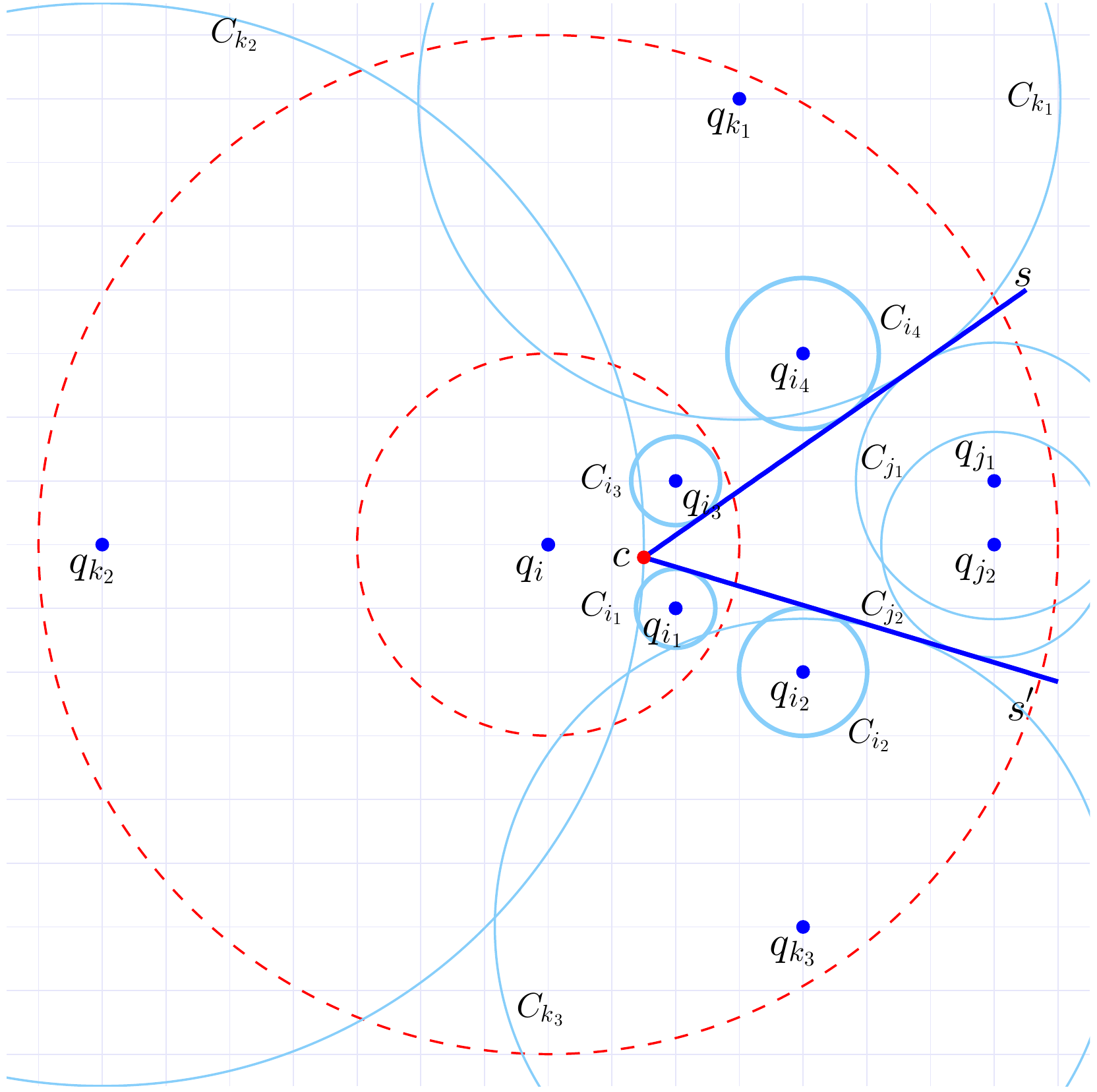}
	
	\vspace{-0.1in}
	\caption{ Left:  $\{c\}=C_{i_1}\cap C_{i_2}\cap C_{i_3}$ and $B_{i_1}\subset B_{i_2}\cup B_{i_3}$.
		      Center:  the angle between $s$ and $s'$ is at most $\frac{\pi}{4}$ and
		      $\{c\}=C_{i_1}\cap C_{i_2}\cap C_{i_3}$ and $B_{i_1}\subset B_{i_2}\cup B_{i_3}$.
		      Right: $C_{i_1}$, $C_{i_2}$ are tangent to $s$, and $C_{i_3}$, $C_{i_4}$ are tangent to $s'$,
		      For each one of these four circles, any tangent line segment, except $s, s'$, cannot  be extended outside $B_{i,8\eta}$ without intersecting with any other circle. }
	\label{fig find a criticall point}
	
\end{figure}

\begin{lemma}  \label{lemma find a criticall point}
Suppose $c'\in B_{i,3\eta}$ is a critical point of $\gamma$, and $(c,S)$ is the output of \textsc{FindCritical}$(q_i)$ (Algorithm \ref{find a criticall point}), then $c=c'$. Moreover, $|S|=1$ if and only if $c$ is an endpoint of $\gamma$.
\end{lemma}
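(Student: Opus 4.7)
The plan is to prove both conclusions together by (i) describing $\gamma \cap B_{i,8\eta}$ and the nearest points from each $q_j \in Q_{i,8\eta}$ explicitly, and then (ii) checking that the \textsc{EndPoint} and \textsc{InternalPoint} tests in \textsc{FindCritical} accept $(c,S)$ if and only if $c=c'$ and $S$ consists of the actual segment(s) of $\gamma$ incident to $c'$.

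For step (i), since $c' \in B_{i,3\eta}$ and $\eta \leq \tau/32$, every $q_j \in Q_{i,8\eta}$ satisfies $\|q_j - c'\| \leq 11\eta$, so $v_j \leq 11\eta$ and the nearest point of $\gamma$ from $q_j$ lies in $B_{c',22\eta}\subset B_{c',\tau}$. By (R2), $\gamma \cap B_{c',\tau}$ consists of at most two adjacent segments meeting at $c'$, and by (O3) the circle $C_j$ either contains $c'$ (if the nearest point from $q_j$ is $c'$ itself) or is tangent to one of these two segments at the corresponding nearest point. Hence $\gamma \cap B_{i,8\eta}$ is exactly the (one or two) actual segments of $\gamma$ incident to $c'$, and each such segment $s^\star$ is a common tangent of several $C_j$'s that avoids every $\dot{B}_k$ for $q_k \in Q_{i,8\eta}$. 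For at least one pair $q_j, q_k$, \textsc{FCT}$(q_i,q_j,q_k)$ therefore produces the supporting line of $s^\star$, and \textsc{MOS} keeps a single representative of it in $S_{\text{m}}$.

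For step (ii), I would handle the two cases separately. If $c'$ is an endpoint of $\gamma$, the unique true segment $s^\star$ extends to a ray whose endpoint is $c'$ (the first place the extension enters some $\dot{B}_k$), and by (i) every $q_j \in Q_{i,8\eta}$ either has $c' \in C_j$ or has $C_j$ tangent to the ray; hence the \textsc{EndPoint} test accepts $(c',\{s^\star\})$, giving $c=c'$ and $|S|=1$. If $c'$ is internal, no single $s \in S_{\text{m}}$ can pass the \textsc{EndPoint} test, because the $q_j$ whose nearest point lies on the opposite adjacent segment witnesses a $C_j$ that is tangent to neither the single ray nor incident to its candidate endpoint; the algorithm then reaches the \textsc{InternalPoint} loop and accepts the pair of true supporting lines, whose intersection in $B_{i,8\eta}$ is $c'$, giving $c=c'$ and $|S|=2$. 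The main obstacle will be uniqueness within $S_{\text{m}}$: ruling out near-miss spurious lines that happen to be tangent to many circles yet do not support a real segment. This I would handle by observing that any such line must pass within distance $O(\eta)$ of some grid point $q_\ell \in Q_{i,8\eta}$ whose nearest point on $\gamma$ lies strictly inside a true segment, so that $C_\ell$ fails both tangency to the spurious ray(s) and incidence to the spurious $c$; the requisite covering/density argument combines the $\eta$-spacing of $Q$ with (R2) and (O3) in the same spirit as the proof of Lemma \ref{lemma critical point existence}.
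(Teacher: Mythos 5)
Your overall architecture is right (completeness: the true segment(s) pass the test; soundness: nothing else can pass), and your completeness half, together with the observation that no single segment passes \textsc{EndPoint} when $c'$ is internal, matches the paper. But the soundness half --- showing that a candidate $(c,S)$ which \emph{does} pass \textsc{EndPoint} or \textsc{InternalPoint} must satisfy $c=c'$ --- is the mathematical core of the lemma, and you leave it as an acknowledged sketch (``the main obstacle\dots this I would handle by\dots''). Your proposed witness argument (``any spurious line must pass within $O(\eta)$ of some $q_\ell$ whose circle fails both conditions'') is not carried out and is not obviously true as stated: the test only requires each $C_j$ to be tangent to $\mathsf{ray}(s)$ \emph{or} to contain $c$, so a spurious candidate could in principle satisfy one clause for each circle separately, and ruling this out is precisely what needs a construction, not an appeal to density.

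The paper closes this gap with explicit configurations. For a candidate passing \textsc{EndPoint} (and for narrow-angle internal candidates), it exhibits three points $q_{i_1},q_{i_2},q_{i_3}\in Q_{i,8\eta}$ with $\{c\}=C_{i_1}\cap C_{i_2}\cap C_{i_3}$ and $B_{i_1}\subset B_{i_2}\cup B_{i_3}$, so that by (O3) the curve must touch $C_{i_1}$ at the single available point $c$; and since the tangent line of $C_{i_1}$ at $c$ enters $\dot B_{i_2}\cup\dot B_{i_3}$, the curve cannot be tangent there, so $c$ must be a critical point, whence $c=c'$ by (O1). For wide-angle internal candidates it instead chooses four mutually disjoint circles tangent to $s$ and $s'$ plus additional ``blocking'' circles inside and outside the angular region, so that any path of $\gamma$ leaving $B_{i,8\eta}$ other than along $\mathsf{ray}(s)\cup\mathsf{ray}(s')$ would enter some $\dot B_k$. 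Your proof would need to supply constructions of this kind (or an equivalent pinning argument) before it is complete; as written, the hardest step is asserted rather than proved.
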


\begin{proof}
Since $c'\in B_{i,3\eta}$ and $\eta<\frac{\tau}{32}$, we have $B_{i,8\eta}\subset B(c',\frac{\tau}{2})$.
So, from (R2) we know in $B_{i,8\eta}$, $\gamma$ either has one line segment which means $c'$ is an endpoint, or has two line segments which means $c'$ is an internal critical point.

If $c'$ is an endpoint, then the line segment of $\gamma$ must satisfy Condition \textsc{EndPoint} in Algorithm \ref{find a criticall point}.
Moreover, if in Algorithm \ref{find a criticall point} $s$ satisfies Condition \textsc{EndPoint}, then $c$ must be a critical point of $\gamma$.
This is because, as show in Figure \ref{fig find a criticall point}(Left), there exists three points $q_{i_1}, q_{i_2},q_{i_3}\in Q_{i,8\eta}$ such that $\{c\}=C_{i_1}\cap C_{i_2}\cap C_{i_3}$ and $B_{i_1}\subset B_{i_2}\cup B_{i_3}$ and the tangent of $C_{i_1}$ at $c$ intersects with $\dot{B}_{i_2}\cup \dot{B}_{i_3}$.
This can be seen by observing there must exists points $q_{i_2}, q_{i_3} \in Q_{i,8\eta}$ which are
 (i) on the opposite side from  $s$ of the perpendicular to $s$ through $c$,
 (ii) are a distance at least $3 \eta$ from $c$, and
 (iii) within a distance of $3 \eta$ from each other.
This implies there exists another point $q_{i_1} \in Q \cap B_{i_2} \cap B_{i_3}$ and with $v_{i_1} \leq 2 \eta$.  Hence $B_{i_1}$ must be contained in $B_{i_2} \cup B_{i_3}$.
Thus,  (O3) implies $c$ is a critical point of $\gamma$, and from (O1) we know $c=c'$.

If $c'$ is an interior point, then as show in Figure \ref{fig find a criticall point}(Center and Right), no line segment can satisfy Condition \textsc{EndPoint} in Algorithm \ref{find a criticall point}, so the algorithm will not stop before the third loop.
Then the two line segments of $\gamma$ with $c'$ as the common endpoint can satisfy Condition \textsc{InternalPoint}.
Moreover, if $s$ and $s'$ satisfy Condition \textsc{InternalPoint}, then we will show $c$ must be a critical point of $\gamma$.
There are two possibilities: the angle between $s$ and $s$ at most $\frac{\pi}{4}$, or greater than $\frac{\pi}{4}$.

If the angle is less than or equal to $\frac{\pi}{4}$, then as shown in Figure \ref{fig find a criticall point}(Center),
there exists three points $q_{i_1}, q_{i_2},q_{i_3}\in Q_{i,8\eta}$ such that $\{c\}=C_{i_1}\cap C_{i_2}\cap C_{i_3}$ and $B_{i_1}\subset B_{i_2}\cup B_{i_3}$ and the tangent of $C_{i_1}$ at $c$ intersects with $\dot{B}_{i_2}\cup \dot{B}_{i_3}$.  This follows by the same argument as when $c$ is an endpoint.
So,  (O3) implies $c$ is a critical point of $\gamma$, and from (O1) we know $c=c'$.

If the angle is greater than $\frac{\pi}{4}$, then as shown in Figure \ref{fig find a criticall point}(Right), there exists four points $q_{i_1}, q_{i_2},q_{i_3}, q_{i_4}\in Q_{i,8\eta}$ outside the interior angular region, and such that
$C_{i_1}$, $C_{i_2}$ are tangent to $s'$, and $C_{i_3}$, $C_{i_4}$ are tangent to $s$.
Moreover, these four circles can be chosen to not intersect with each other.
Next we can argue that because the angle is sufficiently large, we can block a path from $c'$ to outside of $B_{i,8\eta}$ both inside the interior angular region, and outside it.
Outside this region, we can choose three points in $q_{k_1}, q_{k_2}, q_{k_3} \in Q_{i,8\eta}$ of which $C_{k_1}$ is incident to $\mathsf{ray}(s)$, $C_{k_2}$ is incident to $c'$, and $C_{k_3}$ is incident to $\mathsf{ray}(s')$; and that $\dot{B}_{k_1}$ and $\dot B_{k_2}$ intersect and $\dot B_{k_2}$ and $\dot B_{k_3}$ intersect.
Similarly, inside the interior angular region, we can chose two points $q_{j_1}, q_{j_2} \in Q_{i,8\eta}$ so $C_{j_1}$ and $C_{j_2}$ are incident to $\mathsf{ray}(s)$ and $\mathsf{ray}(s')$, respectively, and that $\dot B_{j_1}$ and $\dot B_{j_2}$ intersect.
These two sets of points blocks any other straight path from $c'$ (required by (O2)) from existing $B_{i,8\eta}$ (required by (O1)) without entering the interior of some $\dot B_k$.  And the first four points $q_{i_1}, q_{i_2},q_{i_3}, q_{i_4}$ ensures that this $c'$ is unique (by (O1)) and $c' = c$ must be a critical point on $\gamma$.
\end{proof}


Using 	Algorithm \ref{the existence of critical point} and \ref{find a criticall point}
we can find all critical points $(E,A)$ with associated line segments of $\gamma$, so the final step is to use function \textsc{DetermineOrder}$(E,A)$ (Algorithm \ref{alg:DOC}) to determine their order, as we argue it will completely recover $\gamma$.


\begin{algorithm}
	\caption{\label{alg:DOC} \textsc{DetermineOrder}$(E,A)$: Determine the order of critical points}
	\begin{algorithmic}

		\STATE Choose any $(c_0,S_0)\in E$, let $k=|A|-1$, $A:=A\setminus \{(c_0,S_0)\}$, $s_1\in S_0$ and $\gamma:=\langle c_0\rangle$.
		\FOR {$i=1$ \textbf{to} $k$}
 		  \STATE  Find closest $c$ from $(c,S) \in A$ to $c_{i-1}$ such that $c$ is on $\mathsf{ray}(s_{i})$,
                  and let $A := A \setminus \{(c, S)\}$.
		  \STATE  Append $c_i = c$ to $\gamma$, and if $i<k$ then let $s_{i+1}= s$ where  $s \in S$ is not parallel with $s_{i}$.
		\ENDFOR
		\RETURN $\gamma$
	\end{algorithmic}
\end{algorithm}

\begin{theorem} 
	Suppose $Q = G_\eta \cap \Omega$, $\eta \leq \frac{\tau}{32}$, and $v_Q(\gamma)$ is generated by $Q$ and $\gamma\in \T_\tau(\Omega)$, then Algorithm \ref{alg:full-recover} can recover $\gamma$ from $v_Q(\gamma)$ in $O(|Q|+k^2)$ time, where $k$ is the number of line segments of $\gamma$.
\end{theorem}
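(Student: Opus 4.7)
The plan is to split the argument into three pieces: (i) every critical point of $\gamma$ is discovered exactly once, (ii) once the critical points and their incident segments are known, \textsc{DetermineOrder} produces the correct ordering, and (iii) the running time is as claimed.

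For (i), the discovery step relies on the two lemmas already proved. By Lemma \ref{lemma critical point existence}, each critical point $c$ of $\gamma$ has some $q_i\in Q_\eta$ (in fact, the nearest grid point) for which \textsc{isCritical}$(q_i)$ returns \textsc{True}, and the same lemma together with Lemma \ref{lemma find a criticall point} guarantees that the returned pair $(c,S)$ is exact: the location of the critical point is correct and $|S|=1$ iff $c$ is an endpoint of $\gamma$. The only thing left to verify is that the aggregation step $Q_{\text{r}}:=Q_{\text{r}}\cup(Q_\eta\cap B_{c,16\eta})$ neither (a) marks off a grid point lying near another critical point nor (b) fails to block rediscovery of $c$ from any other $q_j$ that passes \textsc{isCritical}. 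For (a), note that any two distinct critical points $c,c'$ of $\gamma$ satisfy $\|c-c'\|\geq\tau$: if $\|c-c'\|<\tau$ then $c'\in B(c,\tau)$, forcing a segment incident to $c'$ to intersect $B(c,\tau)$, and since at most two segments at $c$ are allowed in $B(c,\tau)$ by (R2), this violates $\tau$-separation. Since $16\eta\leq\tau/2<\tau$, each ball $B_{c,16\eta}$ contains at most one critical point. For (b), any $q_j$ with $v_j<\eta$ for which \textsc{isCritical}$(q_j)=\textsc{True}$ must sit within $3\eta$ of a critical point, so it falls inside $B_{c,16\eta}$ and is skipped on later iterations. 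Thus the loop produces exactly the pair $(E,A)$ of all endpoint and internal critical points of $\gamma$, each paired with its incident segment set.

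For (ii), I would argue by induction that the invariant ``$\langle c_0,\dots,c_{i-1}\rangle$ is a prefix of $\gamma$ (up to reversal) and $s_i$ is the segment of $\gamma$ beginning at $c_{i-1}$'' is maintained. The base case picks an endpoint, which must be one of the two endpoints of $\gamma$, and its unique incident segment. For the step, the next critical point along $\gamma$ after $c_{i-1}$ is the other endpoint of $s_i$; by (O2) the curve does not change direction in the interior of $s_i$, so this endpoint lies on $\mathsf{ray}(s_i)$ and no other critical point lies between $c_{i-1}$ and it on $\mathsf{ray}(s_i)$ (otherwise a third segment would emanate from an interior point of $s_i$, contradicting the segment structure). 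Hence the nearest remaining critical point on $\mathsf{ray}(s_i)$ is precisely $c_i$. The tie-breaking choice of $s_{i+1}$ as the non-parallel element of $S$ is legal because (R1) forbids collinear incoming/outgoing segments at an internal critical point.

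For (iii), each of the $|Q_\eta|\leq|Q|$ iterations only inspects the $O(1)$ grid points in $Q_{i,3\eta}$ (inside \textsc{isCritical}) and $Q_{i,8\eta}$ (inside \textsc{FindCritical}), since the grid spacing is $\eta$; so the subroutines \textsc{FCT}, \textsc{MOS}, and the tangency/ray tests all cost $O(1)$ per call, and the aggregation updates take $O(1)$ amortized. This yields $O(|Q|)$ for the critical-point discovery phase. \textsc{DetermineOrder} performs $k$ iterations, each scanning at most $|A|=O(k)$ remaining critical points for the ray test, giving $O(k^2)$. Summing gives the stated $O(|Q|+k^2)$.

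The main obstacle I expect is the correctness of \textsc{DetermineOrder}, specifically proving that the nearest remaining critical point on $\mathsf{ray}(s_i)$ is always the true successor $c_i$: this needs a careful use of (R2) to rule out any spurious critical point of $\gamma$ lying strictly between $c_{i-1}$ and $c_i$ on the line through $s_i$, combined with (O2) to rule out detours. The rest of the argument is essentially a bookkeeping combination of the preceding lemmas.
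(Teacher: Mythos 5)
Your proposal is correct and follows essentially the same route as the paper's proof: invoke Lemmas \ref{lemma critical point existence} and \ref{lemma find a criticall point} to certify that all critical points and their incident segments are found, argue via (O2) and (R1)/(R2) that \textsc{DetermineOrder} recovers the correct sequence, and bound the time by $O(|Q|)$ constant-work iterations plus $O(k^2)$ for the ordering. You in fact supply more detail than the paper does on two points it leaves implicit --- that the $B_{c,16\eta}$ aggregation deduplicates correctly because distinct critical points are at least $\tau$ apart, and the induction showing the nearest remaining critical point on $\mathsf{ray}(s_i)$ is the true successor --- so no gap to report.
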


\begin{proof}
From Lemmas \ref{the existence of critical point} and \ref{find a criticall point} we know Algorithms \ref{the existence of critical point} and \ref{find a criticall point}  identify all critical points of $\gamma$, and the line segments of $\gamma$ associated with each critical point.
So we only need to show Algorithm \ref{alg:DOC} determines the correct order of critical points.
This is because if a point moves along $\gamma$ it cannot stop or change direction until it hits a critical point (Observation (O2)), and when it hits a critical point it has to stop or change direction, otherwise it will violate (R1) or (R2).
So, Algorithm \ref{alg:DOC} starts from an endpoint and moves along the direction of line segment associated with it, and changes the direction only after arriving at the next critical point, until all critical points are visited.
This gives the correct order of critical points of $\gamma$.

Moreover, the running time of Algorithm \ref{the existence of critical point} and \ref{find a criticall point} are constant, since they both only examine a constant number of points, circles, etc in each $B_{i,3\eta}$ or $B_{i,8\eta}$.  And these can be retrieved using the implicit grid structure in constant time.
Thus the \textbf{for} loop in Algorithm \ref{alg:full-recover} takes $O(|Q|)$ time.
%
The final Algorithm \ref{alg:DOC} to recover the order takes $O(k^2)$ time, since a constant fraction of steps need to check a constant fraction of all critical points in $A$.  So, the total running time of this algorithm is $O(|Q|+k^2)$.
\end{proof}

\section{Conclusion}
\label{sec:conclude}

In this paper we analyze sketches via the $v_i(J)  = \mindist(q_i,J) = \inf_{p \in J} \|p-q_i\|$ procedure, for a variety of geometric objects, and show how many and how measurement points $Q$ can be chosen.  Collecting $n$ values $v_Q(J) = (v_1(J), \ldots, v_n(J))$ leads to a simple to use and natural distance $\dQ(J_1, J_2) = \|v_Q(J_1) - v_Q(J_2)\|$.

For hyperplanes, the sensitivity sampling framework can be applied fairly directly to chose $Q$, and requires about $(d/\eps^2)$ points or $O(d^2 \log d / \eps^2)$ for stronger guarantees.
However, for more general objects we show that a resolution parameter $L/\rho$ needs to be introduced, and affects the sample size even in $\R^2$.  For instance, when the goal is to represent shapes by their \mindist function as defined over a domain $[0,L]^2$, then $\Theta(\frac{L}{\rho} \frac{1}{\eps^2})$ samples are required for $\eps$-error.
For the case of piecewise-linear curves (e.g., trajectories) we can provide even stronger error guarantees.  By bounding an associated shattering dimension for curves of length at most $k$, we can provide strong approximation guarantees on $\dQ$ using roughly $\frac{L}{\rho} \log^2 \frac{L}{\rho} \cdot k^3 \cdot \frac{1}{\eps^2}$ samples.  Moreover, we can exactly recover the trajectory $\gamma$ using only its \mindist sketched vector $v_Q(\gamma)$.

While a companion paper~\cite{PT19a} has provided experimental results which demonstrate this distance $\dQ$ is convenient and powerful in trajectory classification tasks, many other open questions remain.  These include extending similar representations to other tasks, in theory and in practice.  Moreover, our bounds rely on a few related minimum resolution parameters
  $\rho \leq \min_{\gamma_1,\gamma_2} \dQ(\gamma_1, \gamma_2)$,
  $\tau$ is a gap between points in a grid $Q$ required to recover $\gamma$, and
  $\eta = \min_{q,q' \in Q \;\; q\neq q'} \|q-q'\|$ is a pairwise minimum distance on the multiset $Q$.
While these parameters are in all scenarios we considered asymptotically equivalent, it would be useful to unify these terms in a single theory.
Finally, we would like to show not just exact trajectory recovery (under conditions on $\gamma$ and $Q$), but also to loosen those restrictions and provided topological recovery (e.g., geometric inference~\cite{GeomInf}) conditions for boundaries of compact sets.

\newpage
\bibliographystyle{abbrv}

\normalsize

\newpage
\appendix

\section{A Lemma Used in Section \ref{Approximate the Distance Between Trajectories}}  \label{proof of theorem sensitivity}

\begin{lemma} \label{bound the dimension of range space intersection and union}
	\label{the dimeison union intersection}
	Suppose $Q\subset \R^d, X_1\subset \R^{d_1}, X_2\subset \R^{d_2}$, and
	$\c{R}_1=\{ \{q\in Q|\ g_1(q,x)\leq 0\} |\ x\in X_1\}$,
	$\c{R}_2=\{ \{q\in \R^2|\ g_2(q,x)\leq 0\}|\  x\in X_2\}$ where $g_1, g_2$ can be any fixed real functions. Define
	$\c{R}_3=\{\{q\in \R^2|\ g_1(q,x_1)\leq 0 \} \cap \{q\in \R^2|\ g_2(q,x_2)\leq 0\}|\ x_1\in X_1,x_2\in X_2\}$,
	$\c{R}_4=\{\{q\in \R^2|\ g_1(q,x_1)\leq 0 \} \cup \{q\in \R^2|\ g_2(q,x_2)\leq 0\}|\ x_1\in X_1,x_2\in X_2\}$.
	If $\text{dim}(\R^2,\c{R}_1)=s_1$ and $\text{dim}(\R^2,\c{R}_2)=s_2$, then $\text{dim}(\R^2,\c{R}_3)\leq s_1+s_2$
	and $\text{dim}(\R^2,\c{R}_4)\leq s_1+s_2$.
\end{lemma}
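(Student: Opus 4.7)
The plan is to bound the shattering dimension of the combined range spaces by directly counting distinct traces on an arbitrary finite subset, exploiting the fact that set intersection and union distribute over the trace operation $S \cap (\cdot)$. This is a classical combinatorial argument that goes back to standard references on VC-type complexity, and no heavy machinery is required.

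First I would fix an arbitrary finite $S \subset Q$ with $|S| = m$ and invoke the definition of shattering dimension: by assumption we have $|\{S \cap R_1 : R_1 \in \c{R}_1\}| \leq m^{s_1}$ and $|\{S \cap R_2 : R_2 \in \c{R}_2\}| \leq m^{s_2}$. The next step is the key identity. For any $R_3 = \{q : g_1(q,x_1) \leq 0\} \cap \{q : g_2(q,x_2) \leq 0\} \in \c{R}_3$ one has
\[
S \cap R_3 = \big(S \cap \{q : g_1(q,x_1)\leq 0\}\big) \cap \big(S \cap \{q : g_2(q,x_2)\leq 0\}\big),
\]
so each trace in $\c{R}_3$ on $S$ is uniquely determined by a pair of traces from $\c{R}_1$ and $\c{R}_2$ on $S$. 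Thus the number of distinct traces is at most $m^{s_1} \cdot m^{s_2} = m^{s_1+s_2}$, which by the definition of shattering dimension (smallest exponent valid for all $S$) yields $\dim(\R^2,\c{R}_3) \leq s_1 + s_2$.

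For $\c{R}_4$ I would repeat the same argument using $S \cap (R_1 \cup R_2) = (S \cap R_1) \cup (S \cap R_2)$; again each trace in $\c{R}_4$ on $S$ is determined by an ordered pair of traces from $\c{R}_1$ and $\c{R}_2$, so the count is again bounded by $m^{s_1+s_2}$ and therefore $\dim(\R^2,\c{R}_4) \leq s_1 + s_2$.

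There is no real obstacle here: the only thing worth being careful about is the definition of shattering dimension the paper uses (the trace count bound $|S|^{\mathfrak{s}}$ quoted just before Lemma \ref{bound the demension for hyperplane}), and the observation that the bound holds for every $S$ uniformly, so the map from pairs of traces to combined traces need not be injective, just surjective onto the set we are counting. The argument is agnostic to the specific form of $g_1, g_2$ and to the dimensions $d_1, d_2$ of the parameter spaces, which matches the generality asserted in the lemma.
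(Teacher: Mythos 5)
Your proposal is correct and follows essentially the same route as the paper's own proof: fix a finite subset, use the identity $S \cap (R_1 \cap R_2) = (S \cap R_1) \cap (S \cap R_2)$ (and its union analogue) to map each trace of $\c{R}_3$ or $\c{R}_4$ to a pair of traces from $\c{R}_1$ and $\c{R}_2$, and multiply the two trace-count bounds to get $|S|^{s_1+s_2}$. No gaps.
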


\begin{proof}
	Suppose $G\subset \R^2$ and $|G|\leq \infty$, then we have
	\begin{equation}
	\{G\cap R|\ R\in \c{R}_3\}= \{(G\cap R_1)\cap(G\cap R_2)|\ R_1\in \c{R}_1, R_2\in \c{R}_2  \}.
	\end{equation}
	So, we have
	\begin{equation}
	\begin{split}
	&|\{G\cap R|\ R\in \c{R}_3\}|= |\{(G\cap R_1)\cap(G\cap R_2)|\ R_1\in \c{R}_1, R_2\in \c{R}_2  \}|\\
	\leq& |\{G\cap R_1|\ R_1\in \c{R}_1\}|\times|\{G\cap R_2|\ R_2\in \c{R}_2\}|\leq |G|^{s_1}|G|^{s_2}
	=|G|^{s_1+s_2}.
	\end{split}
	\end{equation}
	which implies  $\text{dim}(\R^2,\c{R}_3)\leq s_1+s_2$, and similarly we have $\text{dim}(\R^2,\c{R}_4)\leq s_1+s_2$.	
\end{proof}

\section{Sensitivity Computation and its Relationship with Leverage Score}
\label{sec:levage}

In this section, we describe how to compute the sensitivity score $\sigma(x_i)$ for each $x_i \in Q$.
To this end, we can invoke a theorem about vector norms by Langberg and Shulman~\cite{MLLJ2010}:

\begin{lemma}[Theorem 2.2 in ~\cite{MLLJ2010}]
	\label{lemma sensitivity}
	Suppose $\mu$ is a probability measure on a metric space $X$, and $V=\{v:X\mapsto \R\}$ is a real vector space of dimension $\kappa$. Let $F=\{f: X\mapsto [0,\infty)\ | \ \exists\ v\in V \text{ s.t. } f(x)=v(x)^2,\ \forall x\in X \}$, and $\{v^{(1)},\cdots,v^{(\kappa)}\}$ be an orthonormal basis for $V$ under the inner product $\langle u,v \rangle:=\int _X u(x)v(x) \dir\mu(x)$, $\forall u,v \in V$. Then, $\sigma_{F,X,\mu}(x)=\sum _{i=1}^\kappa v^{(i)}(x)^2$ and $\mathfrak{S}(F)=\kappa$.
\end{lemma}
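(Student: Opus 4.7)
The plan is to exploit the linearity of the coordinate representation in the orthonormal basis and reduce the sensitivity computation at a fixed $x$ to a Cauchy--Schwarz inequality.

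First, I parameterize an arbitrary $f \in F$ by coefficients in the given basis. Since $V$ has dimension $\kappa$ with orthonormal basis $\{v^{(1)},\dots,v^{(\kappa)}\}$, every $v \in V$ admits a unique expansion $v = \sum_{i=1}^\kappa a_i v^{(i)}$, so each $f \in F$ takes the form $f(y) = v(y)^2 = \bigl(\sum_{i=1}^\kappa a_i v^{(i)}(y)\bigr)^2$ for some $a = (a_1,\dots,a_\kappa) \in \R^\kappa$. Using the orthonormality of the basis under $\langle\cdot,\cdot\rangle$, I would compute
\[
\bar f = \int_X v(y)^2 \,\dir\mu(y) = \sum_{i,j=1}^\kappa a_i a_j \langle v^{(i)}, v^{(j)}\rangle = \sum_{i=1}^\kappa a_i^2 = \|a\|^2.
\]
Thus the sensitivity ratio at a fixed $x \in X$ becomes
\[
\frac{f(x)}{\bar f} = \frac{\bigl(\sum_{i=1}^\kappa a_i v^{(i)}(x)\bigr)^2}{\|a\|^2}.
\]

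Next, I would apply Cauchy--Schwarz to the numerator: $\bigl(\sum_i a_i v^{(i)}(x)\bigr)^2 \leq \|a\|^2 \cdot \sum_{i=1}^\kappa v^{(i)}(x)^2$, which immediately gives the upper bound $f(x)/\bar f \leq \sum_{i=1}^\kappa v^{(i)}(x)^2$. To see this bound is tight (so that the supremum defining $\sigma_{F,X,\mu}(x)$ is actually attained), I would exhibit the equality case by choosing $a_i = v^{(i)}(x)$, i.e.\ taking $v^\star(y) = \sum_i v^{(i)}(x) v^{(i)}(y)$ (the ``reproducing element'' at $x$). Then the ratio becomes $\bigl(\sum_i v^{(i)}(x)^2\bigr)^2 / \sum_i v^{(i)}(x)^2 = \sum_i v^{(i)}(x)^2$, matching the upper bound. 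Combining the two inequalities yields the claimed formula $\sigma_{F,X,\mu}(x) = \sum_{i=1}^\kappa v^{(i)}(x)^2$.

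Finally, for the total sensitivity I would integrate this pointwise identity against $\mu$ and swap the sum and integral (legitimate since the sum is finite):
\[
\SG(F) = \int_X \sigma_{F,X,\mu}(x)\,\dir\mu(x) = \sum_{i=1}^\kappa \int_X v^{(i)}(x)^2 \,\dir\mu(x) = \sum_{i=1}^\kappa \langle v^{(i)}, v^{(i)}\rangle = \kappa,
\]
where the last equality uses that each basis element has unit norm. There is no serious obstacle here; the only subtle point is checking that the degenerate case $\sum_i v^{(i)}(x)^2 = 0$ (if it occurs) is handled cleanly, which it is, since then every $f \in F$ satisfies $f(x)=0$ and the sensitivity is $0$ by convention, still matching the formula.
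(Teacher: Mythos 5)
Your proposal is correct: the paper itself does not prove this lemma but imports it verbatim as Theorem 2.2 of Langberg and Schulman~\cite{MLLJ2010}, and your argument---expand $v=\sum_i a_i v^{(i)}$, note $\bar f=\|a\|^2$ by orthonormality, bound the Rayleigh-type quotient via Cauchy--Schwarz, and attain equality with $a_i=v^{(i)}(x)$, then integrate to get $\SG(F)=\kappa$---is exactly the standard proof of that cited result. No gaps; the degenerate case $\sum_i v^{(i)}(x)^2=0$ is handled as you say.
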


We have already set $X=Q$ and $\mu = \frac{1}{n}$, and have defined $V$ and $F$.  To apply the above theorem need to define an orthonormal basis $\{v^{(1)}, v^{(2)}, \ldots, v^{(d+1)}\}$ for $V$.  A straightforward basis (although not necessarily an orthonormal one) exists as
$v^{(d+1)}(q)=v_{e^{(d+1)}}(q)=1$
and
$v^{(i)}(q)=v_{e^{(i)}}(q)=x_i$ for all $i\in[d]$ and $q=(x_1,\cdots,x_d)\in \R^d$, where $e^{(i)}=(0,\cdots,0,1,0,\cdots,0)$ is an indicator vector with all zeros except 1 in $i$th coordinate.
That is the $i$th basis element $v^{(i)}$ is simply the $i$th coordinate of the input.
Since $Q$ is full rank,  $\{v^{(1)},\cdots,v^{(d+1)}\}$ is a basis of $V$.

We are now ready to state our theorem on computing sensitivity scores on a general $(F,Q,\mu)$, where we typically set $\mu = \frac{1}{n}$.


\begin{theorem}\label{theorem sensitivity}
	Suppose $\mu$ is a probability measure on a metric space $Q=\{q_1,\cdots,q_n\}$ such that $\mu(q_i)=p_i>0$ for all $i\in [n]$,
	$V=\{v: Q\mapsto \R\}$ is a real vector space of dimension $\kappa$ with a basis  $\{v^{(1)},\cdots,v^{(\kappa)}\}$, and  $F=\{f: Q\mapsto [0,\infty)\ | \ \exists\ v\in V \text{ s.t. } f(q)=v(q)^2,\ \forall q\in Q \}$.
	If we introduce a $\kappa\times n$ matrix $A$ whose $i$th column $a_{i}$ is defined as:
	$a_{i}=(v^{(1)}(q_i)\sqrt{p_i},\cdots,v^{(\kappa)}(q_i)\sqrt{p_i})^T$,
	then we have
	\begin{equation}   \label{sensitivity and leverage score}
	\sigma_{F,Q,\mu}(q_i) \cdot p_i=a_{i}^T(AA^T)^{-1}a_{i},\ \ \ \forall\ q_i\in Q.
	\end{equation}
\end{theorem}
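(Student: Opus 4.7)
The plan is to reduce the claim to the Langberg--Shulman formula (Lemma \ref{lemma sensitivity}) by changing basis: given any basis $\{v^{(1)},\ldots,v^{(\kappa)}\}$, I will manufacture an orthonormal basis with respect to the $\mu$-weighted inner product, apply the formula for $\sigma$ in that basis, and then translate the resulting sum back into a quadratic form in the original basis, which will be exactly the leverage-score expression $a_i^T(AA^T)^{-1}a_i/p_i$.

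First I would observe the key identity $AA^T = G$, where $G$ is the Gram matrix of the given basis in the inner product $\langle u,v\rangle = \sum_{i=1}^n p_i u(q_i) v(q_i)$. Indeed, the $(j,k)$ entry of $AA^T$ is $\sum_{i=1}^n p_i v^{(j)}(q_i) v^{(k)}(q_i) = \langle v^{(j)}, v^{(k)}\rangle = G_{jk}$. Since $\{v^{(j)}\}$ is a basis of $V$ (so linearly independent as functions on $Q$) and all $p_i>0$, the rows of $A$ are linearly independent, so $G=AA^T$ is symmetric positive definite and in particular invertible.

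Next I would build an orthonormal basis via the linear change of coordinates $\tilde v^{(j)} = \sum_{k} C_{jk} v^{(k)}$ where $C$ is any matrix satisfying $C^T C = G^{-1}$ (for instance $C = G^{-1/2}$). A direct computation gives
\[
\langle \tilde v^{(j)}, \tilde v^{(l)}\rangle = (C G C^T)_{jl} = (G^{-1/2} G G^{-1/2})_{jl} = \delta_{jl},
\]
so $\{\tilde v^{(j)}\}$ is orthonormal. Setting $b_i := (v^{(1)}(q_i),\ldots,v^{(\kappa)}(q_i))^T$, Lemma \ref{lemma sensitivity} yields
\[
\sigma_{F,Q,\mu}(q_i) = \sum_{j=1}^\kappa \tilde v^{(j)}(q_i)^2 = \|C b_i\|^2 = b_i^T C^T C b_i = b_i^T G^{-1} b_i.
\]
Finally, since $a_i = \sqrt{p_i}\, b_i$ and $G=AA^T$, I multiply both sides by $p_i$ to obtain
\[
\sigma_{F,Q,\mu}(q_i)\cdot p_i = p_i\, b_i^T(AA^T)^{-1} b_i = a_i^T(AA^T)^{-1} a_i,
\]
which is the claim.

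The only subtle point, which I would flag explicitly, is that Lemma \ref{lemma sensitivity} is stated for an \emph{orthonormal} basis while the theorem allows an arbitrary one; the construction of $C$ above is what bridges this gap. Beyond that, everything is a routine identification of the Gram matrix with $AA^T$ and a one-line substitution, so I do not anticipate additional obstacles.
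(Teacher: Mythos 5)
Your proposal is correct and takes essentially the same route as the paper: both reduce the claim to Lemma \ref{lemma sensitivity} by constructing an orthonormal basis of $V$ under the $\mu$-weighted inner product and then rewriting $\sum_j \tilde v^{(j)}(q_i)^2$ as a quadratic form in the original basis. The only difference is in implementation---you orthonormalize via the Gram matrix identity $AA^T=G$ and a factor $C$ with $C^TC=G^{-1}$, obtaining $\sigma(q_i)=b_i^TG^{-1}b_i$ directly, whereas the paper uses a QR decomposition $A^T=\tilde Q\tilde R$ and reads the result off the diagonal of $A^T(AA^T)^{-1}A=\tilde Q\tilde Q^T$; both correctly justify invertibility of $AA^T$ from the linear independence of the basis and positivity of the $p_i$.
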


\begin{proof}
	Suppose the QR decomposition of $A^T$ is $A^T=\tilde{Q}\tilde{R}$, where $\tilde{Q}$
	is an $n\times \kappa$ orthogonal matrix ($\tilde{Q}^T\tilde{Q}=I$), and $\tilde{R}$ is an $n\times n$  upper triangular matrix.
	Since $\{v^{(1)},\cdots,v^{(\kappa)}\}$ is a basis of $V$, the columns of $A^T$ are linear independent, which implies the matrix
	$\tilde{R}$ is invertible. Using the fact that $\tilde{Q}^T\tilde{Q}$ is an identity matrix, we have
	\begin{equation} \label{the relationship between A and Q}
	\begin{split}
	A^T(AA^T)^{-1}A=&\tilde{Q}\tilde{R}(\tilde{R}^T\tilde{Q}^T\tilde{Q}\tilde{R})^{-1}\tilde{R}^T\tilde{Q}^T
	=\tilde{Q}\tilde{R}(\tilde{R}^T\tilde{R})^{-1}\tilde{R}^T\tilde{Q}^T\\
	=&\tilde{Q}\tilde{R}\tilde{R}^{-1}(\tilde{R}^T)^{-1}\tilde{R}^T\tilde{Q}^T
	=\tilde{Q}\tilde{Q}^T
	\end{split}
	\end{equation}
	From Lemma \ref{lemma sensitivity}, we have $\sigma_{F,Q,\mu}(q_i)=\sum_{j=1}^\kappa(\tilde{Q}_{i,j})^2$, which is the $i$-th
	entry on the diagonal of $\tilde{Q}\tilde{Q}^T$, so from \eqref{the relationship between A and Q}, we obtain \eqref{sensitivity and leverage score}.
\end{proof}


%
This theorem not only shows how to compute the sensitivity of a point, but also gives the relationship between sensitivity and the leverage score.

\paragraph*{Leverage score.}
Let $(\cdot)^+$ denotes the Moore-Penrose pseudoinverse of a matrix, so $(AA^T)^+=(AA^T)^{-1}$ when $AA^T$ is full rank.
The \textit{leverage score}~\cite{DMM08} of the $i$th column $a_i$ of matrix $A$ is defined as:
$\tau_i(A):= a_i^T(AA^T)^+a_i.$

This definition is more specific and linear-algebraic than sensitivity.  However, Theorem \ref{theorem sensitivity} shows that value $\sigma_{F,Q,\mu}(x_i) \cdot p_i$ is just the leverage score of the $i$th column of the matrix $A$.
Compared to sensitivity, leverage scores have received more attention for scalable algorithm development and approximation~\cite{DMM08,boutsidis2009improved,DMMW12,CMM17,MM17,MCJ2016}

\subsection{Estimate the Distance by Online Row Sampling}
\label{sec: online row sampling}

If the dimensionality is too high and the number of points is too large to be stored and processed in memory, we can apply online row sampling~\cite{MCJ2016} to estimate $\dQ$.
Note that as more rows are witnessed the leverage score of older rows change.  While other approaches (c.f. \cite{DMMW12,CMM17,MM17}) can obtain similar (and maybe slightly stronger) bounds, they rely on more complex procedures to manage these updating scores.
The following Algorithm \ref{Alg: online row sampling} by Cohen \etal~\cite{MCJ2016}, on the other hand, simply samples columns as they come proportional to their estimated ridge leverage score~\cite{CMM17}; thus it seems like the ``right'' approach.


\begin{algorithm}[H]
	\caption{\textsc{Online-Sample}$(A,\eps,\delta)$}  \label{Alg: online row sampling}
	\begin{algorithmic}
		\STATE Set $\lambda:=\frac{\delta}{\varepsilon}$, $c:=8 \log(\frac{d}{\varepsilon^2})$, and let $\widetilde{A}$ be empty (a $0\times d$ matrix).
		\FOR {rows $a_i \in A$}
		\STATE Let $p_i:=\min(c \cdot (1+\eps)a_i^T(\widetilde{A}^T\widetilde{A}+\lambda I)^{-1}a_i,1)$.
		\STATE With probability $p_i$, append row $a_i / \sqrt{p_i}$ to $\widetilde{A}$; otherwise do nothing.
		
		\ENDFOR
		\RETURN $\widetilde{A}$.
	\end{algorithmic}
\end{algorithm}

According to the Theorem 3 in \cite{MCJ2016}, Algorithm \ref{Alg: online row sampling} returns a matrix $\widetilde{A}$, with high probability, such that
$(1-\eps)A^TA-\delta I \preceq \widetilde{A}^T\widetilde{A} \preceq(1+\eps)A^TA+\delta I$, and the number of rows in $\widetilde{A}$ is $O(d\log(d)\log(\eps\|A\|_2^2/\delta)/\eps^2)$.
(Recall $A\preceq B$ means $x^TAx\leq x^TBx$ for every vector $x$.)

Given a set of points $Q=\{q_1,\cdots,q_n\}\subset \R^d$, where $q_i$ has the coordinates $(x_{i,1},\cdots,x_{i,d})$, we introduce an $n \times (d+1)$ matrix $A_Q$ whose
$i$th row $a_{i}$ is defined as:
\[
a_{i}=(x_{i,1},\cdots,x_{i,d},1),
\]
For any two hyperplanes $h_1,h_2$, they can be uniquely expressed by vectors $u^{(1)}, u^{(2)} \in \b{U}^{d+1}$, and define $u=u^{(1)}-u^{(2)}\in \R^{d+1}$, then we have $\dQ(h_1,h_2)=\frac{1}{\sqrt{n}}\|A_Qu\|$. So, if $n$ is very large we can apply
Algorithm \ref{Alg: online row sampling} to efficiently
sample rows from $A_Q$, and use $A_{\tilde{Q}}$ to estimate $\dQ(h_1,h_2)$.
From  Theorem 3 in  \cite{MCJ2016}, we have the following result.

\begin{theorem}\label{theorem estimate distance online}
	Suppose a set $Q$ and matrix $A_Q$ are defined as above.
	Let $A_{\tilde{Q}}=\text{Online-Sample}(A_Q,\varepsilon,\delta)$ be the matrix returned by Algorithm \ref{Alg: online row sampling}.   Then, with probability at least $1-\frac{1}{d+1}$, for any two hyperplanes $h_1,h_2$ expressed by $u^{(1)},u^{(2)} \in \b{U}^{d+1}$, suppose $u_{h_1,h_2}=u^{(1)}-u^{(2)}$, we have
	\[ 
	\frac{1}{1+\eps}\big(\frac{1}{n}\|A_{\tilde{Q}} u_{h_1,h_2}\|^2-\frac{1}{n}\delta\|u_{h_1,h_2}\|^2\big)^\frac{1}{2}
	\leq\mathtt{d}_Q(h_1,h_2)
	\leq \frac{1}{1-\eps}\big(\frac{1}{n}\|A_{\tilde{Q}} u_{h_1,h_2}\|^2+\frac{1}{n}\delta\|u_{h_1,h_2}\|^2\big)^\frac{1}{2},
	\] 
	where
	$\|\cdot\|$ is the Euclidean norm, and with probability at least $1-\frac{1}{d+1}-e^{-(d+1)}$
	the number of rows in $A_{\tilde{Q}}$ is $O(d\log(d)\log(\varepsilon\|A_Q\|_2^2/\delta)/\varepsilon^2)$.
\end{theorem}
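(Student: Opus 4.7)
}
The plan is to treat this essentially as a translation result: take the spectral inequality guaranteed by Cohen, Musco and Pachocki~\cite{MCJ2016}[Theorem 3] for \textsc{Online-Sample}, instantiate it at the specific vector $u_{h_1,h_2} = u^{(1)} - u^{(2)}$, and then rewrite the resulting norm bound in terms of the distance $\dQ(h_1,h_2)$ using the identity $\dQ(h_1,h_2) = \tfrac{1}{\sqrt{n}}\|A_Q u_{h_1,h_2}\|$ already observed in the paragraph preceding the theorem. So almost nothing new needs to be proved; the task is to package the existing online-sampling guarantee in the present notation and check the algebra.

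First, I would verify the identity $\dQ(h_1,h_2) = \tfrac{1}{\sqrt{n}}\|A_Q u_{h_1,h_2}\|$. For $q_i$ with coordinates $(x_{i,1},\ldots,x_{i,d})$ and hyperplane $h_j$ with unit-normal-plus-offset vector $u^{(j)} = (u^{(j)}_1,\ldots,u^{(j)}_{d+1})$, the signed \mindist value is $v_i(h_j) = \sum_{k=1}^d u^{(j)}_k x_{i,k} + u^{(j)}_{d+1} = \langle a_i, u^{(j)}\rangle$ where $a_i$ is the $i$th row of $A_Q$. Hence $v_i(h_1) - v_i(h_2) = \langle a_i, u_{h_1,h_2}\rangle$, so $\|v_Q(h_1) - v_Q(h_2)\|^2 = \|A_Q u_{h_1,h_2}\|^2$ and dividing by $n$ and taking square roots yields the claim.

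Next, invoke~\cite{MCJ2016}[Theorem 3] applied to the rows of $A_Q$ in the streaming fashion used by Algorithm \ref{Alg: online row sampling}. With probability at least $1-\tfrac{1}{d+1}$ the returned matrix $\widetilde{A} = A_{\tilde Q}$ satisfies the spectral sandwich
\[
(1-\eps) A_Q^T A_Q - \delta I \;\preceq\; A_{\tilde Q}^T A_{\tilde Q} \;\preceq\; (1+\eps) A_Q^T A_Q + \delta I,
\]
and, with probability at least $1-\tfrac{1}{d+1} - e^{-(d+1)}$, has $O(d \log(d)\log(\eps\|A_Q\|_2^2/\delta)/\eps^2)$ rows. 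Applying the sandwich at the vector $u = u_{h_1,h_2}$ gives
\[
(1-\eps)\|A_Q u\|^2 - \delta \|u\|^2 \;\leq\; \|A_{\tilde Q} u\|^2 \;\leq\; (1+\eps)\|A_Q u\|^2 + \delta\|u\|^2.
\]

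Finally, I would divide through by $n$ and solve for $\tfrac{1}{n}\|A_Q u\|^2 = \dQ(h_1,h_2)^2$. The right-hand inequality rearranges to $\dQ(h_1,h_2)^2 \geq \tfrac{1}{1+\eps}\big(\tfrac{1}{n}\|A_{\tilde Q} u\|^2 - \tfrac{\delta}{n}\|u\|^2\big)$ and the left-hand inequality to $\dQ(h_1,h_2)^2 \leq \tfrac{1}{1-\eps}\big(\tfrac{1}{n}\|A_{\tilde Q} u\|^2 + \tfrac{\delta}{n}\|u\|^2\big)$. Taking square roots (using that the advertised denominators $1\pm\eps$ are weaker than the tighter $\sqrt{1\pm\eps}$ one would obtain, so the stated inequalities still hold) gives exactly the two bounds in the statement. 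Honestly there is no ``hard part'' here — the only thing to be careful about is the direction of the rearrangement and the fact that the $(\delta/n)\|u\|^2$ term is an additive slack inherited from the ridge regularization in \textsc{Online-Sample}; the entire argument is a direct application of~\cite{MCJ2016}[Theorem 3] combined with the linear representation of $\dQ$ as a matrix-vector norm.
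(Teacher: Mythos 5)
Your proposal is correct and matches the paper's (implicit) argument exactly: the paper likewise derives the theorem directly from \cite{MCJ2016}[Theorem 3], using the identity $\dQ(h_1,h_2)=\frac{1}{\sqrt{n}}\|A_Q u_{h_1,h_2}\|$ and evaluating the spectral sandwich at $u_{h_1,h_2}$, with the stated $\frac{1}{1\pm\eps}$ factors being a (valid) relaxation of the tighter $\frac{1}{\sqrt{1\pm\eps}}$. Nothing is missing.
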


To make the above bound 
hold with arbitrarily high probability, we can use the standard median trick: run Algorithm \ref{Alg: online row sampling} $k$ times in parallel to obtain $A_{\tilde{Q}_1},\cdots,A_{\tilde{Q}_k}$, then for any two hyperplanes  $h_1,h_2$, we take the median of $\|A_{\tilde{Q}_1} u_{h_1,h_2}\|^2,\cdots,\|A_{\tilde{Q}_k} u_{h_1,h_2}\|^2$.

\paragraph*{Remark.}
Since $u_{h_1,h_2}=u^{(1)}-u^{(2)}$, we have
\begin{equation*}
\begin{split}
\|u_{h_1,h_2}\|^2=&(\|u^{(1)}-u^{(2)}\|)^2\leq(\|u^{(1)}\|+\|u^{(2)}\|)^2\leq2(\|u^{(1)}\|^2+\|u^{(2)}\|^2)\\
=&2\big(2+(u^{(1)}_{d+1})^2+(u^{(2)}_{d+1})^2\big)=4+2\mathtt{d}^2(\mathbf{0},h_1)+2\mathtt{d}^2(\mathbf{0},h_2),
\end{split}
\end{equation*}
where $\mathtt{d}(\mathbf{0},h)$ is the distance from a choice of origin $\mathbf{0}$ to $h$.
If we assume that any hyperplanes we consider must pass within a distance $\Delta$ to the choice of origin, then let $\Delta' = 4(1+\Delta^2)$ and $\|u_{h_1,h_2}\|^2 \leq \Delta'$.
Now $\mathtt{d}_{\tilde{Q}, W}(h_1,h_2))^2=\frac{1}{n}\|A_{\tilde{Q}}u_{h_1,h_2}\|^2$ where $\tilde{Q}$ is the  set of points corresponding to rows in $A_{\tilde{Q}}$, and the weighting $W$ is defined so $w_i = |\tilde{Q}| / n$.
Then the conclusion of Theorem~\ref{theorem estimate distance online} can be
rewritten as
\[
\frac{1}{1+\eps}\big(\mathtt{d}_{\tilde{Q},W}(h_1,h_2)^2 - \frac{\Delta' \delta}{n}  \big)^\frac{1}{2}
\leq\mathtt{d}_Q(h_1,h_2)
\leq \frac{1}{1-\eps}\big(\mathtt{d}_{\tilde{Q},W}(h_1,h_2)^2 + \frac{\Delta' \delta}{n} \big)^\frac{1}{2},
\]
which means $\mathtt{d}_Q(h_1,h_2)$ can be estimated by $\mathtt{d}_{\tilde{Q},W}(h_1,h_2)$ and the bound $\Delta$ on the distance to the origin.  Recall the distance and the bound in Theorem \ref{theorem estimate distance online} is invariant to the choice of $\mathbf{0}$, so for this interpretation it can always be considered so $\Delta$ is small.

\end{document}